\def\a{\alpha}
\def\b{\beta}
\def\ba{\begin{array}}
\def\ea{\end{array}}
\def\ban{\begin{eqnarray*}}
\def\ean{\end{eqnarray*}}
\def\bd{\begin{description}}
\def\ed{\end{description}}
\def\be{\begin{equation}}
\def\ee{\end{equation}}
\def\bna{\begin{eqnarray}}
\def\ena{\end{eqnarray}}
\def\d{\delta}
\def\bnaa{\begin{eqnarray}}
\def\enaa{\end{eqnarray}}
\def\bann{\begin{eqnarray*}}
\def\eann{\end{eqnarray*}}
\newtheorem{definition}{Definition}[section]
\newtheorem{remark}{Remark}
\newtheorem{theorem}{Theorem}[section]
\newtheorem{lemma}{Lemma}[section]
\newtheorem{corollary}{Corollary}[section]
\begin{document}
\title{Decentralized Cooperative Online Estimation With Random Observation Matrices, Communication Graphs and Time Delays}
\author{Jiexiang Wang,
        Tao Li,~\IEEEmembership{Senior Member,~IEEE},
        Xiwei Zhang
        \thanks{*Corresponding Author: Tao Li. This work is supported by the National Natural Science Foundation of China under Grant 61977024. Please address all the correspondences to Tao Li: Phone: +86-21-54342646-318, Fax: +86-21-54342609, Email: tli@math.ecnu.edu.cn.}
\thanks{ Jiexiang Wang is with School of Mechatronic Engineering and Automation, Shanghai University, Shanghai 200072, China.}
\thanks{ Tao Li and Xiwei Zhang are with the Key Laboratory of Pure Mathematics and Mathematical Practice, School of Mathematical Sciences, East China Normal University, Shanghai 200241, China.}
}

\maketitle

\begin{abstract}
We analyze convergence of  decentralized cooperative online estimation algorithms by a network of multiple nodes via information exchanging in an uncertain environment.
Each node has a linear observation of an unknown parameter with randomly time-varying observation matrices.
The underlying communication network is modeled by a sequence of random digraphs and is subjected to nonuniform random time-varying delays in channels.
Each node runs an online estimation algorithm consisting of a consensus term taking a weighted sum of its own estimate and neighbours' delayed estimates, and an innovation term processing its own new measurement at each time step.  By stochastic time-varying system, martingale convergence theories  and the  binomial expansion of random matrix products,  we transform  the convergence analysis of the algorithm into that of the mathematical expectation of random matrix products. Firstly, for the delay-free case, we show that the algorithm gains can be designed properly such that all nodes' estimates  converge to the true parameter in mean square and almost surely if the observation matrices and communication graphs satisfy the \emph{stochastic spatio-temporal persistence of excitation} condition.
Secondly, for the case with time delays, we introduce \emph{delay matrices} to model the random time-varying communication delays between nodes. It is shown that under the \emph{stochastic spatio-temporal persistence of excitation} condition, for any given bounded delays, proper algorithm gains can be designed to guarantee mean square convergence for the case with conditionally balanced digraphs.
\end{abstract}

\newpage

\begin{IEEEkeywords}
Decentralized online   estimation, cooperative estimation, random graph, random time delay,  persistence of excitation.
\end{IEEEkeywords}

\section{introduction}
Estimation algorithms have important applications in many fields, e.g. navigation systems, space exploration, machine learning and power systems (\cite{Power system state estimation}-\cite{LO2015SICON}), etc. In a power system, measurement devices such as remote terminal units and phasor measurement units, send the measured active and reactive power flows, bus injection powers and voltage amplitudes to the Supervisory Control and Data Acquisition (SCDA) system, then the voltage amplitudes and phase angles at all buses  are estimated for secure and stable operation of the system (\cite{A distributed state estimation}-\cite{Parallel and distributed}). Generally speaking, there are mainly two categories of estimation algorithms in term of information structure, i.e. centralized and decentralized algorithms. In a centralized algorithm, a fusion center is used to collect all nodes's measurements
and gives the global estimate. This structure heavily relies on the fusion center and lacks robustness and security. In a decentralized algorithm, a network of multiple nodes is employed to cooperatively estimate the unknown parameter via information exchanging, where each node is an entity with integrated capacity of sensing, computing and communication, and occasional node/link failures may not destroy the entire estimation task. Hence, decentralized cooperative estimation algorithms are more robust than centralized ones  (\cite{Schizas1}-\cite{Das}).

There exist various kinds of uncertainties in real networks. For example, sensors are usually powered by chemical or solar cells, and the unpredictability of cell power leads to random node/link failures, which can be modeled by a sequence of random communication graphs. Besides, node sensing failures or measurement losses (\cite{NASSERE}) can be modeled by a sequence of random observation matrices.
There are lots of literature on decentralized online estimation problems with random graphs. Ugrinovskii \cite{Ugrinovskii}  studied  decentralized estimation  with Markovian switching graphs.  Kar \& Moura  \cite{GossipKAR} and  Sahu $et~al$  \cite{CIRFE} considered   decentralized estimation  with i.i.d. graph sequences, where  Kar \& Moura  \cite{GossipKAR}  showed that the algorithm achieves weak consensus under a weak distributed detectability condition and  Sahu $et~al$  \cite{CIRFE} proved that the algorithm  converges almost surely if the mean graph is balanced and strongly connected. Sim\~{o}es \& Xavier \cite{zhongshi} proposed a  decentralized estimation algorithm with i.i.d. undirected graphs and proved that the convergence rate of mean square estimation error is asymptotically equal to that of the centralized algorithm. Decentralized cooperative online estimation based on  diffusion strategies  was addressed in \cite{Lopes}-\cite{Abdolee} with
spatio-temporally independent observation matrices, i.e.  the sequence of observation matrices of each node is an independent random process and  those of different nodes are mutually independent.
Piggott \& Solo \cite{Piggott}-\cite{Piggott1}  studied  decentralized estimation with   temporally correlated observation matrices and a fixed communication graph. Ishihara \& Alghunaim \cite{Ishihara} studied  decentralized estimation with spatially independent observation matrices.
Kar $et~al$~\cite{Soummya1} and Kar \& Moura \cite{SKar} proposed consensus+innovations  decentralized estimation algorithms with random  graphs and observation matrices, where the sequences of communication graphs and observation matrices are both i.i.d. They proved that the algorithm   converges almost surely if the mean graph is balanced and strongly connected.
Zhang \& Zhang \cite{Qiang} considered  decentralized estimation with finite Markovian switching graphs and i.i.d. observation matrices, and proved that the algorithm converges in mean square and almost surely if all graphs are balanced and jointly contain a spanning tree. Zhang $et~al$ \cite{JunfengZhang} proposed a robust  decentralized estimation algorithm with the communication graphs and observation matrices being  mutually independent with each other and both uncorrelated sequences.
In summary, most existing literature on  decentralized cooperative estimation algorithms required balanced mean graphs and special statistical properties of communication graphs and observation matrices, such as i.i.d. or Markovian switching graph sequences,  spatially or temporally independent observation matrices with the fixed mathematical expectation, which are also independent of communication graphs.

Besides random communication graphs and observation matrices, random communication delays are also common in real systems (\cite{Robust Control and Filtering}-\cite{Tian2017TAC}). Due to congestions of communication links and external interferences, time delays are usually random and time-varying, whose  probability distribution can be approximately estimated by statistical methods. However, to our best knowledge, there has been no literature on  decentralized online estimation  with general random time-varying communication delays. Zhang~$et~al$
\cite{zhangya} and Mill\'{a}n~$et~al$ \cite{Mill} considered  decentralized estimation with uniform deterministic time-invariant and time-varying communication delays, respectively, where Mill\'{a}n~$et~al$ \cite{Mill} established  a LMI type convergence condition  by the Lyapunov-Krasovskii functional method.

In this paper, we analyze convergence of  decentralized cooperative online parameter estimation algorithms with random observation matrices, communication graphs and time delays.
Each node's algorithm consists of a consensus term taking a weighted sum of its own estimate and delayed estimates of its neighbouring nodes, and an innovation term processing its own new measurement at each time step.
 The sequences of observation matrices, communication graphs and time delays are not required to satisfy special statistical properties, such as mutual independence and spatio-temporal independence. Furthermore, neither the sample paths of the random graphs nor the mean graphs are necessarily balanced and connected at each time step. These relaxations together with the existence of random time-varying time delays bring essential difficulties to the convergence analysis, and most existing methods are not applicable. For example, the frequency domain approach (\cite{zhangya},\cite{o11po}) is only suitable for deterministic uniform time-invariant time delays, and the Lyapunov-Krasovskii functional method  leads to a non-explicit LMI type convergence condition (\cite{Mill}).   Liu $et~al$ \cite{ouis} and Liu $et~al$  \cite{ouiss}   addressed distributed consensus  with deterministic time-varying communication delays and i.i.d. communication  graphs. The analysis method therein required the  mean graph to be time-invariant and connected at each time step, and is not applicable to time-varying mean graphs.

We introduce \emph{delay matrices} to model the random time-varying communication delays between each pair of nodes. By stochastic time-varying system, martingale convergence theories  and the  binomial expansion of random matrix products,  we transform  the convergence analysis of the algorithm into that of the mathematical expectation of random matrix products. Firstly, for the delay-free case,
we  show that the algorithm gains can be designed properly such that all nodes' estimates  converge to the true parameter in mean square and almost surely if  the observation matrices and communication graphs satisfy the \emph{stochastic spatio-temporal persistence of excitation} conditions.   Especially,  it is shown that for Markovian switching communication graphs and observation matrices,  this condition holds if the stationary graph is balanced with a spanning tree and the measurement model is \emph{spatio-temporally jointly observable}.
Secondly, for the case with time delays, we propose several conditions for  mean square convergence, which explicitly relies on the conditional expectations of delay matrices, observation matrices and weighted adjacency matrices of communication graphs over a sequence of fixed-length time intervals. Furthermore, we show that if the communication graphs are conditionally balanced, then under the \emph{stochastic spatio-temporal persistence of excitation} condition,   for any given bounded delays, proper algorithm gains can be designed to guarantee mean square convergence of the algorithm.
Compared with the existing literature, our contributions are summarized as below.

\begin{itemize}
 \item The delay-free case
      \begin{itemize}
        \item  We show that it is not necessary that the sequences of observation matrices and communication graphs be mutually independent or spatio-temporally independent. Also, the mean graphs are  not necessarily time-invariant and balanced. We establish the \emph{stochastic spatio-temporal persistence of excitation} condition under which the  algorithm with random  graphs  and observation matrices  converges in mean square and almost surely. For a network consisting of completely isolated nodes, the \emph{stochastic spatio-temporal persistence of excitation} condition degenerates to a set of independent \emph{stochastic persistence of excitation} conditions for centralized algorithms (\cite{guolaoshi}).

       \item
        Especially,  for the case with Markovian switching  communication graphs and observation matrices, we prove that the \emph{stochastic spatio-temporal persistence of excitation} condition holds if the stationary graph  is balanced with a spanning tree and the measurement model is spatio-temporally jointly observable, implying that \emph{neither local observability of each node nor instantaneous global observability of the entire measurement model is necessary}.

      \end{itemize}

          \item The case with time delays
          \begin{itemize}
          \item

          We introduce   \emph{delay matrices} to model the random time-varying time delays between each pair of nodes.
          By the method of binomial expansion of random matrix products, we obtain several conditions for mean square convergence,
          which explicitly relies on the conditional expectations of the delay matrices, observation matrices and weighted adjacency matrices of communication graphs   over a sequence of fixed-length time intervals. These conditions show that for given algorithm gains, the communication graphs and observation matrices need to be persistently  excited  with enough intensity  to mitigate the random time delays.
           We further show that if the \emph{stochastic spatio-temporal persistence of excitation} condition holds, then for any given bounded delays, proper algorithm gains can be designed to guarantee mean square convergence of the algorithm for the case with conditionally balanced digraphs.
              \item
                  The nonuniform random time-varying communication delays considered in this paper are more general, and we allow correlated communication delays, graphs and observation matrices.
              \end{itemize}

        \end{itemize}

The rest of the paper is arranged as follows. In Section II, we formulate the problem. In Section III, we describe the decentralized cooperative online  parameter estimation algorithm  with random observation matrices, communication graphs and time delays.   We make the convergence analysis for the delay-free case and the case with time delays in Sections IV and V, respectively. In Section \ref{zhen}, we give a numerical example to demonstrate the theoretical results.
Finally, we conclude the paper and give some future topics in Section VII.

Notation and symbols:
$\circ$: the Hadamard product;
$\otimes$: the Kronecker product;
$\mathrm{Tr}(A)$: the trace of  matrix $A$;
$\|A\|$: the 2-norm of matrix $A$;
$A^T$: the transpose of matrix~$A$;
$\mathbb P\{A\}$: the  probability of  event~$A$;
$I_n$: the $n$ dimensional identity matrix;
$\rho(A)$: the spectral radius of matrix $A$;
$|a|:$ the absolute value of real number~$a$;
$\mathbb R^n$:  the $n$ dimensional real vector space;
$A\ge B$: the matrix $A-B$ is positive semidefinite;
$\lfloor x \rfloor$: the largest integer less than or equal to~$x$;
$\lceil x\rceil$: the smallest integer greater than or equal to~$x$;
$\mathbb E[\xi]$: the mathematical expectation of random variable $\xi$;
$\lambda_{\min}(A)$: the minimum eigenvalue of real symmetric matrix~$A$;
$\textbf{1}_n$:  the $n$ dimensional column vector with all entries being one;
$\textbf{0}_{n\times m}$: the $n\times m$~dimensional matrix with all entries being zero;
$b_n=O(r_n)$: $\lim\sup_{n\to\infty}\frac{|b_n|}{r_n}<\infty$, where~$\{b_n,n\ge0\}$ is a sequence of real numbers,
$\{r_n,n\ge0\}$ is a sequence of real positive numbers;
~$b_n=o(r_n)$: $\lim_{n\to\infty}\frac{b_n}{r_n}=0$;
For a sequence of~$n\times n$ dimensional matrices~$\{Z(k),k\ge0\}$ and a sequence of scalars
~$\{c(k),k\ge0\}$, denote
\bann
\Phi_Z(j,i)= \left\{\begin{array}{cc}
                                Z(j)\cdots Z(i),&~j\ge i \\
                               I_n, &~j< i.
                             \end{array}\right.
\mathrm{and}~~\prod_{k=i}^jc(k)= \left\{\begin{array}{cc}
                                c(j)\cdots c(i),&~j\ge i \\
                               1, &~j< i.
                             \end{array}\right.
\eann
For any nonnegative integers $i$ and $j$, denote the Kronecker function by
$\mathcal I_{i,j}$, satisfying  $\mathcal I_{i,j}=1$
if $i=j$ and $\mathcal I_{i,j}=0$ otherwise.

\section{ problem formulation }\label{tichu}

\subsection{Measurement model}
Consider a network  of $N$ nodes. Each node is an estimator with integrated capacity of sensing, computing, storage and communication. The estimators/nodes cooperatively estimate an unknown parameter vector $x_0\in\mathbb R^n$ via information exchanging. The relation between the measurement vector $z_i(k)\in\mathbb R^{n_i}$  of estimator $i$ and the unknown parameter $x_0$ is represented by
\bna\label{op09}
z_i(k)=H_i(k)x_0+v_i(k),\ i=1,\cdots,N,\ k\ge0.
\ena
Here, $H_i(k)\in\mathbb R^{n_i\times n}$ is the random observation (regression) matrix  at time instant $k$ with $n_i\le n$, and $v_i(k)\in\mathbb R^{n_i}$ is the additive measurement noise.
Denote $z(k)=[z_1^T(k),\cdots,z_N^T(k)]^T$, $H(k)=[H_1^T(k),\cdots,H_N^T(k)]^T$ and $v(k)=[v_1^T(k),\cdots,v_N^T(k)]^T$.
Rewrite (\ref{op09}) by  the compact form
\bna\label{98isasao}
z(k)=H(k)x_0+v(k),\ k\ge0.
\ena

\vskip 0.2cm

\begin{remark}
\rm{In many real applicaitons, the relations between the unknown parameter and the measurements can be represented by (\ref{op09}). For example,  in the  decentralized multi-area state estimation in power systems, the grid is partitioned into multiple geographically non-overlapping areas, and each area is regarded as a node. The  grid state $x_0$  to be estimated  consists of voltage amplitudes and phase angles at all buses. The measurement $z_i(k)$ of each area/node consists of the active and reactive power flow, bus injection powers and voltage amplitude information measured by remote terminal units and phasor measurement units in the $i$-th area. By the DC power
flow approximation (\cite{lll}), the grid state degenerates to the voltage phase angles at all buses and the relation between the measurement of each area and the grid state can be represented by (\ref{op09}).
In  decentralized parameter identification, each node's measurement equation is given by
\bann
z_i(k)=\sum_{j=1}^nc_jz_i(k-j)+v_i(k)=[z_i(k-1),\cdots, z_i(k-n)][c_1,\cdots,c_n]^T+v_i(k).
\eann
For this case, the  unknown parameter $x_0=[c_1,\cdots,c_n]^T$ and the observation matrix (generally called   regressor) $H_i(k)=[z_i(k-1),\cdots, z_i(k-n)]$ is an $n$ dimensional row vector. In addition, sensing failures in real networks  can be modeled by a Markov chain or an i.i.d. sequence of Bernoulli variables $\{\delta_i(k),k\ge0\}$. Then $H_i(k)=\delta_i(k)H_i'(k)$, where $\{H_i'(k),k\ge0\}$ is the sequence of observation matrices without sensing failures.}
\end{remark}

\subsection{Communication models}\label{delaymodel}
 Assume that there exist nonuniform random time-varying communication delays for the communication links between each pair of nodes. We use a sequence of random variables $\{\lambda_{ji}(k)\in\{0,\cdots,d\}$, $k\ge0\}$ to represent the time delays associated with the link from node $j$ to node~$i$, where the positive integer~$d$ represents the maximum time delay. This sequence is subjected to the  discrete probability distribution
\bna\label{098}
\mathbb P\{\lambda_{ji}(k)=q\}=p_{ji,q}(k)~\mathrm{with}~\sum_{q=0}^dp_{ji,q}(k)=1.
\ena
We stipulate that $\mathbb P\{\lambda_{ii}(k)=0\}=1$, $i=1,\cdots, N$, $k\ge0$. Denote the $N$ dimensional matrices $\mathcal I(k,q)=[\mathcal I_{\lambda_{ji}(k),q}]_{1\le j,i\le N}$, $0\le q\le d$, $k\ge0,$ called   \emph{delay matrices}. By the definition of Kronecker function, we know that for each~$q=0,1,...,d$, $\{\mathcal I(k,q),k\ge0\}$ is a sequence of random matrices and its sample paths are sequences of~$0-1$ matrices. By (\ref{098}), we know that~$\mathbb E[\mathcal I_{\lambda_{ji}(k),q}]=p_{ji,q}(k)$ and
\bna\label{554rf9s}
\sum_{q=0}^d\mathcal I(k,q)=\textbf{1}_N\textbf{1}_N^T~\rm{\rm{a.s.}}
\ena

We use a sequence of random communication graphs $\{\mathcal G(k)=\langle\mathcal V$, $\mathcal A_{\mathcal G(k)}\rangle$, $k\ge0\}$ to describe the possible link failures among nodes, where $\mathcal V=\{1,\cdots,N\}$ is the node set and $\mathcal A_{\mathcal G(k)}=[a_{ij}(k)]_{1\le i,j\le N}$ is the  weighted adjacency matrix of the communication graph, in which $a_{ii}(k)=0$ a.s. for all $i\in\mathcal V$ and $k\ge0$ and $a_{ij}(k)\not=0$ if and only if the link from node $j$ to node $i$ exists at time instant $k$ for all $i\not=j$.
The neighborhood  of node $i$ is~$\mathcal N_i(k)=\{j|a_{ij}(k)\not=0\}$. The  degree matrix of the graph is~$\mathcal D_{\mathcal G(k)}=diag(\sum_{j=1}^Na_{1j}(k),\cdots,\sum_{j=1}^Na_{Nj}(k))$ and the Laplacian matrix of the graph is~$\mathcal L_{\mathcal G(k)}=\mathcal D_{\mathcal G(k)}-\mathcal A_{\mathcal G(k)}$ (\cite{iwerw}\cite{tau}). Denote $\widehat {\mathcal L}_{\mathcal G(k)}=\frac{\mathcal L_{\mathcal G(k)}+\mathcal L^T_{\mathcal G(k)}}{2}$.
Specifically, if~${\mathcal G(k)}$ is balanced, then $\widehat {\mathcal L}_{\mathcal G(k)}$ is the Laplacian matrix of the symmetrized graph of~${\mathcal G(k)}$, $k\ge0$ (\cite{tau}).
Let
\bnaa\label{akdef}
\overline A(k,q)=(\mathcal A_{\mathcal G(k)}\circ \mathcal I(k,q))\otimes I_n.
\enaa
Then, by~(\ref{554rf9s}) and the above, we have
\bnaa\label{0fkwjeewa}
\sum_{q=0}^d\overline A(k,q)=\mathcal A_{\mathcal G(k)}\otimes I_n.
\enaa

\section{ decentralized cooperative online estimation algorithm}\label{algorithm}

{Let~$ x_i(k)\in\mathbb R^n$~be the estimate by node $i$ for the unknown parameter $x_0$ at time  instant $k,k\ge-d$ with the initial estimates $x_i(k),-d\le k\le 0$ being any given real vectors.} Starting at the initial estimate, at any time instant~$k\ge0$, node~$i$ takes a weighted sum of its own estimate and delayed estimates received from its neighbours, and then adds a correction term based on the local measurement information (innovation) to update  the estimate~$x_i(k+1)$. Specifically, the  decentralized cooperative online parameter estimation algorithm with random observation matrices, communication graphs and time delays, motivated by a baseline version without time delays in \cite{SKar},  is given by
\bna\label{asapp}
x_i(k+1)&=&x_i(k)+a(k)H_i^T(k)(z_i(k)-H_i(k) x_i(k))\cr
&&+b(k)\sum_{j\in\mathcal  N_i(k)}a_{ij}(k)( x_j(k-\lambda_{ji}(k))- x_i(k)),\
  i\in\mathcal V,\ k\ge0,
\ena
where $a(k)$ and $b(k)$ are the  innovation  and  consensus algorithm gains, respectively.

\rm{Denote the $\sigma-$fileds $\mathcal F(k)=\sigma( \mathcal A_{\mathcal G(s)},v(s),H_i(s),\lambda_{ji}(s),~j,i\in\mathcal{V},~0\le s\le k)$, $k\ge0$, with $\mathcal F(-1)=\{\Omega,\emptyset\}$.
 For the algorithm (\ref{asapp}), we have the following assumptions.

\textbf{A1.a} The sequence $\{v(k),k\ge0\}$ is independent of $\{H(k)$, $k\ge0\}$, $\{\mathcal A_{\mathcal G(k)}$, $k\ge0\}$ and $\{\lambda_{ji}(k)$, $j, i\in\mathcal{V}$, $k\ge0\}$.

\textbf{A1.b} The  sequence $\{v(k),\mathcal F(k),k\ge0\}$ is a martingale difference sequence and there exists a constant~$\beta_v>0$ such that~$\sup_{k\ge0}\mathbb E[\|v(k)\|^2|\mathcal F(k-1)]\le\beta_v~\rm{a.s.}$

\textbf{A2.a} $\sup_{k\ge0}\|H(k)\|<\infty~\mathrm{a.s.}\ \mbox{and}\ \sup_{k\ge0}\|\mathcal A_{\mathcal G(k)}\|<\infty~\rm{a.s.}$

\textbf{A2.b}  \rm{There exist positive constants~$\beta_a$ and $\beta_H$ such that~$\max_{i,j\in\mathcal V}\sup_{k\ge0}| a_{ij}(k)|\le\beta_a~\rm{a.s.}$ and $\max_{i\in\mathcal V}\sup_{k\ge0}\| H_i(k)\|\le\beta_H~\mathrm{a.s.}$}

For the algorithm gains, we make the following conditions.

\textbf{C1.a} The sequences \rm{$\{a(k),k\ge0\}$ and $\{b(k),k\ge0\}$ are positive real sequences  monotonically decreasing to zero,
satisfying  $a(k)=O(b(k))$. }

\textbf{C1.b}  $b^2(k)=o(a(k)),a(k)=O(a(k+1))$ and $\sum_{k=0}^\infty a(k)=\infty.$

\textbf{C1.c} \rm{$\sum_{k=0}^\infty b^2(k)<\infty$.}

\vskip 0.2cm

\begin{remark}
\rm{Note that, in Assumption \textbf{A1.a}, neither mutual independence nor spatio-temporal  independence is assumed on the observation matrices, communication graphs and time delays. }
\end{remark}

\vskip 0.2cm
\begin{remark}
\rm{ It is easy to find $\{a(k)$, $k\ge0\}$ and $\{b(k)$, $k\ge0\}$  satisfying Conditions \textbf{C1.a}--\textbf{C1.c}. If  $a(k)=\frac{1}{(k+1)^{\tau_1}}$, $b(k)=\frac{1}{(k+1)^{\tau_2}}$, $k\geq0$, $0.5<\tau_2\leq\tau_1\leq1$, then these conditions hold.  }
\end{remark}

By the definition of $\mathcal I_{\lambda_{ji}(k),q}$, we know that $x_j(k-\lambda_{ji}(k))=\sum_{q=0}^d x_j(k-q)\mathcal I_{\lambda_{ji}(k),q}$. Then by~(\ref{asapp}), we have
\bna\label{asapp1}
x_i(k+1)&=&x_i(k)+a(k)H_i^T(k)[z_i(k)-H_i(k) x_i(k)]\cr
&&+b(k)\sum_{j\in\mathcal  N_i(k)}a_{ij}(k)\Bigg[\sum_{q=0}^d x_j(k-q)\mathcal I_{\lambda_{ji}(k),q}- x_i(k)\Bigg],\ i\in\mathcal V.
\ena
Denote ${{\mathcal H}(k)}=diag\{H_1(k),\cdots,H_N(k)\}$ and $x(k)=[x_1^T(k),\cdots,x_N^T(k)]^T$. By (\ref{akdef}), rewrite (\ref{asapp1}) as
\bna\label{s9s}
x(k+1)&=&[I_{Nn}-b(k)\mathcal D_{\mathcal G(k)}\otimes I_n-a(k){\mathcal H}^T(k){\mathcal H}(k)]x(k)\cr
&&+b(k)\sum_{q=0}^d\overline A(k,q)x(k-q)+a(k){\mathcal H}^T(k)z(k).
\ena
Denote the overall estimation error vector~$e(k)=x(k)-\textbf{1}_N\otimes x_0$. Note that~$(\mathcal L_{\mathcal G(k)}\otimes I_n)(\textbf{1}_N\otimes x_0)=0$. By (\ref{98isasao}) and (\ref{0fkwjeewa}), subtracting $\textbf{1}_N\otimes x_0$ on both sides of~(\ref{s9s}) leads to
\ban\label{asd2dsa}
&&e(k+1)\cr
&=&[I_{Nn}-b(k)\mathcal D_{\mathcal G(k)}\otimes I_n-a(k){\mathcal H}^T(k){\mathcal H}(k)]x(k)+b(k)\sum_{q=0}^d\overline A(k,q)x(k-q)\cr
&&+a(k){\mathcal H}^T(k)z(k)-\textbf{1}_N\otimes x_0\cr
&=&[I_{Nn}-b(k)\mathcal D_{\mathcal G(k)}\otimes I_n-a(k){\mathcal H}^T(k){\mathcal H}(k)](x(k)-\textbf{1}_N\otimes x_0+\textbf{1}_N\otimes x_0)\cr
&&+b(k)\sum_{q=0}^d\overline A(k,q)(x(k-q)-\textbf{1}_N\otimes x_0+\textbf{1}_N\otimes x_0)+a(k){\mathcal H}^T(k)z(k)-\textbf{1}_N\otimes x_0\cr
&=&[I_{Nn}-b(k)\mathcal D_{\mathcal G(k)}\otimes I_n-a(k){\mathcal H}^T(k){\mathcal H}(k)](e(k)+\textbf{1}_N\otimes x_0)\cr
&&+b(k)\sum_{q=0}^d\overline A(k,q)(e(k-q)+\textbf{1}_N\otimes x_0)+a(k){\mathcal H}^T(k)z(k)-\textbf{1}_N\otimes x_0\cr
&=&[I_{Nn}-b(k)\mathcal D_{\mathcal G(k)}\otimes I_n-a(k){\mathcal H}^T(k){\mathcal H}(k)]e(k)-a(k){\mathcal H}^T(k){\mathcal H}(k)(\textbf{1}_N\otimes x_0)\cr
&&+b(k)\sum_{q=0}^d\overline A(k,q)e(k-q)+a(k){\mathcal H}^T(k)H(k)x_0+a(k){\mathcal H}^T(k)v(k),
\ean
which together with ${\mathcal H}(k)(\textbf{1}_N\otimes x_0)= H(k)x_0$ gives the overall estimation error equation
\bnaa\label{s1119s}
e(k+1)&=&[I_{Nn}-b(k)\mathcal D_{\mathcal G(k)}\otimes I_n-a(k){\mathcal H}^T(k){\mathcal H}(k)]e(k)\cr
&& +b(k)\sum_{q=0}^d\overline A(k,q)e(k-q)+a(k){\mathcal H}^T(k)v(k),~k\ge0.
\enaa
For the delay-free case,  $d=0$. Then the algorithm (\ref{s9s}) becomes
\bna\label{s9sdelayfree}
x(k+1)=[I_{Nn}-b(k)\mathcal L_{\mathcal G(k)}\otimes I_n-a(k){\mathcal H}^T(k){\mathcal H}(k)]x(k)+a(k){\mathcal H}^T(k)z(k),
\ena
and  the estimation error equation (\ref{s1119s}) becomes
\bnaa
\label{zz9qjasoda}
e(k+1)=[I_{Nn}-b(k)\mathcal L_{\mathcal G(k)}\otimes I_n-a(k){\mathcal H}^T(k){\mathcal H}(k)]e(k)+a(k){\mathcal H}^T(k)v(k).
\enaa

\vskip 0.2cm
\begin{remark}
\rm{In this paper, we use the concept of Laplacians  of digraphs defined in \cite{tau}, which is widely used in the literature on decentralized estimation (\cite{GossipKAR}--\cite{Qiang}). For the delay-free case, the consensus term $[\mathcal L_{\mathcal G(k)}\otimes I_n]x(k)$ naturally appears in the algorithm (\ref{s9sdelayfree}). Note that another concept of symmetric Laplacians of digraphs is proposed in \cite{Chung}. This symmetric Laplacian involves the Perron vector of the weighted adjacency matrix. It has been pointed out  in \cite{Chung} that for a general digraph, there is no closed
form solution for the Perron vector. Generally, the $i$th element of the Perron vector,  which is not local information of the $i$th node, depends on the weights associated to all nodes. Therefore, though the Laplacian
proposed in \cite{Chung} is symmetric, it is generally incompatible with the decentralized nature of the estimation algorithm.}
\end{remark}

\section{ the delay-free case}\label{jieguo}
In this section, we give the convergence conditions of the  algorithm (\ref{asapp}) for the delay-free  case, i.e. $\lambda_{ji}(k)=0$, a.s. $\forall\ j, i\in\mathcal V$, $\forall\ k\ge0$. All proofs of  results are put in Appendix B.

For any given positive integers $h$ and $m$, denote
\bann
{\Lambda_m^h}&=&\lambda_{\min}\Bigg[\sum_{k=mh}^{(m+1)h-1}\Big(\mathbb E[\widehat {\mathcal L}_{\mathcal G(k)}|\mathcal F(mh-1)]\otimes I_n+\mathbb E[{\mathcal H}^T(k){\mathcal H}(k)|\mathcal F(mh-1)]\Big)\Bigg],\cr
\overline{\Lambda}_m^h&=&\lambda_{\min}\Bigg[\sum_{k=mh}^{(m+1)h-1}\Big(b(k)\mathbb E[\widehat {\mathcal L}_{\mathcal G(k)}|\mathcal F(mh-1)]\otimes I_n+a(k)\mathbb E[{\mathcal H}^T(k){\mathcal H}(k)|\mathcal F(mh-1)]\Big)\Bigg].
\eann

We first give a result for the case with general processes of random graphs and observation matrices.

\begin{theorem}\label{nodelay111}
\rm{Suppose that Assumptions~\textbf{A1.a}--\textbf{A1.b} hold. If Condition   \textbf{C1.a} holds, and there exists an integer $h>0$,  a constant $\rho_0>0$ and a positive real sequence $\{c(m),m\ge 0\}$  with
\bnaa\label{cmsequence}
b^2(mh)=o(c(m)),\ \sum_{m=0}^{\infty}c(m)=\infty,
\enaa
such that (b.1) $\overline{\Lambda}_m^h\ge c(m)$\ \rm{a.s.}, $m\geq0$
and (b.2) $\sup_{k\ge0}[\mathbb E[(\|\mathcal L_{\mathcal G(k)}\|+\|{\mathcal H}^T(k){\mathcal H}(k)\|)^{2^{\max\{h,2\}}}$ $|\mathcal F(k-1)]]^{\frac{1}{2^{\max\{h,2\}}}}\le\rho_0~\rm{a.s.},
$
then the algorithm~(\ref{asapp})  converges in mean square, that is, $\lim_{k\to\infty}$$\mathbb E$$\|x_i(k)$ $-x_0\|^2=0,~i\in\mathcal V.$ In addition, if Assumption~\textbf{A2.a} and Condition \textbf{C1.c} hold, then the  algorithm~(\ref{asapp}) converges almost surely, i.e.~$\lim_{k\to\infty}x_i(k)=x_0,~i\in\mathcal V~\rm{a.s.}$ }
\end{theorem}

\vskip 0.2cm
\begin{remark}\label{sajdqwokq0apspasd}
\rm{Most existing literature on    decentralized estimation suppose that the mean graphs are  balanced (\cite{Soummya1},\cite{Qiang}).  Here, the condition (b.1) in Theorem \ref{nodelay111} may still hold even if the mean graphs are unbalanced.  For example,  consider a  fixed  weighted graph $\mathcal G=\langle\mathcal V=\{1,2\}, \mathcal A_{\mathcal G}=[a_{ij}]_{2\times2}\rangle$ with $a_{12}=1$ and $ a_{21}=0.3$. Obviously, $\mathcal G$ is unbalanced.  Suppose $H_1=0,H_2=1$. Choose $a(k)=b(k)=\frac{1}{k+1}$.
We have $\lambda_{\min}(b(m)\widehat{\mathcal L}_{\mathcal G}+a(m){\mathcal H}^T\mathcal H)=\frac{1}{m+1}\lambda_{\min}(\widehat{\mathcal L}_{\mathcal G}+{\mathcal H}^T\mathcal H)=\frac{0.5821}{m+1}$.
Then,   the condition   (b.1) holds with $h=1$ and $c(m)=\frac{0.5821}{m+1}$ satisfying (\ref{cmsequence}). A more complex example with unbalanced mean graphs is given in Section \ref{zhen}.  }
\end{remark}

\vskip 0.2cm

Next, we give Theorem \ref{nodelay} for the case with   conditionally balanced digraphs:
\ban
\Gamma_1&=&\Big\{\{\mathcal G(k),k\ge0\}|\mbox{the random matrix}~\mathbb E[\mathcal A_{\mathcal G(k)}|\mathcal F(k-1)]\mbox{~is nonnegative}\cr
&&~~~~~~~~~~~~~~~~~~~~\mbox{and its associated random graph is balanced a.s.},k\ge0\Big\}.
\ean

\begin{theorem}\label{nodelay}
\rm{Suppose that $\{\mathcal G(k), k\ge0\}\in\Gamma_1$, Assumptions~\textbf{A1.a}--\textbf{A1.b} hold. If Conditions \textbf{C1.a}--\textbf{C1.b} hold,   and there exists an integer $h>0$, positive constants $\theta$ and $\rho_0$ such that (c.1)
$\inf_{m\ge 0}{\Lambda_m^h}\ge\theta>0~\rm{a.s.}$  and
(c.2) $\sup_{k\ge0}[\mathbb E[(\|\mathcal L_{\mathcal G(k)}\|+\|{\mathcal H}^T(k){\mathcal H}(k)\|)^{2^{\max\{h,2\}}}|\mathcal F(k-1)]]^{\frac{1}{2^{\max\{h,2\}}}}\le\rho_0~\rm{a.s.}$,
then the algorithm~(\ref{asapp})  converges in mean square. In addition, if Assumption~\textbf{A2.a} and Condition \textbf{C1.c} hold, then the  algorithm~(\ref{asapp}) converges almost surely.}
\end{theorem}

\vskip 0.2cm

\begin{remark}
\rm{The condition (b.1) in Theorem \ref{nodelay111} and the condition (c.1) in Theorem \ref{nodelay} are the key convergence conditions. We call them the \emph{stochastic spatio-temporal persistence of excitation} conditions.
In detail, \textit{spatio} emphasizes the reliance of the conditions on the  communication graphs and observation matrices over all nodes rather than a single node, while \textit{temporal} represents the summing matrices over a sequence of fixed-length time intervals rather than a single time step, and ``persistence of excitation''  represents that the minimum eigenvalues of matrices consisting of spatio-temporal  observation matrices and Laplacian matrices  are uniformly bounded away from zero with respect to the sample paths in some sense. Guo \cite{guolaoshi} considered centralized estimation algorithms with random observation matrices and proposed the  ``stochastic persistence of excitation'' condition to ensure convergence.
The condition (c.1) can be regarded as the generalization of ``stochastic persistence of excitation'' condition in \cite{guolaoshi} to that for  decentralized algorithms. For a  network with $N$  isolated nodes, $\mathcal L_{\mathcal G(k)}\equiv\textbf{0}_{N\times N}$ a.s., and the condition (c.1) degenerates to $N$ independent ``stochastic persistence of excitation'' conditions.}
\end{remark}

\vskip 0.2cm

In the most existing literature, it was also required that the sequence of observation matrices
be i.i.d. and independent of the sequence of communication graphs, neither of which is necessary
in Theorems \ref{nodelay111} and \ref{nodelay}.
Subsequently, we give more intuitive convergence conditions
for  the case with Markovian switching communication graphs and observation matrices. We first make the following assumption.

\vskip 0.2cm

\textbf{A3} $\{ \langle\mathcal H(k),\mathcal A_{\mathcal G(k)} \rangle, k\ge0\}\subseteq {\mathcal S}$ is a homogeneous and uniform ergodic Markov chain  with  a unique stationary distribution $ \pi$.\\
Here,  $ {\mathcal S}=\{ \langle\mathcal H_l, \mathcal A_l \rangle,l=1,2,...\}$ with $\mathcal H_l=diag (H_{1,l},\cdots, H_{N,l})$, where $\{H_{i,l}\in\mathbb R^{n_i\times n},l=1,2,...\}$ is the state space of  observation matrices of node $i$ and $\{\mathcal A_l,l=1,2,...\}$ being the state space of the weighted adjacency matrices, $\pi=[\pi_1, \pi_2,...]^T$, $\pi_{l}\geq0$, $l=1,2,...$, and $\sum_{l=1}^{\infty}\pi_{l}=1$ with  $\pi_l$ representing $\pi( \langle\mathcal H_l, \mathcal A_l \rangle)$.

\vskip 0.2cm

\begin{corollary}\label{marke}
\rm{Suppose that Assumptions~\textbf{A1.a}--\textbf{A1.b}, \textbf{A3} hold, and
  $\sup_{l\ge1}\|\mathcal A_l\|<\infty$, $\sup_{l\ge1}\|\mathcal H_l\|<\infty$. If Conditions \textbf{C1.a}--\textbf{C1.c} hold, and

      (d.1)    the stationary weighted adjacency matrix $\sum_{l=1}^\infty\pi_l\mathcal A_l$  is nonnegative and its  associated graph is balanced with a spanning tree;

      (d.2)  the measurement model (\ref{op09}) is \emph{spatio-temporally jointly observable}, i.e.
      \bnaa\label{osnsdfka}
      \lambda_{\min}\Bigg(\sum_{i=1}^N\Bigg(\sum_{l=1}^\infty\pi_lH_{i,l}^TH_{i,l}\Bigg)\Bigg)>0,\enaa
      then the algorithm~(\ref{asapp})  converges  in mean square and almost surely.}
\end{corollary}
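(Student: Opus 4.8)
The plan is to obtain Corollary~\ref{marke} from Theorems~\ref{nodelay} and~\ref{almostsure} by checking their hypotheses under the Markovian Assumption~\textbf{A4}. The two boundedness requirements $\sup_{l\ge1}\|\mathcal A_l\|<\infty$ and $\sup_{l\ge1}\|\mathcal H_l\|<\infty$ confine the chain $\{\langle\mathcal H(k),\mathcal A_{\mathcal G(k)}\rangle\}$ to a norm-bounded subset of $\mathcal S$, so that $\|\mathcal L_{\mathcal G(k)}\|$ and $\|\mathcal H^T(k)\mathcal H(k)\|$ are a.s.\ uniformly bounded; this immediately secures Assumption~\textbf{A2.a} and condition~(b.2) of Theorem~\ref{nodelay}, since every fixed-order conditional moment is then dominated by a deterministic constant. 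As \textbf{A1.a}, \textbf{A1.b}, \textbf{A3.a} and \textbf{A3.b} are assumed outright, the whole of the work lies in verifying the \emph{stochastic spatial-temporal persistence of excitation} condition~(b.1).

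For~(b.1) I would use the uniform ergodicity in~\textbf{A4}, which supplies a geometric mixing rate uniform in the conditioning state: with $\overline{\mathcal A}=\sum_{l\ge1}\pi_l\mathcal A_l$, $\widehat{\overline{\mathcal L}}$ the symmetrization of its Laplacian, and $\overline Q=\sum_{l\ge1}\pi_l\mathcal H_l^T\mathcal H_l=\mathrm{diag}(\sum_l\pi_lH_{1,l}^TH_{1,l},\dots,\sum_l\pi_lH_{N,l}^TH_{N,l})$, there exist $C>0$ and $\rho\in(0,1)$ with $\|\mathbb E[\widehat{\mathcal L}_{\mathcal G(k)}\,|\,\mathcal F(mh-1)]-\widehat{\overline{\mathcal L}}\|\le C\rho^{\,k-mh}$ and $\|\mathbb E[\mathcal H^T(k)\mathcal H(k)\,|\,\mathcal F(mh-1)]-\overline Q\|\le C\rho^{\,k-mh}$ a.s.\ for $k\ge mh$ (the Markov property reduces conditioning on $\mathcal F(mh-1)$ to conditioning on the state at $mh-1$, the noise being independent by \textbf{A1.a}). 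Writing the matrix defining $\lambda_m^h$ as $S_m(\widehat{\overline{\mathcal L}}\otimes I_n)+h\overline Q$ plus deviation terms, where $S_m=\sum_{k=mh}^{(m+1)h-1}b(k)/a(k)\ge h/C_a$ and $C_a$ is the constant of \textbf{A3.a} with $a(k)\le C_ab(k)$, the observation deviation has norm $O((1-\rho)^{-1})$ uniformly in $m$, so for $h$ large it is negligible against $h\overline Q$.

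The positivity of the principal part rests on combining~(c.1) and~(c.2). By~(c.1) the symmetrized stationary graph is connected and balanced, so $\widehat{\overline{\mathcal L}}\ge0$ with $\ker(\widehat{\overline{\mathcal L}}\otimes I_n)=\textbf{1}_N\otimes\mathbb R^n$ and $\widehat{\overline{\mathcal L}}\,\textbf{1}_N=0$; by~(c.2), $\overline Q\ge0$ and $v^T\overline Q v>0$ for every nonzero $v\in\textbf{1}_N\otimes\mathbb R^n$, since $\sum_{i=1}^N\sum_l\pi_lH_{i,l}^TH_{i,l}>0$. The two kernels therefore meet only at the origin, and since $\lambda_{\min}(\alpha(\widehat{\overline{\mathcal L}}\otimes I_n)+\overline Q)$ is nondecreasing in $\alpha\ge0$, one gets $\lambda_{\min}(\alpha(\widehat{\overline{\mathcal L}}\otimes I_n)+\overline Q)\ge\lambda_{\min}(\tfrac1{C_a}(\widehat{\overline{\mathcal L}}\otimes I_n)+\overline Q)=:\theta_1>0$ for \emph{every} $\alpha=S_m/h\ge1/C_a$, uniformly in $m$. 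Thus the exact stationary principal part is uniformly positive definite, and it remains only to show the deviation terms do not destroy this; Theorems~\ref{nodelay} and~\ref{almostsure} then yield mean-square and almost-sure convergence.

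The hard part will be precisely this last control, namely reconciling the time-varying weight $b(k)/a(k)$ --- which is bounded below by $1/C_a$ but, under \textbf{A3.a}, may tend to infinity --- with the fact that the individual conditional Laplacians $\mathbb E[\widehat{\mathcal L}_{\mathcal G(k)}\,|\,\mathcal F(mh-1)]$ are only \emph{asymptotically} positive semidefinite: for $k$ near the window start they reflect an unbalanced instantaneous graph and are genuinely indefinite. The favorable structure is that $\textbf{1}_N^T\widehat{\mathcal L}_{\mathcal G(k)}\textbf{1}_N=0$ for every realization, so these deviations annihilate the consensus diagonal and leave only a coupling between $\textbf{1}_N\otimes\mathbb R^n$ and its orthogonal complement; I would estimate the weighted deviation $\sum_{k=mh}^{(m+1)h-1}(b(k)/a(k))(\mathbb E[\widehat{\mathcal L}_{\mathcal G(k)}\,|\,\mathcal F(mh-1)]-\widehat{\overline{\mathcal L}})$ by summation by parts, exploiting the geometric factor $\rho^{\,k-mh}$ against the slow variation of $b(k)/a(k)$, and absorb the residual cross terms into the large consensus penalty $S_m(\widehat{\overline{\mathcal L}}\otimes I_n)$ on the orthogonal complement and the innovation term $h\overline Q$ on the consensus subspace. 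Making this absorption hold uniformly in $m$ is the crux: it is exactly where the intensity of the spatial-temporal excitation must be calibrated against the memory of the chain, the same tension between excitation intensity and memory that reappears once time-delays are present.
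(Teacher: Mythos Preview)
Your outline is sound, and your verification that the stationary matrix $\sum_{l}\pi_l(\widehat{\mathcal L}_l\otimes I_n+\mathcal H_l^T\mathcal H_l)$ is positive definite from~(c.1)--(c.2), by showing the kernels of the two positive-semidefinite summands meet only at $0$, is exactly what the paper does.  Where you diverge sharply is in keeping the weight $b(k)/a(k)$ general.  The paper, at the very start of its proof, simply declares that one may take $a(k)=b(k)$ ``without loss of generality.''  With $b(k)/a(k)\equiv1$ the matrix inside $\lambda_m^h$ becomes $\sum_{k=mh}^{(m+1)h-1}\mathbb E[\widehat{\mathcal L}_{\mathcal G(k)}\otimes I_n+\mathcal H^T(k)\mathcal H(k)\mid\mathcal F(mh-1)]$, and the uniform-ergodicity bound $\|\mathbb P^k(\cdot,\cdot)-\pi\|\le Rr^{-k}$ yields directly that $1/h$ times this sum converges to the stationary limit \emph{uniformly in $m$ and in the conditioning state}.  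Continuity of $\lambda_{\min}$ then gives $\inf_{m\ge0}\lambda_m^h\ge h_0\mu/2>0$ for all $h\ge h_0$, and Theorems~\ref{nodelay} and~\ref{almostsure} finish the proof.  Your ``hard part'' therefore never arises in the paper's argument.

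You are right that under \textbf{A3.a} alone the ratio $b(k)/a(k)$ can tend to infinity, so the paper's ``without loss of generality'' is not transparently justified for the corollary in its stated generality; your approach genuinely attacks the harder problem.  But note that even after Abel summation the weighted Laplacian deviation $\sum_{k=mh}^{(m+1)h-1}\frac{b(k)}{a(k)}\bigl(\mathbb E[\widehat{\mathcal L}_{\mathcal G(k)}\mid\mathcal F(mh-1)]-\widehat{\overline{\mathcal L}}\bigr)$ still has norm of order $b(mh)/a(mh)$, while the consensus--consensus block of your principal part is only of order $h$; the scalar identity $\mathbf 1_N^T\widehat{\mathcal L}_{\mathcal G(k)}\mathbf 1_N=0$ kills only the diagonal contribution on the consensus subspace, not the cross block, so your absorption sketch would need a finer structural argument to close uniformly in $m$.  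The paper does not engage with any of this: its proof is effectively the special case $a(k)=b(k)$, where your difficulty disappears and the argument reduces to a clean Ces\`aro-average plus continuity of $\lambda_{\min}$.
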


\vskip 0.2cm

\begin{remark}
\rm{Most of the existing  decentralized estimation algorithms used the mathematical expectation of observation matrices which is restricted to be time-invariant and difficult to be obtained (\cite{Soummya1},\cite{Qiang}). They required instantaneous global observability in the statistical sense for the measurement model, i.e., $\sum_{i=1}^N\overline H_i^T\overline H_i$ is positive definite, where $\overline H_i$ is a fixed matrix with $\mathbb E[H_i(k)]\equiv \overline H_i$, for all  $k\ge0$, $i=1,2,...,N$.
In contrast, we only use the sample paths of observation matrices in the algorithm (\ref{asapp}). The mathematical expectations of observation matrices are allowed to be time-varying.
 We prove that for  homogeneous and uniform ergodic Markovian switching observation matrices and communication graphs, the \emph{stochastic spatio-temporal persistence of excitation} condition given in Theorem \ref{nodelay} holds if the stationary graph is balanced with a spanning tree and the measurement model is spatio-temporally jointly observable, that is, (\ref{osnsdfka}) holds, implying that neither local observability of each node, i.e. $\lambda_{\min}(\sum_{l=1}^\infty\pi_lH_{i,l}^TH_{i,l})>0$, $i\in\mathcal V$, nor instantaneous global observability  of  the entire measurement model,  i.e. $\lambda_{\min} (\sum_{i=1}^N H_{i,l}^TH_{i,l})>0$, $l=1,2,...$, is needed.}
 \end{remark}

\section{the case with random Time-varying communication delays}\label{delayexist}

In this section, we  analyze the convergence of the algorithm (\ref{asapp}) with random observation matrices, communication graphs and time delays simultaneously. All proofs of results are put in Appendix C.

In the presence of random time-varying communication delays, the mean square convergence analysis of the algorithm becomes very difficult. To address this, we transform the estimation error equation (\ref{s1119s}) into the following equivalent system (\cite{ouis}-\cite{ouiss}).
\bna\label{12129s}
r(k+1)&=&F(k)r(k)+g(k),\cr
g(k)&=&\sum_{q=1}^dC_q(k)g(k-q)+a(k){\mathcal H}^T(k)v(k),\ k\geq0,
\ena
where $F(k)$, $C_q(k)$,  $1\le q\le d$, $k\ge0$ satisfy
\bna\label{121esd29s}
&&F(k)+C_1(k)=I_{Nn}-b(k)\mathcal D_{\mathcal G(k)}\otimes I_n-a(k){\mathcal H}^T(k){\mathcal H}(k)+b(k)\overline A(k,0),\cr
&&C_1(k)F(k-1)-C_2(k)=-b(k)\overline A(k,1),\cr
&&C_2(k)F(k-2)-C_3(k)=-b(k)\overline A(k,2),\cr
&&~~~~~~~~~~~~~~~\vdots\cr
&&C_{d-1}(k)F(k-d+1)-C_d(k)=-b(k)\overline A(k,d-1),\cr
&&C_{d}(k)F(k-d)=-b(k)\overline A(k,d).
\ena
Here, $F(k)=I_{Nn}$, $-d\le k\le-1$. It can be verified that if $r(k)=e(k)$, $-d\le k\le-1$, then $r(k)=e(k)$, $\forall\ k\ge 0$,
i.e. the system (\ref{s1119s}) and the system (\ref{12129s})-(\ref{121esd29s}) are equivalent.

We need the following condition on the consensus gain.

\textbf{C1.d} The initial consensus gain \rm{$b(0)\le \max_{0<\psi<1}f_{C_1,\beta_a,\beta_H,N,d}(\psi)$, where
\bann\label{sad0qiddmpa}
f_{C_1,\beta_a,\beta_H,N,d}(\psi)\triangleq\frac{\psi}{N\beta_a+C_1\beta_H^2+
N\beta_a[(1-\psi)^{-(d+1)}-1]/[(1-\psi)^{-1}-1]},d\ge1,\psi\in(0,1),
\eann
with $C_1\triangleq\sup_{k\ge0}\frac{a(k)}{b(k)}$.}

It  can be verified that given Assumption~\textbf{A2.b} and Condition \textbf{C1.a}, $\max_{0<\psi<1}f_{C_1,\beta_a,\beta_H,N,d}(\psi)$ is well-defined.  Examples of  $f_{1,1,1,N,d}(\cdot)$ with different $d$ and $N$ are shown in Figure \ref{Assumption3cremark}.

We first establish a  lemma as the basis of convergence analysis.
\begin{lemma}\label{asa0}
\rm{If Assumption~\textbf{A2.b}, Conditions \textbf{C1.a} and  \textbf{C1.d} hold,  then
$F(k)$ is invertible  and
$\|F^{-1}(k)\|\le(1-\psi_1)^{-1}$~\rm{a.s.},~$\forall~k\ge 0$, where
$
\psi_1=\min\{\psi\in(0,1)| f_{C_1,\beta_a,\beta_H,N,d}(\psi)\geq b(0)\}.
$
}
\end{lemma}
Note that, by the continuity  of $ f_{C_1,\beta_a,\beta_H,N,d}(\cdot)$ and Conditon \textbf{C1.d}, it is known that the set $\{\psi\in(0,1)| f_{C_1,\beta_a,\beta_H,N,d}(\psi)\geq b(0)\}$ is a nonempty and bounded closed set. Then, $\psi_1$  is well-defined.

\begin{figure}[H]
  \centering
  \includegraphics[width=8cm]{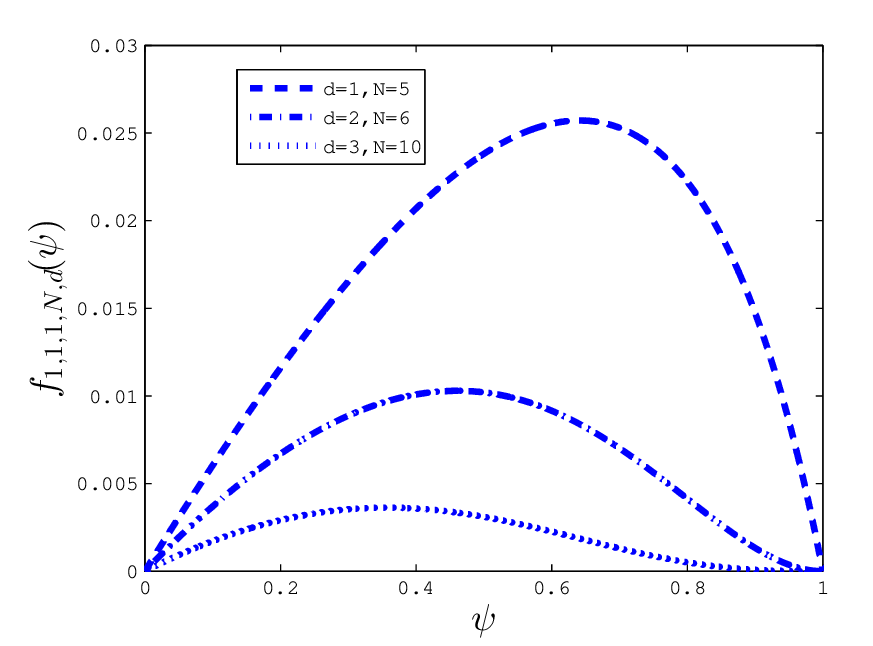}\\
\setlength{\abovecaptionskip}{0pt}
\setlength{\belowcaptionskip}{0pt}
  \caption{\footnotesize{The curves of $f_{1,1,1,N,d}(\cdot)$.}}\label{Assumption3cremark}
\end{figure}
\vspace{-0.8cm}

If the conditions of  Lemma~\ref{asa0} hold, then $F(k)$ is invertible \rm{a.s.} Thus, by (\ref{121esd29s}), we have
\bna\label{ede92}
F(k)&=&I_{Nn}-b(k){\mathcal D}_{\mathcal G(k)}\otimes I_n-a(k){\mathcal H}^T(k){\mathcal H}(k)+b(k)\overline A(k,0)-C_1(k)\cr
&=&I_{Nn}-b(k){\mathcal D}_{\mathcal G(k)}\otimes I_n-a(k){\mathcal H}^T(k){\mathcal H}(k)+b(k)\overline A(k,0)\cr
&&~~~-(C_2(k)-b(k)\overline A(k,1))F^{-1}(k-1)\cr
&&~~~~~~~~~~~~~~~\vdots\cr
&=&I_{Nn}-G(k),~k\ge 0,
\ena
where
\bnaa\label{gkfor}
G(k)\triangleq b(k)\mathcal {\mathcal D}_{\mathcal G(k)}\otimes I_n+a(k){\mathcal H}^T(k){\mathcal H}(k)-b(k)\sum_{q=0}^d\overline A(k,q)\big[\Phi_F(k-1,k-q)\big]^{-1}.
\enaa
For any given positive integers~$h$ and $m$, denote
\bnaa\label{lam}
&&{\widetilde\Lambda_m^h}=\lambda_{\min}\Bigg[\sum_{k=mh}^{(m+1)h-1}\Bigg({b(k)}\mathbb E[\widehat{\mathcal L}_{\mathcal G(k)}|\mathcal F(mh-1)]\otimes I_n+a(k)\mathbb E[{\mathcal H}^T(k){\mathcal H}(k)|\mathcal F(mh-1)]\cr
&&~~~~~~~~~~~~~-\frac{b(k)}{2}\sum_{q=0}^d
\mathbb E\Big[\overline A(k,q)[[\Phi_F(k-1,k-q)]^{-1}-I_{Nn}]\cr
&&~~~~~~~~~~~~~+[[\Phi_F(k-1,k-q)]^{-1}-I_{Nn}]^T\overline A^T(k,q)\Big|\mathcal F(mh-1)\Big]\Bigg)\Bigg].
\enaa

\begin{theorem}\label{e4t5634332}
\rm{Suppose that Assumptions~\textbf{A1.a}--\textbf{A1.b}, \textbf{A2.b} hold. If Conditions \textbf{C1.a}, \textbf{C1.d} hold, and there exists an integer $h>0$ and a positive real sequence $\{ c(m),m\ge0\}$ with $b^2(mh)=o(c(m))$,\ $\sum_{m=0}^{\infty}c(m)=\infty$, such that
\bnaa
\label{saijdq98e00sad}
{\widetilde\Lambda_m^h}\ge  c(m)\ \mathrm{a.s.},\ m\geq0,
\enaa
then the   algorithm~(\ref{asapp})   converges in mean square.}
\end{theorem}


If  $\{ \langle\mathcal H(k),\mathcal A_{\mathcal G(k)},\lambda_{ji}(k),j,i\in\mathcal V \rangle, k\ge0\}$ is an independent random process, then Corollary  \ref{tuiasdwqrqacasdwl} below gives a sufficient condition for the condition (\ref{saijdq98e00sad}) in Theorem \ref{e4t5634332} to  hold, which is more intuitive and computable.

\begin{corollary}\label{tuiasdwqrqacasdwl}
\rm{Suppose that Assumptions~\textbf{A1.a}--\textbf{A1.b}, \textbf{A2.b} hold, $\{ \langle\mathcal H(k),\mathcal A_{\mathcal G(k)},\lambda_{ji}(k),j,i\in\mathcal V \rangle, k\ge0\}$ is an independent process. If   Condition \textbf{C1.a} holds, $b(0)\le f_{C_1,\beta_a,\beta_H,N,d}(\psi_2)$ with $ \psi_2\in(0,{2}^{\frac{1}{d}}-1)$,
and there exists an integer $h>0$ and a positive real sequence~$\{ c(m),m\ge0\}$ with $b^2(mh)=o(c(m))$ and $\sum_{m=0}^{\infty}c(m)=\infty$, such that
\bnaa\label{926hao123}
\overline{\Lambda}_m^h-\sum_{k=mh}^{(m+1)h-1}\Bigg[{b(k)}\sum_{q=0}^d
\frac{\|\mathbb E[\overline A(k,q)]\|[(1+\psi_2)^q-1]}{2-(1+\psi_2)^q}\Bigg]\ge c(m),\ m\geq0,
\enaa
then  the   algorithm~(\ref{asapp})   converges in mean square.}
\end{corollary}

\vskip 0.2cm

  Next,  for the case with conditionally balanced digraphs,  the following corollary presents a more intuitive convergence condition.

\begin{corollary}\label{tuilunasdasaw}
\rm{ Suppose that Assumptions~\textbf{A1.a}--\textbf{A1.b}, \textbf{A2.b} hold and $\{\mathcal G(k), k\ge0\}\in\Gamma_1$.    If   Conditions \textbf{C1.a}--\textbf{C1.b}, \textbf{C1.d}  hold, $b(k)=O(a(k))$,
 and there exists an integer $h>0$, a constant $\theta>0$ such that
  \bnaa\label{ssaoidjaoi90e0}
  \inf_{m\ge0}(\Lambda_m^h-\Sigma_m^h)\ge\theta\ \mathrm{a.s.}
   \enaa
  where $$\Sigma_m^h=C_2 (C_3)^h\max\{1,C_1\}\sum_{k=mh}^{(m+1)h-1}\sum_{q=0}^d\|\mathbb E[\overline A(k,q)([\Phi_F(k-1,k-q)]^{-1}-I_{Nn})|\mathcal F(mh-1)]\|$$ with $C_2\triangleq\sup_{k\ge0}\frac{b(k)}{a(k)}$ and $C_3\triangleq\sup_{k\ge0}\frac{a(k)}{a(k+1)}$,
 then, the  algorithm~(\ref{asapp}) converges in mean square.  Furthermore,   if $\{ \langle\mathcal H(k),\mathcal A_{\mathcal G(k)},\lambda_{ji}(k),j,i\in\mathcal V \rangle, k\ge0\}$ is an independent process, then (\ref{ssaoidjaoi90e0}) holds if  there exist an  integer~$h>0$ such that
\bnaa\label{asdasdqwra}
\inf_{m\ge0}\Lambda_m^h>C_2 (C_3)^h\max\{1,C_1\}\sup_{m\ge0}\Bigg[\sum_{k=mh}^{(m+1)h-1}\sum_{q=0}^d\|\mathbb E[\overline A(k,q)]\|\frac{[(1+\psi_2)^q-1]}{2-(1+\psi_2)^q}\Bigg],
\enaa
 and $b(0)\le f_{C_1,\beta_a,\beta_H,N,d}(\psi_2)$, with $\psi_2\in(0,{2}^{\frac{1}{d}}-1)$,
where $C_1$ is defined in Condition \textbf{C1.d}.
}
\end{corollary}


\vskip 0.2cm

\begin{remark}
\rm{Theorem \ref{e4t5634332}, Corollaries \ref{tuiasdwqrqacasdwl}-\ref{tuilunasdasaw}  give  explicit convergence conditions under which all nodes' estimates converge to the true parameter in mean square.  Existing literature used the Lyapunov-Krasovskii functional method to deal with time delays and obtained the non-explicit LMI type convergence condition (\cite{Mill}).  In contrast, here,
we transform the system with random time-varying communication delays into an equivalent delay-free system by introducing an auxiliary system and then adopt the method of binomial expansion of  random matrix products to transform the mean square convergence analysis of the delay-free system into that of the mathematical expectation of random  matrix products, and obtain the key convergence conditions (\ref{saijdq98e00sad})-(\ref{ssaoidjaoi90e0}) which explicitly rely on the conditional expectations of delay matrices, observation matrices and weighted adjacency matrices of communication graphs   over a sequence of fixed-length time intervals.  In the absence of time delays,   the condition (\ref{saijdq98e00sad}) degenerates  to the condition (b.1) in Theorem \ref{nodelay111}.}
\end{remark}

\vskip 0.2cm

\begin{remark}\label{asd8q9280120}
\rm{The conditions (\ref{926hao123}) and (\ref{asdasdqwra}) can be further simplified  for special delay processes.
If the delays are independent of the graphs, then $\mathbb E[\overline A(k,q)]=\mathbb E[\mathcal A_{\mathcal G(k)}]\circ \mathbb E[\mathcal I(k,q)]$. Here, the element in the $i$th row and the $j$th column of  $\mathbb E[\mathcal I(k,q)]$, $\mathbb E[\mathcal I(k,q)]_{ij}=\mathbb P\{\lambda_{ji}(k)=q\}=p_{ji,q}(k)$. In addition,
 \begin{itemize}
   \item if  $\lambda_{ji}(k)$ are identically distributed w.r.t.  $k$, then  $\mathbb E[\mathcal I(k,q)]_{ij}=p_{ji,q}(0),\forall~k\ge0$;
   \item if  $\lambda_{ji}(k)$ are identically distributed w.r.t. both $k$ and $(j,i)$, then $\mathbb E[\mathcal I(k,q)]_{ij}= p_{q},i\not=j$ where $p_{q}$ denotes the probability that the packet is delayed by $q$ steps for all $k$ and $(j,i),j\not=i$. Therefore, $\|\mathbb E[\overline A(k,q)]\|=p_q\|\mathbb E[\mathcal A_{\mathcal G(k)}]\|$. Furthermore, if the graph sequence is an i.i.d. process, then the condition (\ref{asdasdqwra}) becomes
       \bann\label{asdsaswra}
\inf_{m\ge0}\Lambda_m^h>C_2 (C_3)^h\max\{1,C_1\}h\|\mathbb E[\mathcal A_{\mathcal G(0)}]\|\sum_{q=0}^d\frac{p_q[(1+\psi_2)^q-1]}{2-(1+\psi_2)^q}.
\eann
\end{itemize}
}
\end{remark}

\vskip 0.2cm

Corollaries \ref{tuiasdwqrqacasdwl}-\ref{tuilunasdasaw} show that for given algorithm gains $\{a(k), k\geq0\}$ and $\{b(k), k\geq0\}$, if
the communication graphs and observation matrices are persistently excited with
enough intensity, then  the additional effects of time delays can be mitigated. The maximum delay bound $d$ that can be allowed is related to  the weighted adjacency matrix of mean graphs $\mathbb E[\mathcal A_{\mathcal G(k)}]$, the probability distribution of time delays $\mathbb E[\mathcal I(k,q)]$ and the algorithm gains.  In the absence of time delays,
(\ref{ssaoidjaoi90e0})  degenerates to the condition (c.1) in Theorem \ref{nodelay}.   The following corollary shows that for the case with  conditionally balanced graphs, if the \emph{stochastic spatio-temporal persistence of excitation} condition $\inf_{m\ge0}{{\Lambda_m^h}}\ge \theta$ a.s.  holds, then for any given bounded delays, mean square convergence of the algorithm can be guaranteed if the algorithms gains are properly designed and sufficiently small.

\begin{corollary}\label{tuilunl}
\rm{ Suppose that Assumptions~\textbf{A1.a}--\textbf{A1.b}, \textbf{A2.b} hold, $\{\mathcal G(k), k\ge0\}\in\Gamma_1$ and   there exists an   integer $h>0$, a  constant $\theta>0$ such that  $\inf_{m\ge0}{{\Lambda_m^h}}\ge \theta$ a.s. If  Conditions \textbf{C1.a}--\textbf{C1.b} hold, $b(k)=O(a(k))$,   and $b(0)\le f_{C_1,\beta_a,\beta_H,N,d}(\psi_3)$ with $\psi_3\in (0,(1+{\theta}/[{\theta+{N}C_2 (C_3)^h\max\{1,C_1\}\beta_adh}])^{\frac{1}{d}}-1)$,
 then the  algorithm~(\ref{asapp}) converges in mean square.}
\end{corollary}

\section{numerical example}\label{zhen}
We apply our algorithm to  decentralized multi-area online state estimation in power systems to illustrate the effectiveness of the obtained theoretical results. An IEEE 14-bus system is used for the test, which has 14 buses and is partitioned into 4 areas~$A_1,A_2,A_3,A_4$, shown in Figure~\ref{wjd12ml}.
After a DC power flow approximation (\cite{lll}), the  grid state to be estimated degenerates into a vector of voltage  phase angles at all buses.  Let bus 1's    voltage phase angle   be zero, as the reference bus.   The  grid state to be estimated is given by
\bann
&&x_0=
[-4.98,-12.72,-11.33,-8.78,-14.22,-13.37,\cr
&&~~~~~~~~~~~~~~~~~~~~~~~-13.36,-14.94,-15.10,-14.79,-15.05,-15.12,-16.03]^T.
\eann
The measurements   $z_i(k)$ are linearly related to $x_0$, given by
$
z_i(k)=s_i(k)H_i'x_0+v_i(k),\ i=1, 2, 3, 4,
$
where the noise $\{v_i(k),k\ge0\}$ is assumed to be an i.i.d. process with the  standard normal distribution, $\{s_i(k),k\ge0\}$ is an i.i.d. sequence, modelling the sensing failures  with  $\mathbb P\{s_i(k)=1\}=\mathbb P\{s_i(k)=0\}=0.5$, and $H_i'$, $i=1,...,4$ are the observation matrices, which are deterministic and given in Appendix D. There are  $4$ random communication links with $0-1$ weights, represented by the red dotted lines in Figure~\ref{wjd12ml}. At odd time instants, the link from $A_2$ to $A_3$ awakes with the probability $0.5$ and the others sleep; at even time instants,  the link from $A_2$ to $ A_3$ sleeps and the others  awake  with the probability $0.5$. Both $\{\mathcal G(k),k\ge0\}$ and $\{H(k),k\ge0\}$  are independent processes. We use the averaged relative error, $\frac{\sum_{i=1}^4\|x_i(k)-x_0\|}{4\|x_0\|}$, to evaluate the performance of the algorithm.

\vspace{-0.5cm}
\begin{figure}[H]
  \centering
  \includegraphics[width=6cm]{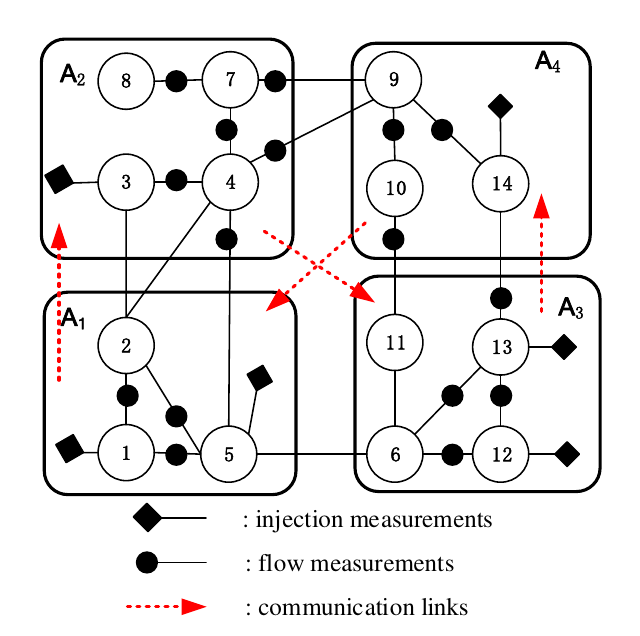}\\
\setlength{\abovecaptionskip}{0pt}
\setlength{\belowcaptionskip}{0pt}
  \caption{{IEEE-14 multi-area buses and the communication graphs.}}\label{wjd12ml}
\end{figure}
 \vspace{-0.7cm}

   For the delay-free case, set $a(k)=b(k)=\frac{0.5}{(k+1)^{0.52}}$.  Let $c(m)=\frac{0.0112}{(2m+2)^{0.52}}$. When $h=2$, we plot the curves of $\overline\Lambda_m^2$ and $c(m)$ w.r.t. $m$ in Figure \ref{delayfreecondition}, which shows that $\overline\Lambda_m^2\ge c(m),m\ge0$. The conditions of Theorem \ref{nodelay111} hold.  Figure~\ref{fig1} is depicted with the curves of the  averaged relative errors, where the red line represents the error curve of the algorithm  without random link failures and sensing failures, as the base case. It shows that in spite of  the unbalance of the mean graphs and the sensing failures, the four areas' estimates  converge to~$x_0$.

For the case with time delays,  assume that the delays are  independent of the communication graphs, observation matrices and measurement noises, and subjected to the Bernoulli distribution, i.e.
$\lambda_{ji}(k)\sim B(d,p)$ for all $k$ and $(j,i)$.
 Then~\bnaa\label{sadqwq9qsllll}
 \mathbb P\{\lambda_{ji}(k)=q\}=\mathbb C_{d}^qp^q(1-p)^{d-q},q=0,\cdots,d.
 \enaa
Set $d=4,p=0.4$. We now verify the convergence conditions in Corollary \ref{tuiasdwqrqacasdwl}. Let $a(k)=b(k)$. Then $C_1=1$. By the above settings of communication graphs and measurement matrices, we know that $\beta_a=1$,$\beta_H=4.07$. Let $\psi_2=0.01$. Then we have  $f_{C_1,\beta_a,\beta_H,N,d}(\psi_2)= 0.0005$. Then, let $b(k)=\frac{ 0.0005}{(k+1)^{0.1}}$.
By the definition of  $p_q$ in Remark \ref{asd8q9280120},  it follows from (\ref{sadqwq9qsllll}) that $p_q=\mathbb C_{4}^q0.4^q0.6^{4-q},q=0,...,4$.
Note that $\|\mathbb E[\mathcal A_{\mathcal G(k)}]\|\equiv0.5$. As is discussed in Remark \ref{asd8q9280120}, $\|\mathbb E[\overline A(k,q)]\|=p_q\|\mathbb E[\mathcal A_{\mathcal G(k)}]\|$. Hence, it can be calculated that
\bann
\sum_{k=mh}^{(m+1)h-1}b(k)\sum_{q=0}^d\|\mathbb E[\overline A(k,q)]\|\frac{[(1+\psi_2)^q-1]}{2-(1+\psi_2)^q}&=&\sum_{k=2m}^{2m+1}b(k)\sum_{q=0}^4\frac{0.5p_q[(1+\psi_2)^q-1]}{2-(1+\psi_2)^q}
\cr
&=&0.01\sum_{k=2m}^{2m+1}b(k).
\eann
Note that
$\overline{\Lambda}_m^2=\lambda_{\min}[\sum_{k=2m}^{2m+1}b(k)(\mathbb E[\widehat {\mathcal L}_{\mathcal G(k)}]\otimes I_{13}+\mathbb E[{\mathcal H}^T(k){\mathcal H}(k)])]$. Let $c(m)=\frac{ 0.00001}{(4m+4)^{0.1}}$. We plot the curves of $\overline{\Lambda}_m^2-0.01\sum_{k=2m}^{2m+1}b(k)$ and $c(m)$ w.r.t. $m$ in Figure \ref{saspjaksdaka}, showing that the condition (\ref{926hao123}) in Corollary \ref{tuiasdwqrqacasdwl} holds.
Figure \ref{theoremV1} is depicted with curve of the averaged relative error, which confirms  Corollary \ref{tuiasdwqrqacasdwl}.

\begin{figure}[H]
\begin{minipage}[t]{0.5\linewidth}
  \centering
  \includegraphics[width=8cm]{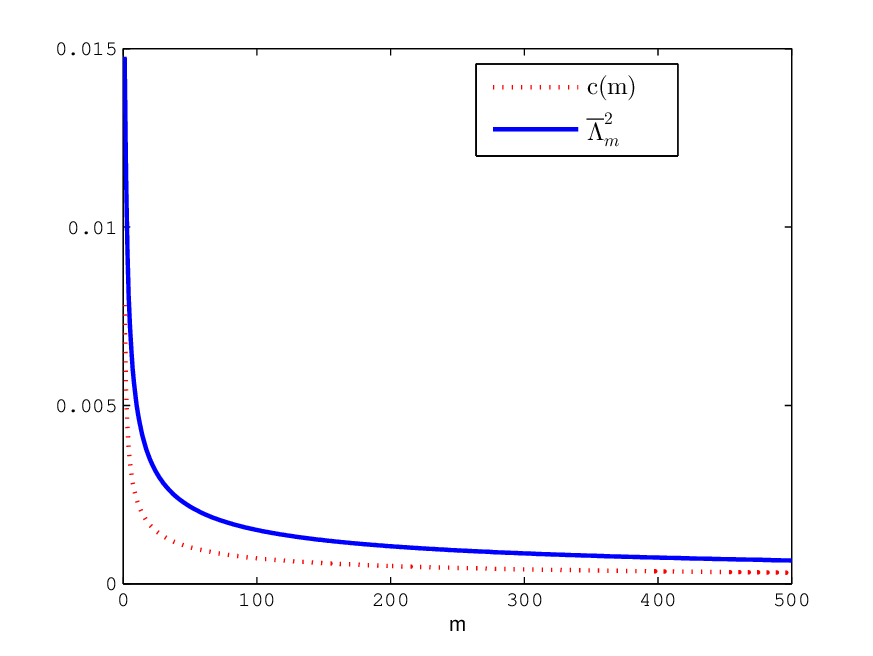}\\
\setlength{\abovecaptionskip}{0pt}
\setlength{\belowcaptionskip}{0pt}
  \caption{\footnotesize{Curves of  ${\overline{\Lambda}_m^2}$ and $c(m)=\frac{0.0112}{(2m+2)^{0.52}}$ w.r.t. $m$. }}\label{delayfreecondition}
\end{minipage}
\begin{minipage}[t]{0.5\linewidth}
  \centering
  \includegraphics[width=8cm]{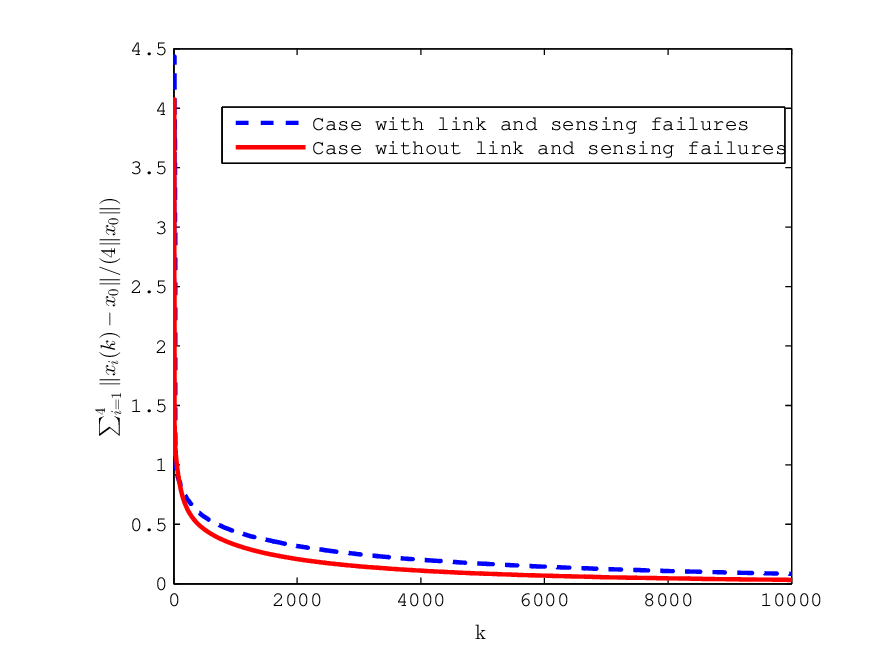}\\
\setlength{\abovecaptionskip}{0pt}
\setlength{\belowcaptionskip}{0pt}
  \caption{\footnotesize{Curves of the averaged relative  error $\frac{\sum_{i=1}^4\|x_i(k)-x_0\|}{4\|x_0\|}$ for the delay-free case. }}\label{fig1}
\end{minipage}
\end{figure}

\vspace{-1cm}

\begin{figure}[H]
\begin{minipage}[t]{0.5\linewidth}
  \centering
  \includegraphics[width=8cm]{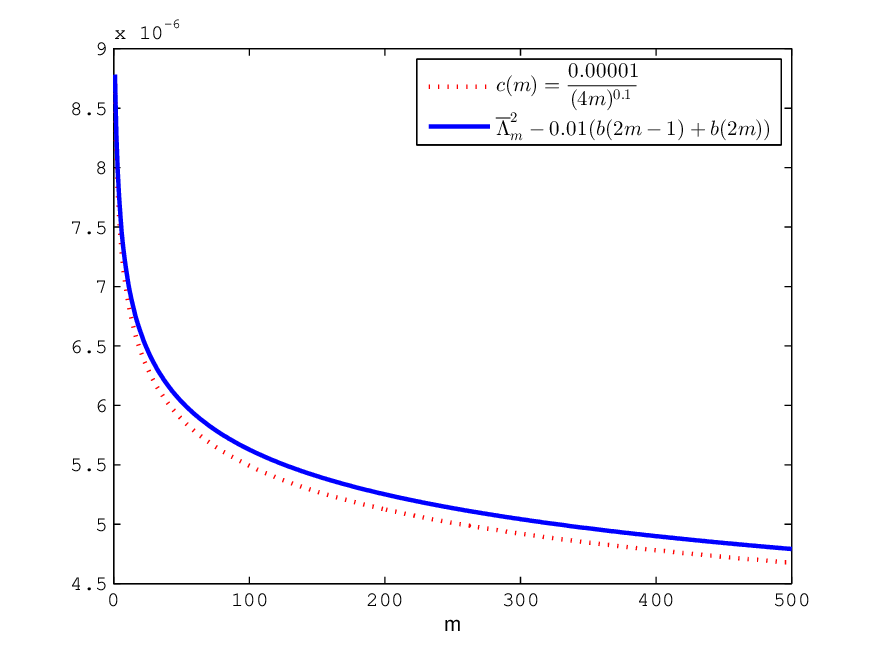}\\
\setlength{\abovecaptionskip}{0pt}
\setlength{\belowcaptionskip}{0pt}
  \caption{\footnotesize{{Curves of $\overline{\Lambda}_m^2-0.01\sum_{k=2m}^{2m+1}b(k)$ and $c(m)$ w.r.t.  $m$. }}}\label{saspjaksdaka}
  \end{minipage}
\begin{minipage}[t]{0.5\linewidth}
  \centering
  \includegraphics[width=8cm]{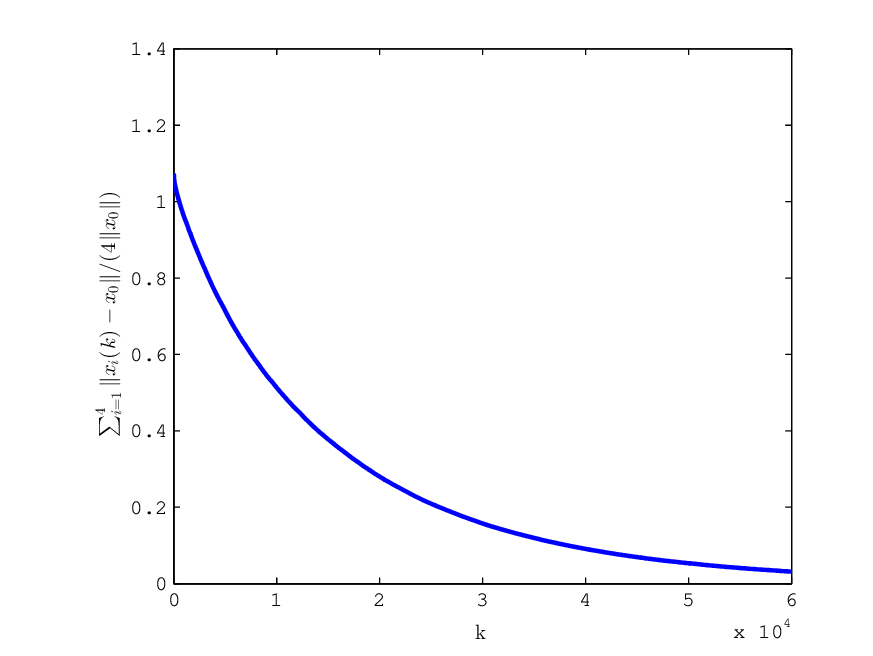}\\
\setlength{\abovecaptionskip}{0pt}
\setlength{\belowcaptionskip}{0pt}
  \caption{\footnotesize{{Curve of the averaged relative  error  for the case with random time delays. }}}\label{theoremV1}
  \end{minipage}
\end{figure}

\section{conclusion}\label{conc}
In this paper, we analyzed the convergence of the decentralized  cooperative online parameter estimation algorithm in an uncertain communication environment. Each node has a partial linear observation of the unknown parameter with random time-varying observation matrices.
The underlying communication network is modeled by a sequence of random digraphs and is subjected to nonuniform random time-varying delays in channels.
  For the delay-free case, we proved that if the observation matrices and the graph sequence satisfy the \emph{stochastic spatio-temporal persistence of excitation} condition, then the algorithm gains can be designed properly such that all nodes' estimates  converge to the true parameter in mean square and almost surely. Specially, for  Markovian switching  communication graphs and observation matrices, this condition holds if the stationary graph is balanced with a spanning tree and the measurement model is spatio-temporally jointly observable. For the case with communication delays, we introduced  delay matrices to model the random time-varying communication delays, adopted the method of binomial expansion of  random matrix products to transform the mean square convergence analysis of the algorithm into that of the mathematical expectation of   random  matrix products, and obtained  mean square convergence conditions explicitly relying on the conditional expectations of delay matrices, observation matrices and weighted adjacency matrices of communication graphs  over a sequence of fixed-length intervals. In the absence of time delays, these mean square convergence conditions degenerate to the  \emph{stochastic spatio-temporal persistence of excitation} conditions. Especially, given that the digraphs are conditionally balanced, we  show that if the \emph{stochastic spatio-temporal persistence of excitation} condition holds,  then for any given   bounded delay, proper algorithm gains can be designed to guarantee mean square convergence of the algorithm.

There are many interesting open issues for future research. Theorem  \ref{e4t5634332} is established for a very general type of delays, namely random and unordered. This means that in the practical implementation, the packets of information exchanged by pairs of nodes are being placed in a processing queue without any regard to their transmit time stamp. In some cases, all received packets are ordered by the time stamp of their transmission, and the communication delays would be random and monotone (\cite{ouis},\cite{Fsadawasa},\cite{Fsadawasa1}).  How to explore monotonicity constraints in the random delay process to relax the conditions or strengthen the results of Theorem V.1 would be an interesting and challenging issue.  The main obstacle is how to deal with the delay-induced products of the inverse of matrices, which is difficult and may need more advanced techniques. Another important issue is the convergence rate of the algorithm. Especially, Corollary \ref{tuilunl}  shows that for the case with  conditionally balanced graphs, if the \emph{stochastic spatio-temporal persistence of excitation} condition  holds, then for any given bounded delays, mean square convergence of the algorithm can be guaranteed if we choose  sufficiently small algorithms gains. However, smaller algorithm gains generally lead to a slower convergence. Thus, how to choose the algorithms gains for optimizing the convergence rate is an interesting topic for future investigation.

\begin{appendices}
\section{several useful lemmas}
\begin{definition}(\cite{053})\label{sadkasdd90we}
\rm{A Markov chain on a countable state space $\mathcal S$ with a stationary distribution $\pi$ and transition function $\mathbb P(x,\cdot)$  is called uniform ergodic, if there exist positive constants $r>1$ and $R$ such that for all $x\in\mathcal S$,
$\|\mathbb P^n(x,\cdot)-\pi\|\le Rr^{-n}$.
Here, $\|\mathbb P^n(x,\cdot)-\pi\|=\sum_{y}|\mathbb P^n(x,y)-\pi_y|$.}
\end{definition}

\begin{lemma}\label{98pkk0}
(\cite{9jko00})
\rm{For any given matrix~$P$, denote~$W=I-P$.  If there exists a constant~$\psi\in(0,1)$ such that~$\|P\|\le \psi$, then~$W$ is invertible and $\|W^{-1}\|\le(1-\|P\|)^{-1}\le(1-\psi)^{-1}.$}
\end{lemma}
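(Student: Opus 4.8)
The plan is to establish the result via the \emph{Neumann series} expansion of $W^{-1}$. The key observation is that the hypothesis $\|P\|\le\kappa<1$ forces the matrix series $\sum_{j=0}^\infty P^j$ to converge absolutely in the operator norm, and that its sum is precisely the inverse of $W=I-P$. So the whole argument reduces to (i) proving convergence of this series, (ii) identifying the limit as $W^{-1}$, and (iii) reading off the norm bound term by term.

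First I would record the submultiplicativity of the $2$-norm, which gives $\|P^j\|\le\|P\|^j\le\kappa^j$ for every $j\ge0$. Since $\kappa<1$, the scalar series $\sum_{j=0}^\infty\kappa^j$ converges to $(1-\kappa)^{-1}$, so the partial sums $S_n=\sum_{j=0}^n P^j$ form a Cauchy sequence in the space of matrices equipped with the operator norm. As this space is finite-dimensional and hence complete, $S_n$ converges to some matrix $S=\sum_{j=0}^\infty P^j$. Next I would verify that $S=W^{-1}$ by a telescoping computation: the partial sum satisfies $(I-P)S_n=S_n(I-P)=I-P^{n+1}$, and since $\|P^{n+1}\|\le\kappa^{n+1}\to0$, letting $n\to\infty$ yields $(I-P)S=S(I-P)=I$. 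Hence $W$ is invertible with $W^{-1}=S$.

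Finally, for the norm bound I would apply the triangle inequality together with submultiplicativity term by term, obtaining $\|W^{-1}\|=\|\sum_{j=0}^\infty P^j\|\le\sum_{j=0}^\infty\|P\|^j=(1-\|P\|)^{-1}$, where the last equality uses $\|P\|\le\kappa<1$. Since $x\mapsto(1-x)^{-1}$ is increasing on $[0,1)$ and $\|P\|\le\kappa$, this is further bounded by $(1-\kappa)^{-1}$, which gives the full chain $\|W^{-1}\|\le(1-\|P\|)^{-1}\le(1-\kappa)^{-1}$ as claimed.

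There is no genuine obstacle here; the argument is entirely routine. The only point warranting a word of care is the justification of convergence of the operator-valued series, which rests on completeness of the matrix space and submultiplicativity of the norm. Once these standard facts are in hand, invertibility of $W$ and the quantitative bound both follow immediately from the telescoping identity and the geometric-series estimate.
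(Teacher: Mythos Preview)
Your proof is correct and is precisely the standard Neumann-series argument for this classical fact. Note, however, that the paper does not actually prove this lemma: it is stated in Appendix~A as a known result and simply cited from \cite{9jko00} (Zhou and Doyle, \emph{Essentials of Robust Control}), so there is no ``paper's own proof'' to compare against---your argument just supplies the routine justification that the authors omitted.
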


\begin{lemma}\label{guotime}
(\cite{guo})
\rm{Assume that~$\{s_1(k),k\ge0\}$ and $\{s_2(k),k\ge0\}$ are real  sequences satisfying $0\le s_2(k)<1$, $\sum_{k=0}^\infty s_2(k)=\infty$ and $\lim_{k\to \infty}\frac{s_1(k)}{s_2(k)}$ exists. Then
\bann
\lim_{k\to\infty}\sum_{i=1}^{k}s_1(i)\prod_{l=i+1}^k(1-s_2(l))=\lim_{k\to\infty}\frac{s_1(k)}{s_2(k)}.
\eann}
\end{lemma}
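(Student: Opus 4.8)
The plan is to recognize the left-hand side as a regular (Toeplitz) weighted average of the ratios $c_i:=s_1(i)/s_2(i)$ and to transfer the limit by a Silverman--Toeplitz type argument. Set $L=\lim_{k\to\infty}s_1(k)/s_2(k)$. Since this limit exists, $s_2(k)>0$ for all large $k$ (division by zero cannot occur infinitely often), so writing $s_1(i)=c_i\,s_2(i)$ turns the summand into $s_1(i)\prod_{l=i+1}^k(1-s_2(l))=c_i\,w_{k,i}$ with the nonnegative weights $w_{k,i}:=s_2(i)\prod_{l=i+1}^k(1-s_2(l))$; indices $i$ with $s_2(i)=0$ contribute nothing. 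Thus the object of interest is $S_k=\sum_{i=1}^k c_i\,w_{k,i}$, and it suffices to prove $S_k\to L$.

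First I would record the telescoping identity $w_{k,i}=\prod_{l=i+1}^k(1-s_2(l))-\prod_{l=i}^k(1-s_2(l))$, which is immediate from $1-(1-s_2(i))=s_2(i)$. Summing over $i=1,\dots,k$ collapses the row sum:
\bann
&&\sum_{i=1}^k w_{k,i}=1-\prod_{l=1}^k(1-s_2(l)).
\eann
Because $0\le s_2(l)<1$ and $\sum_l s_2(l)=\infty$, the estimate $\log\prod_{l=1}^k(1-s_2(l))\le-\sum_{l=1}^k s_2(l)\to-\infty$ forces $\prod_{l=1}^k(1-s_2(l))\to0$, so the row sums satisfy $\sum_{i=1}^k w_{k,i}\to1$; in particular $\sup_k\sum_{i=1}^k w_{k,i}<\infty$. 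The same product estimate shows that for each fixed $i$ one has $w_{k,i}\to0$ as $k\to\infty$.

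These three facts (row sums tending to $1$, uniformly bounded, and each column vanishing) are exactly what a regular summation method requires, and the conclusion then follows by a standard $\varepsilon$-splitting. Given $\varepsilon>0$, choose $N$ with $|c_i-L|<\varepsilon$ for all $i>N$ and decompose
\bann
&&S_k-L\sum_{i=1}^k w_{k,i}=\sum_{i\le N}(c_i-L)\,w_{k,i}+\sum_{N<i\le k}(c_i-L)\,w_{k,i}.
\eann
The first sum is finite and tends to $0$ as $k\to\infty$ since each $w_{k,i}\to0$; the second is bounded in absolute value by $\varepsilon\sum_{i=1}^k w_{k,i}\le2\varepsilon$ for large $k$. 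Combining this with $L\sum_{i=1}^k w_{k,i}\to L$ yields $\limsup_{k\to\infty}|S_k-L|\le2\varepsilon$, and letting $\varepsilon\to0$ gives $S_k\to L$, which is the assertion.

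The computations are routine; the divergence hypothesis $\sum_k s_2(k)=\infty$ is consumed precisely in forcing $\prod_l(1-s_2(l))\to0$, while the bound $0\le s_2(k)<1$ guarantees the weights are nonnegative so that the $\varepsilon$-splitting controls the tail. The only real obstacle is conceptual rather than computational: spotting the telescoping identity and recognizing $\{w_{k,i}\}$ as a summation kernel whose mass concentrates on the most recent indices, after which the limit transfer is immediate.
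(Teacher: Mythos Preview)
The paper does not supply its own proof of this lemma; it is quoted without argument from Guo's monograph (reference~\cite{guo}). Your proof is correct and is in fact the canonical one: the telescoping identity $w_{k,i}=\prod_{l=i+1}^k(1-s_2(l))-\prod_{l=i}^k(1-s_2(l))$ together with $\sum_l s_2(l)=\infty$ shows that $\{w_{k,i}\}$ is a regular (Silverman--Toeplitz) summation kernel, and the limit transfer follows by the standard $\varepsilon$-splitting.

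One small slip: your remark that ``indices $i$ with $s_2(i)=0$ contribute nothing'' is not literally true, since the original summand $s_1(i)\prod_{l=i+1}^k(1-s_2(l))$ need not vanish when $s_2(i)=0$ but $s_1(i)\neq0$. However, there are at most finitely many such indices (the existence of the limit of $s_1(k)/s_2(k)$ forces $s_2(k)>0$ eventually), and for each of them the product $\prod_{l=i+1}^k(1-s_2(l))\to0$, so their total contribution vanishes in the limit. This is a one-line patch and does not affect the argument.
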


\begin{lemma}(\cite{rb})\label{robbins}
\rm{Assume~that $\{x(k),\mathcal F(k)\},~\{\a(k),\mathcal F(k)\},~\{\b(k),\mathcal F(k)\}~\mathrm{and}~\{\gamma(k),\mathcal F(k)\}$ are all nonnegative adaptive sequences, satisfying
$$
\mathbb E[x(k+1)|\mathcal F(k)]\le(1+\a(k))x(k)-\b(k)+\gamma(k),k\ge0~\rm{a.s.}
$$
If~$\sum_{k=0}^\infty(\a(k)+\gamma(k))<\infty~\rm{a.s.}$, then~$x(k)$ converges to a finite random variable~\rm{a.s.} and~$\sum_{k=0}^\infty \b(k)<\infty~\rm{a.s.}$}
\end{lemma}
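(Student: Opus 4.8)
The plan is to prove this by the classical normalization-plus-localization argument: reduce the one-sided inequality to a genuine supermartingale and then invoke the nonnegative supermartingale convergence theorem, the only delicacy being that the hypotheses are almost-sure and not $L^1$.

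First I would remove the multiplicative factor $1+\a(k)$. Set $P(k)=\prod_{i=0}^{k-1}(1+\a(i))$ with $P(0)=1$; since $\sum_{k}\a(k)<\infty$ a.s., the products $P(k)$ increase to a finite limit $P_\infty\in[1,\infty)$ a.s. Because $P(k+1)=(1+\a(k))P(k)$ is $\mathcal F(k)$-measurable, dividing the hypothesis by $P(k+1)$ gives, with $x'(k)=x(k)/P(k)$, $\b'(k)=\b(k)/P(k+1)$ and $\g'(k)=\g(k)/P(k+1)$, the reduced inequality $\mathbb E[x'(k+1)|\mathcal F(k)]\le x'(k)-\b'(k)+\g'(k)$. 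Since $1\le P(k)\le P_\infty$, we have $\sum_k\g'(k)<\infty$ a.s., and $\sum_k\b'(k)<\infty$ if and only if $\sum_k\b(k)<\infty$, while $x(k)=P(k)x'(k)$ converges if and only if $x'(k)$ does. Thus it suffices to treat the case $\a\equiv0$.

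Next I would build the supermartingale. Writing $B(k)=\sum_{i=0}^{k-1}\b'(i)$ and $\Gamma(k)=\sum_{i=0}^{k-1}\g'(i)$ (both $\mathcal F(k)$-measurable), set $V(k)=x'(k)+B(k)-\Gamma(k)$. A one-line conditional-expectation computation using the reduced inequality gives $\mathbb E[V(k+1)|\mathcal F(k)]\le V(k)$, so $V(k)$ is a supermartingale. It is not nonnegative, only bounded below by $-\Gamma(\infty)$, where $\Gamma(\infty)=\sum_i\g'(i)$ is finite a.s. but need not be integrable; this is exactly where the purely almost-sure hypotheses bite, and it is the main obstacle.

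The hard part, then, is to pass from a.s. summability to a.s. convergence without $L^1$ control, which I would handle by localization. For $m\ge1$ define the stopping time $\tau_m=\inf\{k\ge0:\Gamma(k+1)>m\}$; since $\Gamma(\infty)<\infty$ a.s., one has $\tau_m\uparrow\infty$ a.s. The stopped process $V(k\wedge\tau_m)$ is again a supermartingale and satisfies $V(k\wedge\tau_m)\ge-\Gamma(k\wedge\tau_m)\ge-m$, so $V(k\wedge\tau_m)+m$ is a nonnegative supermartingale and converges a.s. by the nonnegative supermartingale convergence theorem. On $\{\tau_m=\infty\}$ the stopped process equals $V(k)$, and since $\bigcup_m\{\tau_m=\infty\}$ has full measure, $V(k)$ converges a.s. to a finite limit. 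Finally, because $\Gamma(k)$ converges a.s., $x'(k)+B(k)=V(k)+\Gamma(k)$ converges a.s.; here $B(k)$ is nondecreasing and $x'(k)\ge0$, so finiteness of the limit forces $B(k)$ to converge (that is, $\sum_k\b'(k)<\infty$ a.s.) and hence $x'(k)$ to converge. Undoing the normalization of the first step then yields both that $x(k)$ converges a.s. to a finite random variable and that $\sum_k\b(k)<\infty$ a.s.
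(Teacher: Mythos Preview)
Your argument is correct and is essentially the classical proof of the Robbins--Siegmund theorem: normalize away the factor $(1+\alpha(k))$, form the supermartingale $V(k)=x'(k)+B(k)-\Gamma(k)$, localize by the predictable stopping times $\tau_m=\inf\{k:\Gamma(k+1)>m\}$ to force a deterministic lower bound, and then apply the nonnegative supermartingale convergence theorem. The only points worth double-checking are that $\Gamma(k+1)$ is indeed $\mathcal F(k)$-measurable (so that $\tau_m$ is a stopping time) and that the stopped process inherits the supermartingale property; both hold because $\gamma'(k)$ is $\mathcal F(k)$-adapted.

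As for comparison with the paper: there is nothing to compare. The paper does not prove this lemma at all; it simply quotes it from the original Robbins--Siegmund reference (their citation \cite{rb}) as a known tool in Appendix~A and uses it once, in the proof of Theorem~\ref{almostsure}, to conclude almost-sure convergence of $\|e(mh)\|^2$. So you have supplied a full proof where the authors only gave a citation, and the proof you give is the standard one.
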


For the subsequent Lemmas \ref{lyapunov} and \ref{holder}, the readers may be referred to Theorem
6.4 and its next paragraph in Ch. 6 of \cite{Foundatio}.
\begin{lemma}(Conditional Lyapunov inequality)\label{lyapunov}
\rm{Denote the probability space by~$(\Psi,\mathcal F,P)$. Let~$\mathcal F_1$ be a sub $\sigma-$algebra of $\mathcal F$
and~$\xi$ be a random variable on~$(\Psi,\mathcal F,P)$.  Then
$(\mathbb E[|\xi|^s|\mathcal F_1])^{\frac{1}{s}}\le(\mathbb E[|\xi|^t|\mathcal F_1])^{\frac{1}{t}}~\mathrm{a.s.},~0<s<t.$}
\end{lemma}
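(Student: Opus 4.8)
The plan is to obtain the conditional Lyapunov inequality as an immediate consequence of the conditional Jensen inequality applied to a power function. First I would introduce the map $\phi(u)=u^{t/s}$ on $[0,\infty)$; since $0<s<t$ the exponent satisfies $t/s>1$, so $\phi$ is convex, which is precisely the property the argument exploits. The natural random variable to feed into $\phi$ is the nonnegative quantity $|\xi|^s$, chosen so that $\phi(|\xi|^s)=|\xi|^t$ recovers the higher moment on the right-hand side.

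Next I would invoke the conditional form of Jensen's inequality relative to the sub-$\sigma$-algebra $\mathcal F_1$: for a convex $\phi$ and an integrable nonnegative random variable $\eta$ one has $\phi(\mathbb E[\eta|\mathcal F_1])\le\mathbb E[\phi(\eta)|\mathcal F_1]$ a.s. Taking $\eta=|\xi|^s$ yields
$$\big(\mathbb E[|\xi|^s|\mathcal F_1]\big)^{t/s}\le\mathbb E[|\xi|^t|\mathcal F_1]~\mathrm{a.s.}$$
Finally, because $u\mapsto u^{1/t}$ is strictly increasing on $[0,\infty)$, raising both sides to the power $1/t$ preserves the inequality and gives $\big(\mathbb E[|\xi|^s|\mathcal F_1]\big)^{1/s}\le\big(\mathbb E[|\xi|^t|\mathcal F_1]\big)^{1/t}$ a.s., which is the claim. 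The integrability hypothesis needed is $\mathbb E[|\xi|^t|\mathcal F_1]<\infty$ a.s.; this also forces $\mathbb E[|\xi|^s|\mathcal F_1]<\infty$ a.s. through the elementary bound $|\xi|^s\le1+|\xi|^t$, so both conditional expectations are well defined and finite.

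The main obstacle is not the algebra above but the measure-theoretic justification of the conditional Jensen inequality in the almost-sure sense, since the supporting-line inequality for a convex function must be arranged to hold off a single null set rather than a separate null set for each argument. I would handle this by representing $\phi$ as a countable supremum of affine minorants, $\phi(u)=\sup_{n}(a_nu+b_n)$, with $(a_n,b_n)$ ranging over supporting lines at rational points; applying monotonicity and linearity of conditional expectation to each affine minorant gives $a_n\mathbb E[\eta|\mathcal F_1]+b_n\le\mathbb E[\phi(\eta)|\mathcal F_1]$ a.s. for every $n$, and taking the supremum over the countable index set then produces $\phi(\mathbb E[\eta|\mathcal F_1])\le\mathbb E[\phi(\eta)|\mathcal F_1]$ a.s. on one common null set. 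Since the paper already cites \cite{Foundatio} for this lemma, I would alternatively just quote the conditional Jensen inequality from that reference and reduce the entire proof to the two displayed substitutions above.
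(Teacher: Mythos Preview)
Your argument via conditional Jensen is correct and is the standard route. Note, however, that the paper does not actually prove this lemma: it merely states it and refers the reader to Theorem~6.4 and the paragraph following it in \cite{Foundatio}. Since the same citation is given for both this lemma and the conditional H\"older inequality (Lemma~\ref{holder}), the intended derivation in the reference is presumably to obtain Lyapunov from H\"older by taking $\eta\equiv1$, $p=t/s$, and $\xi$ replaced by $|\xi|^s$, which gives $\mathbb E[|\xi|^s|\mathcal F_1]\le(\mathbb E[|\xi|^t|\mathcal F_1])^{s/t}$ directly. Your Jensen-based approach is equivalent in spirit and arguably cleaner, since it avoids invoking H\"older as an intermediate step; either route is acceptable here.
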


\begin{lemma}(Conditional~H\"{o}lder inequality)\label{holder}
\rm{Denote the probability space~$(\Psi,\mathcal F,P)$.  Let~$\mathcal F_1$ be a sub $\sigma-$algebra of $\mathcal F$. Let~$\xi$ and~$\eta$ be two random variables on~$(\Psi,\mathcal F,P)$. Let constants~$p\in(1,\infty),~q\in(1,\infty)$ and~$1/p+1/q=1$. If~$\mathbb E[|\xi|^p]<\infty$ and$~\mathbb E[|\eta|^q]<\infty$, then
$\mathbb E[|\xi\eta||\mathcal F_1]\le(\mathbb E[|\xi|^p|\mathcal F_1])^{\frac{1}{p}}(\mathbb E[|\eta|^q|\mathcal F_1])^{\frac{1}{q}}~\rm{a.s.}$}
\end{lemma}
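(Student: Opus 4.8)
The plan is to mimic the classical (unconditional) proof of H\"{o}lder's inequality, with the conditional $L^p$ and $L^q$ ``norms'' playing the role of the ordinary norms, and to reduce everything to the pointwise Young inequality $ab\le a^p/p+b^q/q$ for $a,b\ge0$. First I would introduce the $\mathcal F_1$-measurable random variables $A=(\mathbb E[|\xi|^p|\mathcal F_1])^{1/p}$ and $B=(\mathbb E[|\eta|^q|\mathcal F_1])^{1/q}$. Since $\mathbb E[|\xi|^p]<\infty$ and $\mathbb E[|\eta|^q]<\infty$, the conditional expectations are finite a.s., so $A$ and $B$ are nonnegative, finite and $\mathcal F_1$-measurable; moreover the ordinary H\"{o}lder inequality gives $\mathbb E[|\xi\eta|]\le(\mathbb E[|\xi|^p])^{1/p}(\mathbb E[|\eta|^q])^{1/q}<\infty$, so $|\xi\eta|$ is integrable and $\mathbb E[|\xi\eta||\mathcal F_1]$ is finite a.s. With this notation the target becomes exactly $\mathbb E[|\xi\eta||\mathcal F_1]\le AB$ a.s.

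On the event $E=\{A>0,\ B>0\}\in\mathcal F_1$, I would apply Young's inequality pointwise with the (finite) choices $a=|\xi|/A$ and $b=|\eta|/B$, giving $|\xi\eta|/(AB)\le |\xi|^p/(pA^p)+|\eta|^q/(qB^q)$ on $E$. Multiplying by the $\mathcal F_1$-measurable indicator $\mathbf 1_E$ and taking $\mathbb E[\,\cdot\,|\mathcal F_1]$, the key step is that the $\mathcal F_1$-measurable factors $\mathbf 1_E/(AB)$, $\mathbf 1_E/(pA^p)$, $\mathbf 1_E/(qB^q)$ pull out of the conditional expectation (``taking out what is known''; legitimate here since all integrands are nonnegative). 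The right-hand side then collapses to $\mathbf 1_E\big(A^{-p}\mathbb E[|\xi|^p|\mathcal F_1]/p+B^{-q}\mathbb E[|\eta|^q|\mathcal F_1]/q\big)=\mathbf 1_E(1/p+1/q)=\mathbf 1_E$, whence $\mathbf 1_E\,\mathbb E[|\xi\eta||\mathcal F_1]\le \mathbf 1_E\,AB$, i.e. the claimed bound holds on $E$.

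It then remains to dispose of the degenerate event $E^c=\{A=0\}\cup\{B=0\}$. On $\{A=0\}$ we have $\mathbb E[|\xi|^p|\mathcal F_1]=0$; integrating this over any $\mathcal F_1$-measurable subset of $\{A=0\}$ forces $|\xi|=0$ a.s. there, so $\xi\eta=0$ a.s. and both sides of the inequality vanish, and the case $\{B=0\}$ is symmetric; combining with the estimate on $E$ gives the inequality a.s. on all of $\Omega$. The only genuinely delicate points -- and hence the main obstacle -- are the careful handling of this vanishing-norm event (to avoid division by zero) together with the measurability and pull-out justification on $E$; the remainder is the routine Young-inequality computation. If one prefers to avoid the case split entirely, an alternative is to run the argument with $A+\e$ and $B+\e$ in place of $A$ and $B$ and then let $\e\downarrow0$ using conditional monotone convergence, though I expect the direct case analysis above to be the cleaner route.
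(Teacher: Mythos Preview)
Your argument is correct and is in fact the standard proof of the conditional H\"{o}lder inequality: normalize by the $\mathcal F_1$-measurable quantities $A=(\mathbb E[|\xi|^p|\mathcal F_1])^{1/p}$ and $B=(\mathbb E[|\eta|^q|\mathcal F_1])^{1/q}$, apply Young's inequality pointwise, pull the $\mathcal F_1$-measurable normalizing factors out of the conditional expectation, and handle the null set $\{A=0\}\cup\{B=0\}$ separately. The delicate points you flag (division by zero, measurability, pull-out for nonnegative integrands) are exactly the right ones, and your treatment of them is sound.

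However, there is nothing to compare against: the paper does not prove this lemma. It is stated in Appendix~A as a standard auxiliary result, and the sentence preceding Lemmas~\ref{lyapunov} and~\ref{holder} simply refers the reader to Theorem~6.4 (and the paragraph following it) in Chapter~6 of Kallenberg, \emph{Foundations of Modern Probability}. So your proposal supplies strictly more than the paper does; what you have written is essentially the argument one finds in the cited reference.
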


\begin{lemma}\label{fanshuguji}
\rm{For any random matrix $A\in\mathbb R^{m\times n}$, $\|\mathbb E[AA^T]\|\le n\|\mathbb E[A^TA]\|$.}
\end{lemma}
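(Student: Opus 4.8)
The plan is to reduce the spectral-norm inequality to a statement about traces, using the elementary sandwich that for any $k\times k$ positive semidefinite matrix $M$ one has $\|M\|\le\mathrm{Tr}(M)\le k\|M\|$, which holds because $\|M\|=\lambda_{\max}(M)$ and $\mathrm{Tr}(M)$ is the sum of the $k$ nonnegative eigenvalues of $M$. The structural observation that makes this available is that both $\mathbb E[AA^T]$ and $\mathbb E[A^TA]$ are positive semidefinite: $AA^T$ and $A^TA$ are positive semidefinite at every sample point, and the expectation of a positive semidefinite matrix is positive semidefinite (for any vector $w$, $w^T\mathbb E[M]w=\mathbb E[w^TMw]\ge0$), the relevant expectations being finite as implicitly assumed in the statement.

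First I would record the identity that bridges the two sides. For every realization of $A$ one has $\mathrm{Tr}(AA^T)=\mathrm{Tr}(A^TA)=\sum_{i,j}A_{ij}^2$, so by linearity of the trace together with linearity of the expectation, $\mathrm{Tr}(\mathbb E[AA^T])=\mathbb E[\mathrm{Tr}(AA^T)]=\mathbb E[\mathrm{Tr}(A^TA)]=\mathrm{Tr}(\mathbb E[A^TA])$. This equality links the $m\times m$ matrix $\mathbb E[AA^T]$ and the $n\times n$ matrix $\mathbb E[A^TA]$ through their common trace.

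Then I would chain the two ends of the sandwich, applied to the appropriate matrix on each side:
\[
\|\mathbb E[AA^T]\|\le\mathrm{Tr}(\mathbb E[AA^T])=\mathrm{Tr}(\mathbb E[A^TA])\le n\,\|\mathbb E[A^TA]\|,
\]
where the first inequality uses only $\|M\|\le\mathrm{Tr}(M)$ on the $m\times m$ matrix (no dimension factor appears here), and the last inequality uses $\mathrm{Tr}(M)\le n\|M\|$ on the $n\times n$ matrix. The only point requiring care — the ``obstacle,'' such as it is — is the dimension bookkeeping: the factor $n$ must arise from applying $\mathrm{Tr}(M)\le n\|M\|$ to $\mathbb E[A^TA]$, the $n$-dimensional factor, while on the $m\times m$ side one must invoke the dimension-free bound $\|M\|\le\mathrm{Tr}(M)$ so that no spurious factor of $m$ enters. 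No probabilistic machinery beyond positivity of the expectation of a positive semidefinite matrix is needed.
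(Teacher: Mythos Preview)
Your proof is correct and follows essentially the same route as the paper's: the paper also writes $\|\mathbb E[AA^T]\|=\lambda_{\max}(\mathbb E[AA^T])\le\mathrm{Tr}(\mathbb E[AA^T])=\mathrm{Tr}(\mathbb E[A^TA])\le n\lambda_{\max}(\mathbb E[A^TA])=n\|\mathbb E[A^TA]\|$. Your version is slightly more detailed in justifying positive semidefiniteness and the dimension bookkeeping, but the argument is the same.
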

\begin{proof}
By the properties of matrix trace, we have
$\|\mathbb E[AA^T]\|=\lambda_{\max}(\mathbb E[AA^T])\le \mathrm{Tr}(\mathbb E[AA^T])=\mathrm{Tr}(\mathbb E[A^TA])\le n\lambda_{\max}(\mathbb E[A^TA])
=n\|\mathbb E[A^TA]\|$.
\end{proof}

\begin{lemma}
\label{keyconnectivitylemmaprepareadd}
\rm{Let $\mathcal{A}=[a_{ij}]_{N\times N}$ be a weighted adjacency matrix of an undirected graph with $N$ nodes and $\mathcal{L}$ be the associated Laplacian matrix. Let $x=[x^T_{1},...,x_{N}^T]^T\in\mathbb{R}^{Nn}$ be any given nonzero $Nn$-dimensional vector where $x_{i}\in\mathbb{R}^{n}$, $i=1,2,...,N$ and there exists $i\neq j$, such that $x_i\neq x_j$. If $a_{ij}\geq0$, $i,j=1,2,...,N$ and the graph is connected, then $x^T(\mathcal{L}\otimes I_{n})x>0$.}
\end{lemma}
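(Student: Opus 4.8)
The plan is to reduce the claim to the standard Laplacian quadratic-form identity and then exploit connectivity for the strict inequality. For the undirected graph the associated Laplacian is $\mathcal{L}=\mathcal{D}-\mathcal{A}$ with $\mathcal{D}=\mathrm{diag}(\sum_{j}a_{1j},\ldots,\sum_{j}a_{Nj})$ and symmetric nonnegative weights $a_{ij}=a_{ji}\ge0$. First I would prove the identity
\[
x^{T}(\mathcal{L}\otimes I_{n})x=\frac12\sum_{i=1}^{N}\sum_{j=1}^{N}a_{ij}\|x_{i}-x_{j}\|^{2}.
\]

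To establish this I would decompose $x$ into its coordinate slices: write $x^{(p)}=(x_{1}^{(p)},\ldots,x_{N}^{(p)})^{T}\in\mathbb{R}^{N}$ for the vector collecting the $p$-th entry of every block $x_{i}$, $p=1,\ldots,n$. The Kronecker structure gives $x^{T}(\mathcal{L}\otimes I_{n})x=\sum_{p=1}^{n}(x^{(p)})^{T}\mathcal{L}\,x^{(p)}$, and for each scalar vector the classical computation $y^{T}\mathcal{L}y=\frac12\sum_{i,j}a_{ij}(y_{i}-y_{j})^{2}$, which uses only the symmetry $a_{ij}=a_{ji}$ and $d_i=\sum_j a_{ij}$, applies verbatim. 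Summing over $p$ and recombining the per-coordinate squares into the block norms $\|x_i-x_j\|^2$ yields the displayed identity. Since every $a_{ij}\ge0$, each summand is nonnegative, so the quadratic form is at least $0$; this is the easy half.

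The substantive step is strict positivity, which I would obtain by contradiction using connectivity. Suppose $x^{T}(\mathcal{L}\otimes I_{n})x=0$. Then every term $a_{ij}\|x_i-x_j\|^2$ must vanish, so $x_i=x_j$ for every pair $\{i,j\}$ with $a_{ij}>0$, i.e. across every edge of the graph. Fixing any two nodes $i$ and $j$ and using connectivity to join them by an edge-path $i=v_0,v_1,\ldots,v_\ell=j$, the equalities $x_{v_{r}}=x_{v_{r+1}}$ chain together to force $x_i=x_j$; hence all blocks coincide. This contradicts the hypothesis that $x_i\neq x_j$ for some pair, so the form must be strictly positive. The main obstacle is merely the bookkeeping in the Kronecker identity; once that identity is in hand, nonnegativity is immediate and the connectivity/path argument is routine.
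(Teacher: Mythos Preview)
Your proposal is correct and follows essentially the same approach as the paper: both invoke the identity $x^{T}(\mathcal{L}\otimes I_{n})x=\tfrac12\sum_{i,j}a_{ij}\|x_i-x_j\|^2$ and then use connectivity together with the existence of distinct blocks to force strict positivity. Your version is simply more explicit, spelling out the coordinate-slice derivation of the identity and the edge-path argument that the paper leaves implicit.
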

\begin{proof}  By the definition of Laplacian matrix, we have $x^T(\mathcal{L}\otimes I_{n})x=\frac{1}{2}\sum_{i=1}^{N}\sum_{j=1}^{N}a_{ij}\|x_{i}-x_{j}\|^2$. Noting that there exists $i\neq j$, such that $x_i\neq x_j$ and the graph is connected, by $a_{ij}\geq0$, $i,j=1,2,...,N$,  we get $x^T(\mathcal{L}\otimes I_{n})x>0$.
\end{proof}

\section{proofs in Section \ref{jieguo}}
Let
\bnaa\label{gfkasdasad}
P(k)=I_{Nn}-D(k),
\enaa
where
\bnaa\label{gkexpre}
D(k)=b(k)\mathcal L_{\mathcal G(k)}\otimes I_n+a(k){\mathcal H}^T(k){\mathcal H}(k).
\enaa
The proof of Theorem~\ref{nodelay111} needs the following lemma.
\begin{lemma}\label{mnbhgyls12}
\rm{For the algorithm~(\ref{asapp}), if Condition \textbf{C1.a}, the conditions (b.1) and   (b.2) in Theorem \ref{nodelay111} hold, 
then
\bnaa\label{ksaopS1}
\lim_{k\to\infty}\|\mathbb E[\Phi_P(k,0)\Phi_P^T(k,0)]\|=0.
\enaa}
\end{lemma}
\begin{proof}
By (\ref{gfkasdasad}),  we have
\bnaa\label{lesisyifdiosjd}
&&\Phi^T_P((m+1)h-1,mh)\Phi_P((m+1)h-1,mh)\cr
&=&(I_{Nn}-D^T(mh))\cdots(I_{Nn}-D^T((m+1)h-1))\cr
&&\times(I_{Nn}-D((m+1)h-1))\cdots(I_{Nn}-D(mh)).
\enaa
Taking conditional expectation w.r.t.~$\mathcal F(mh-1)$ on both sides of the above, by the binomial expansion, we have
\bna\label{1712323}
&&~~~\|\mathbb E[\Phi^T_P((m+1)h-1,mh)\Phi_P((m+1)h-1,mh)|\mathcal F(mh-1)]\|\cr
&&=\|\mathbb E[(I_{Nn}-D^T(mh))\cdots(I_{Nn}-D^T((m+1)h-1))\cr
&&~\times(I_{Nn}-D((m+1)h-1))\cdots(I_{Nn}-D(mh))|\mathcal F(mh-1)]\|\cr
&&=\Big\|I_{Nn}-\sum_{k=mh}^{(m+1)h-1}\mathbb E[D^T(k)+D(k)|\mathcal F(mh-1)]\cr
&&~+\mathbb E[M_2(m)+\cdots+M_{2h}(m)|\mathcal F(mh-1)]\Big\|\cr
&&\le\Big\|I_{Nn}-\sum_{k=mh}^{(m+1)h-1}\mathbb E[D^T(k)+D(k)|\mathcal F(mh-1)]\Big\|\cr
&&~+\left\|\mathbb E[M_2(m)+\cdots+M_{2h}(m)|\mathcal F(mh-1)]\right\|.
\ena
Here,~$M_i(m),i=2,\cdots,2h$ represent the $i$-th order terms in the binomial expansion of $\Phi_P((m+1)h-1,mh)\Phi_P^T((m+1)h-1,mh)$.

Since the 2-norm of a symmetric matrix  is equal  to its spectral radius, by the definition of spectral radius, we have
\bna\label{a0xdrtg}
&&~~~\Big\|I_{Nn}-\sum_{k=mh}^{(m+1)h-1}\mathbb E[D(k)+D^T(k)|\mathcal F(mh-1)]\Big\|\cr
&&=\rho\Bigg(I_{Nn}-\sum_{k=mh}^{(m+1)h-1}\mathbb E[D(k)+D^T(k)|\mathcal F(mh-1)]\Bigg)\cr
&&=\max_{1\le i\le Nn}\Bigg|\lambda_i\Bigg(I_{Nn}-\sum_{k=mh}^{(m+1)h-1}\mathbb E[D(k)+D^T(k)|\mathcal F(mh-1)]\Bigg)\Bigg|\cr
&&=\max_{1\le i\le Nn}\Bigg|1-\lambda_i\Bigg(\sum_{k=mh}^{(m+1)h-1}\mathbb E[D(k)+D^T(k)|\mathcal F(mh-1)]\Bigg)\Bigg|.
\ena
Since both $a(k)$ and $b(k)$ tend to zero, by  the condition  (b.2),  we   know that there exists a positive integer~$m_1$, which is independent of the sample paths, such that
\bann\label{lambda1}
\lambda_i\Bigg(\sum_{k=mh}^{(m+1)h-1}\mathbb E[D(k)+D^T(k)|\mathcal F(mh-1)]\Bigg)\le1,~i=1,\cdots,Nn,~\forall~m\ge m_1~\rm{a.s.}
\eann
This together with   (\ref{1712323}) and~(\ref{a0xdrtg}) leads to
\bnaa\label{a098okmsadsasd}
&&~~~\|\mathbb E[\Phi_P^T((m+1)h-1,mh)\Phi_P((m+1)h-1,mh)|\mathcal F(mh-1)]\|\cr
&&\le1-\lambda_{\min}\Bigg(\sum_{k=mh}^{(m+1)h-1}\mathbb E[D(k)+D^T(k)|\mathcal F(mh-1)]\Bigg)\cr
&&~+\left\|\mathbb E[M_2(m)+\cdots+M_{2h}(m)|\mathcal F(mh-1)]\right\|,~\forall~m\ge m_1~\rm{a.s.}
\enaa
We next bound the two terms on the right side of the above. For the first term, by  the definitions of~$D(k)$ and~$\overline{\Lambda}_m^h$ and the condition (b.1),  we have
\bna\label{a098okssam1s}
&&~~~1-\lambda_{\min}\Bigg(\sum_{k=mh}^{(m+1)h-1}\mathbb E[D(k)+D^T(k)|\mathcal F(mh-1)]\Bigg)\cr
~&&=1-\lambda_{\min}\Bigg(\sum_{k=mh}^{(m+1)h-1}\mathbb E[2b(k)\widehat{\mathcal L}_{\mathcal G(k)}\otimes I_n+2a(k){\mathcal H}^T(k){\mathcal H}(k)|\mathcal F(mh-1)]\Bigg)\cr
~&&=1-2\overline{\Lambda}_m^h\cr
~&&\le1-c(m),~\forall~m\ge m_1~\rm{a.s.}
\enaa
By  Lemma \ref{lyapunov}  and the condition (b.2), it follows that
\bann\label{sadjqwweda}
&&\sup_{k\ge0}\mathbb E[\|\widetilde D(k)\|^i|\mathcal F(k-1)]\le\sup_{k\ge0}[\mathbb E[\|\widetilde D(k)\|^{2^h}|\mathcal F(k-1)]]^{\frac{i}{2^h}}\le\rho_0^i~\mathrm{a.s.},~2\le i\le 2^h,
\eann
where~$\widetilde D(k)=\mathcal L_{\mathcal G(k)}\otimes I_n+{\mathcal H}^T(k){\mathcal H}(k)$. Note that for any given random variable~$\xi$ and~$\sigma$-algebra~$\mathcal F_1\subseteq\mathcal F_2,$ it is true that
 \bnaa\label{klkjl}
 \mathbb E[\xi|\mathcal F_1]=\mathbb E[\mathbb E[\xi|\mathcal F_2]|\mathcal F_1].
 \enaa
We then have $$\mathbb E[\|\widetilde D(k)\|^l|\mathcal F(mh-1)]=\mathbb E[\mathbb E[\|\widetilde D(k)\|^l|\mathcal F(k-1)]|\mathcal F(mh-1)],~2\le l\le 2^h,~k\ge mh.$$
From  the  definitions of~$M_i(m),i=2,\cdots,2h$ and the above, by termwise multiplication and using
Lemma \ref{holder}   repeatedly, for the second term on the right side of (\ref{a098okmsadsasd}),  we have
\bnaa\label{adlfjoqw}
\|\mathbb E[M_2(m)+\cdots+M_{2h}(m)|\mathcal F(mh-1)]\|&\le& b^2(mh)\Bigg(\sum_{i=2}^{2h}\mathbb C_{2h}^i(\max\{1,\phi\}\rho_0)^i\Bigg)\cr
&=&b^2(mh)\a,
\enaa
where $\phi$ satisfies $a(k)\le \phi b(k)$,  $\a=(1+\max\{1,\phi\}\rho_0)^{2h}-1-2h\max\{1,\phi\}\rho_0$ and $\mathbb C_m^p$ denotes the combinatorial number of choosing $p$ elements from $m$ elements.
By~(\ref{a098okmsadsasd})-(\ref{adlfjoqw}), we have
\bna\label{0oooopw111}
&&~~~\|\mathbb E[\Phi_P^T((m+1)h-1,mh)\Phi_P((m+1)h-1,mh)|\mathcal F(mh-1)]\|\cr
~&&\le1-c(m)+b^2(mh)\a,~m\ge m_1~\rm{a.s.}
\ena

Denote $m_k=\lfloor \frac{k}{h} \rfloor$. By the properties of the conditional expectation,  Lemma \ref{fanshuguji} and (\ref{0oooopw111}), we have
\bnaa\label{xasoad}
&&~~~\|\mathbb E[\Phi_P(k,0)\Phi_P^T(k,0)]\|\cr
&&\le Nn\|\mathbb E[\Phi_P^T(k,0)\Phi_P(k,0)]\|\cr
&&=Nn\|\mathbb E[\Phi_P^T(m_kh-1,0)\Phi_P^T(k,m_kh)\Phi_P(k,m_kh)\Phi_P(m_kh-1,0) ]\|\cr
&&\le Nn\|\mathbb E[\Phi_P^T(m_kh-1,0)\|\Phi_P^T(k,m_kh)\Phi_P(k,m_kh)\|\Phi_P(m_kh-1,0) ]\|\cr
&&=Nn\|\mathbb E[\mathbb E[\Phi_P^T(m_kh-1,0)\|\Phi_P^T(k,m_kh)\Phi_P(k,m_kh)\|\Phi_P(m_kh-1,0) |\mathcal F(m_kh-1)]]\|\cr
&&=Nn\|\mathbb E[\Phi_P^T(m_kh-1,0)\mathbb E[\|\Phi_P^T(k,m_kh)\Phi_P(k,m_kh)\||\mathcal F(m_kh-1)]\cr
&&~\times\Phi_P(m_kh-1,0) ]\|,
\enaa
For any positive  integers   $m,n$ satisfying $0\leq m-n\le h-1,$ it follows from   the condition (b.2) that there exists a constant $\rho^{*}_{h}>0$ such that
\bnaa\label{xiaoyuh}
\|\mathbb E[\Phi_P^T(m,n)\Phi_P(m,n)|\mathcal F(n-1)]\|<\rho^{*}_{h} \mathrm{a.s.}
\enaa
By the above and (\ref{xasoad}), noting that $k-m_kh\le h-1$, we have
\bnaa\label{wjds}
&&~~~\|\mathbb E[\Phi_P(k,0)\Phi_P^T(k,0)]\|~~~~~~~~~~~~~~~~~~~~~~\cr
&&\le\rho^{*}_{h}Nn\|\mathbb E[\Phi_P^T(m_kh-1,0)\Phi_P(m_kh-1,0) ]\|\cr
&&=\rho^{*}_{h}Nn\|\mathbb E[\Phi_P^T(m_1h-1,0)\Phi_P^T(m_kh-1,m_1h)\Phi_P(m_kh-1,m_1h)\Phi_P(m_1h-1,0)]\|\cr
&&=\rho^{*}_{h}Nn\|\mathbb E[\mathbb E(\Phi_P^T(m_1h-1,0)\Phi_P^T(m_kh-1,m_1h)\cr
&&~\times\Phi_P(m_kh-1,m_1h)\Phi_P(m_1h-1,0)|\mathcal F(m_1h-1))]\|\cr
&&\le\rho^{*}_{h}Nn\|\mathbb E[\Phi_P^T(m_1h-1,0)\cr
&&~\times\|\mathbb E[\Phi_P^T(m_kh-1,m_1h)\Phi_P(m_kh-1,m_1h)|\mathcal F(m_1h-1)]\|\Phi_P(m_1h-1,0)]\|.
\enaa
By (\ref{klkjl}) and (\ref{0oooopw111}),  we have
\bnaa\label{lianchengji}
&&~~~\|\mathbb E[\Phi_P^T(m_kh-1,m_1h)\Phi_P(m_kh-1,m_1h)|\mathcal F(m_1h-1)]\|~~~~~~~~~~~~~~~~\cr
&&=\|\mathbb E[\Phi_P^T((m_k-1)h-1,m_1h)\Phi_P^T(m_kh-1,(m_k-1)h)\Phi_P(m_kh-1,(m_k-1)h)\cr
&&~~~~~~\times\Phi_P((m_k-1)h-1,m_1h)|\mathcal F(m_1h-1)]\|\cr
&&=\|\mathbb E[\mathbb E[\Phi_P^T((m_k-1)h-1,m_1h)\Phi_P^T(m_kh-1,(m_k-1)h)\Phi_P(m_kh-1,(m_k-1)h)\cr
&&~~~~~~\times\Phi_P((m_k-1)h-1,m_1h)|\mathcal F((m_k-1)h-1)]|\mathcal F(m_1h-1)]\|\cr
&&=\|\mathbb E[\Phi_P^T((m_k-1)h-1,m_1h)\cr
&&~~~~~~\times\mathbb E[\Phi_P^T(m_kh-1,(m_k-1)h)\Phi_P(m_kh-1,(m_k-1)h)|\mathcal F((m_k-1)h-1)]\cr
&&~~~~\times\Phi_P((m_k-1)h-1,m_1h)|\mathcal F(m_1h-1)]\|\cr
&&\le\|\mathbb E[\Phi_P^T((m_k-1)h-1,m_1h)\cr
&&~~~~\times\|\mathbb E[\Phi_P^T(m_kh-1,(m_k-1)h)\Phi_P(m_kh-1,(m_k-1)h)|\mathcal F((m_k-1)h-1)]\|\cr
&&~~~~\times\Phi_P((m_k-1)h-1,m_1h)|\mathcal F(m_1h-1)]\|\cr
&&\le[1-c(m_k-1)+b^2((m_k-1)h)\a]\cr
&&~~~\times\|\mathbb E[\Phi_P^T((m_k-1)h-1,m_1h)\Phi_P((m_k-1)h-1,m_1h)|\mathcal F(m_1h-1)]\|\cr
&&\le\prod_{s=m_1}^{m_k-1}[1-c(s)+b^2(sh)\a]~\rm{a.s.},
\enaa
which together with~(\ref{wjds}) leads to
\bnaa\label{limit1}
&&~~~\|\mathbb E[\Phi_P(k,0)\Phi_P^T(k,0)]\|\cr
&&\le\rho^{*}_{h}Nn\|\mathbb E[\Phi_P^T(m_1h-1,0)\Phi_P(m_1h-1,0)]\|\prod_{s=m_1}^{m_k-1}[1-c(s)+b^2(sh)\a].
\enaa
By (\ref{cmsequence}), we know that there exists a positive integer $m_2$ such that
\bnaa\label{sizeab}
b^2(mh)\a\le\frac{1}{2}c(m),~\forall~m\ge m_2,
\enaa
Let $m_3=\max\{m_2,m_1\}$ and $r_1=\prod_{s=m_1}^{m_3-1}[1-c(s)+b^2(sh)\a]$. By (\ref{cmsequence}) and (\ref{sizeab}),
we have
\bnaa\label{lesisyu42}
&&~~~\lim_{k\to\infty}\prod_{s=m_1}^{m_k-1}[1-c(s)+b^2(sh)\a]\cr
&&\le\lim_{k\to\infty}r_1\prod_{s=m_3}^{m_k-1}[1-\frac{1}{2}c(s)]\cr
&&\le\lim_{k\to\infty}r_1\exp\Big(-\frac{1}{2}\sum_{s=m_3}^{m_k-1}c(s)\Big)\cr
&&=r_1\exp\Big(-\frac{1}{2}\sum_{s=m_3}^{\infty}c(s)\Big)=0.
\enaa

Since $\|\mathbb E[\Phi_P^T(m_1h-1,0)\Phi_P(m_1h-1,0)]\|<\infty$ by the condition (b.2),  (\ref{limit1}) and (\ref{lesisyu42}), we have (\ref{ksaopS1}). The lemma is proved.
\end{proof}

\vskip 0.2cm

\begin{proof}[\textbf{Proof of Theorem~\ref{nodelay111}}]
If $\lambda_{ji}(k)=0$ a.s., $\forall\ j,i\in\mathcal V$, $\forall\ k\ge0$, then by (\ref{zz9qjasoda}), we have
\bnaa\label{s1s121}
e(k+1)&=&P(k)e(k)+a(k){\mathcal H}^T(k)v(k)\cr
&=&\Phi_P(k,0)e(0)+\sum_{i=0}^ka(i)\Phi_P(k,i+1){\mathcal H}^T(i)v(i),~k\ge 0.
\enaa
By the above, we have
\bnaa\label{s1s11asfaxd}
\mathbb E[e(k+1)e^T(k+1)]&=&\mathbb E[\Phi_P(k,0)e(0)e^T(0)\Phi_P^T(k,0)]\cr
&&+\mathbb E\Bigg[\Phi_P(k,0)e(0)\sum_{i=0}^ka(i)[\Phi_P(k,i+1){\mathcal H}^T(i)v(i)]^T\Bigg]\cr
&&+\mathbb E\Bigg[\sum_{i=0}^ka(i)\Phi_P(k,i+1){\mathcal H}^T(i)v(i)[\Phi_P(k,0)e(0)]^T\Bigg]\cr
&&+\mathbb E\Bigg[\Bigg(\sum_{i=0}^ka(i)\Phi_P(k,i+1){\mathcal H}^T(i)v(i)\Bigg)\cr
&&\times\Bigg(\sum_{i=0}^ka(i)\Phi_P(k,i+1){\mathcal H}^T(i)v(i)\Bigg)^T\Bigg].
\enaa
By  Assumptions~\textbf{A1.a} and \textbf{A1.b},
we know that the second  and third terms on the right side of (\ref{s1s11asfaxd}) are both equal to zero. Moreover, from
\bnaa\label{jiaochaxiang}
\mathbb E[v(i)v^T(j)]=\mathbb E[\mathbb E[v(i)v^T(j)|\mathcal F(i-1)]]=\mathbb E[\mathbb E[v(i)|\mathcal F(i-1)]v^T(j)]=0,~\forall~i>j,
\enaa
we have
\bann
&&~~~\mathbb E\Bigg[\Bigg(\sum_{i=0}^ka(i)\Phi_P(k,i+1){\mathcal H}^T(i)v(i)\Bigg)\Bigg(\sum_{i=0}^ka(i)\Phi_P(k,i+1){\mathcal H}^T(i)v(i)\Bigg)^T\Bigg]\cr
&&=\mathbb E\Bigg[\sum_{i=0}^ka^2(i)\Phi_P(k,i+1){\mathcal H}^T(i)v(i)v^T(i){\mathcal H}(i)\Phi_P(k,i+1)\Bigg].
\eann
Substituting the above  into~(\ref{s1s11asfaxd}) and taking the 2-norm leads to
\bnaa\label{s1s}
&&~~~\|\mathbb E[e(k+1)e^T(k+1)]\|\cr
&&\le\|\mathbb E[\Phi_P(k,0)\Phi_P^T(k,0)]\|\|e(0)\|^2\cr
&&~~+\sum_{i=0}^ka^2(i)\|\mathbb E[\Phi_P(k,i+1){\mathcal H}^T(i)v(i)v^T(i){\mathcal H}(i)\Phi_P^T(k,i+1)]\|\cr
&&=\|\mathbb E[\Phi_P(k,0)\Phi_P^T(k,0)]\|\|e(0)\|^2\cr
&&~~+\sum_{i=k-3h}^{k}a^2(i)\|\mathbb E[\Phi_P(k,i+1){\mathcal H}^T(i)v(i)v^T(i){\mathcal H}(i)\Phi_P^T(k,i+1)]\|\cr
&&~~+\sum_{i=0}^{k-3h-1}a^2(i)\|\mathbb E[\Phi_P(k,i+1){\mathcal H}^T(i)v(i)v^T(i){\mathcal H}(i)\Phi_P^T(k,i+1)]\|.
\enaa
By Lemma~\ref{mnbhgyls12}, we know that the first term in the above converges to zero. For the second term in the above,
when $k-h\le i< k$, we have by (\ref{xiaoyuh}) that  $\|\mathbb E[\Phi_P^T(k,i+1)\Phi_P(k,i+1)|\mathcal F(i)]\|\le \rho^{*}_{h}$ a.s.; when $k-2h\le i< k-h$,   it follows from  Lemma \ref{fanshuguji} and (\ref{xiaoyuh}) that
\bann
&&~~~\|\mathbb E[\Phi_P(k,i+1)\Phi_P^T(k,i+1)|\mathcal F(i)]\|\cr
&&\le Nn\|\mathbb E[\Phi_P^T(k,i+1)\Phi_P(k,i+1)|\mathcal F(i)]\|\cr
&&=Nn\|\mathbb E[\Phi_P^T(k-h,i+1)\Phi_P^T(k,k-h+1)\Phi_P(k,k-h+1)\Phi_P(k-h,i+1)|\mathcal F(i)]\|\cr
&&=Nn\|\mathbb E[\mathbb E[\Phi_P^T(k-h,i+1)\Phi_P^T(k,k-h+1)\cr
&&~~~~\times\Phi_P(k,k-h+1)\Phi_P(k-h,i+1)|\mathcal F(k-h)]|\mathcal F(i)]\|\cr
&&=Nn\|\mathbb E[\Phi_P^T(k-h,i+1)\mathbb E[\Phi_P^T(k,k-h+1)\cr
&&~~~~\times\Phi_P(k,k-h+1)|\mathcal F(k-h)]\Phi_P(k-h,i+1)|\mathcal F(i)]\|\cr
&&\le Nn\|\mathbb E[\Phi_P^T(k-h,i+1)\cr
&&~~~~\times\|\mathbb E[\Phi_P^T(k,k-h+1)\Phi_P(k,k-h+1)|\mathcal F(k-h)]\|\Phi_P(k-h,i+1)|\mathcal F(i)]\|\cr
&&\le Nn \rho^{*}_{h}\|\mathbb E[\Phi_P^T(k-h,i+1)\Phi_P(k-h,i+1)|\mathcal F(i)]\|\le Nn(\rho^{*}_{h})^2~\mathrm{a.s.};
\eann
when $k-3h\le i< k-2h$, similar to the above, we have $\|\mathbb E[\Phi_P(k,i+1)\Phi_P^T(k,i+1)|\mathcal F(i)]\|\le Nn(\rho^{*}_{h})^3~\mathrm{a.s.}$
Hence, by  Assumptions~\textbf{A1.a} and \textbf{A1.b}, we have  $$\sup_{k\ge0}\|\mathbb E[\Phi_P(k,i+1){\mathcal H}^T(i)v(i)v^T(i){\mathcal H}(i)\Phi_P^T(k,i+1)]\|<\infty,k-3h\le i\le k,~\mathrm{a.s.}$$ Then, noting that $a(k)$ decays to zero, the second term on the right side of (\ref{s1s}) tends to zero.

We next prove that the third term  on the right side of (\ref{s1s}) tends to zero.  Let $\widetilde m_i=\lceil \frac{i}{h}\rceil.$      We have
\bnaa\label{sarf3q9qsdka}
&&~~~\|\mathbb E[\Phi_P(k,i+1)\Phi_P^T(k,i+1)|\mathcal F(i)]\|\cr
&&\le Nn\|\mathbb E[\Phi_P^T(k,i+1)\Phi_P(k,i+1)|\mathcal F(i)]\|\cr
&&=Nn\|\mathbb E[\Phi_P^T(\widetilde m_{i+1}h-1,i+1)\Phi_P^T(m_kh-1,\widetilde m_{i+1}h)\Phi_P^T(k,m_kh)\cr
&&~\times\Phi_P(k,m_kh)\Phi_P(m_kh-1,\widetilde m_{i+1}h)\Phi_P(\widetilde m_{i+1}h-1,i+1)|\mathcal F(i)]\|\cr
&&=Nn\|\mathbb E[\mathbb E[\Phi_P^T(\widetilde m_{i+1}h-1,i+1)\Phi_P^T(m_kh-1,\widetilde m_{i+1}h)\Phi_P^T(k,m_kh)\Phi_P(k,m_kh)\cr
&&~\times\Phi_P(m_kh-1,\widetilde m_{i+1}h)\Phi_P(\widetilde m_{i+1}h-1,i+1)|\mathcal F(m_kh-1)]|\mathcal F(i)]\|\cr
&&=Nn\|\mathbb E[\Phi_P^T(\widetilde m_{i+1}h-1,i+1)\Phi_P^T(m_kh-1,\widetilde m_{i+1}h)\cr
&&~\times\mathbb E[\Phi_P^T(k,m_kh)\Phi_P(k,m_kh)|\mathcal F(m_kh-1)]\cr
&&~\times\Phi_P(m_kh-1,\widetilde m_{i+1}h)\Phi_P(\widetilde m_{i+1}h-1,i+1)|\mathcal F(i)]\|\cr
&&\le Nn\rho^{*}_{h}\|\mathbb E[\Phi_P^T(\widetilde m_{i+1}h-1,i+1)\Phi_P^T(m_kh-1,\widetilde m_{i+1}h)\cr
&&~\times\Phi_P(m_kh-1,\widetilde m_{i+1}h)\Phi_P(\widetilde m_{i+1}h-1,i+1)|\mathcal F(i)]\|, \ \rm{a.s.},
\enaa
where the first inequality follows by Lemma \ref{fanshuguji}, the second equality follows by  (\ref{klkjl}) and the last inequality follows by  (\ref{xiaoyuh}).
 Similarly to  (\ref{lianchengji}) in the proof of Lemma \ref{mnbhgyls12},  we have
\bann
\|\mathbb E[\Phi_P^T(m_kh-1,\widetilde m_{i+1}h)\Phi_P(m_kh-1,\widetilde m_{i+1}h)|\mathcal F(\widetilde m_{i+1}h-1)]\|\le \prod_{s=\widetilde m_{i+1}}^{m_k-1}[1-c(s)+b^2(sh)\a],
\eann
From the above (\ref{xiaoyuh}) and~(\ref{sarf3q9qsdka}), we have
\bnaa\label{asndoajd}
&&~~~\|\mathbb E[\Phi_P(k,i+1)\Phi_P^T(k,i+1)|\mathcal F(i)]\|\cr
&&\le Nn\rho^{*}_{h}\|\mathbb E[\Phi_P^T(\widetilde m_{i+1}h-1,i+1)\Phi_P^T(m_kh-1,\widetilde m_{i+1}h)\cr
&&~\times\Phi_P(m_kh-1,\widetilde m_{i+1}h)\Phi_P(\widetilde m_{i+1}h-1,i+1)|\mathcal F(i)]\|\cr
&&=Nn\rho^{*}_{h}\|\mathbb E[\mathbb E[\Phi_P^T(\widetilde m_{i+1}h-1,i+1)\Phi_P^T(m_kh-1,\widetilde m_{i+1}h)\cr
&&~\times\Phi_P(m_kh-1,\widetilde m_{i+1}h)\Phi_P(\widetilde m_{i+1}h-1,i+1)|\mathcal F(\widetilde m_{i+1}h-1)]|\mathcal F(i)]\|\cr
&&=Nn\rho^{*}_{h}\|\mathbb E[\Phi_P^T(\widetilde m_{i+1}h-1,i+1)\cr
&&~\times\mathbb E[\Phi_P^T(m_kh-1,\widetilde m_{i+1}h)\Phi_P(m_kh-1,\widetilde m_{i+1}h)|\mathcal F(\widetilde m_{i+1}h-1)]\cr
&&~\times\Phi_P(\widetilde m_{i+1}h-1,i+1)|\mathcal F(i)]\|\cr
&&\le Nn\rho^{*}_{h}\|\mathbb E[\Phi_P^T(\widetilde m_{i+1}h-1,i+1)\|\cr
&&~\times\mathbb E[\Phi_P^T(m_kh-1,\widetilde m_{i+1}h)\Phi_P(m_kh-1,\widetilde m_{i+1}h)|\mathcal F(\widetilde m_{i+1}h-1)]\|\cr
&&~\times\Phi_P(\widetilde m_{i+1}h-1,i+1)|\mathcal F(i)]\|\cr
&&\le Nn\rho^{*}_{h}\prod_{s=\widetilde m_{i+1}}^{m_k-1}[1-c(s)+b^2(sh)\a]\|\mathbb E[\Phi_P^T(\widetilde m_{i+1}h-1,i+1)\Phi_P(\widetilde m_{i+1}h-1,i+1)|\mathcal F(i)]\|\cr
&&\le Nn(\rho^{*}_{h})^2\prod_{s=\widetilde m_{i+1}}^{m_k-1}[1-c(s)+b^2(sh)\a],~ 0\leq i \leq k-3h-1,\ \rm{a.s.},
\enaa
By (\ref{asndoajd}), the condition (b.2), Assumptions~\textbf{A1.a} and \textbf{A1.b}, it follows that
\bann
&&~~~\|\mathbb E[\Phi_P(k,i+1){\mathcal H}^T(i)v(i)v^T(i){\mathcal H}(i)\Phi_P^T(k,i+1)]\|\cr
&&=\|\mathbb E[\mathbb E[\Phi_P(k,i+1){\mathcal H}^T(i)v(i)v^T(i){\mathcal H}(i)\Phi_P^T(k,i+1)|\mathcal F(i)]]\|\cr
&&\le\|\mathbb E[\|{\mathcal H}^T(i)v(i)v^T(i){\mathcal H}(i)\|\mathbb E[\Phi_P(k,i+1)\Phi_P^T(k,i+1)|\mathcal F(i)]]\|\cr
&&\le\mathbb E[\|{\mathcal H}^T(i)v(i)v^T(i){\mathcal H}(i)\|\|\mathbb E[\Phi_P(k,i+1)\Phi_P^T(k,i+1)|\mathcal F(i)]\|]\cr
&&\le Nn (\rho^{*}_{h})^2\mathbb E[\|{\mathcal H}^T(i)v(i)v^T(i){\mathcal H}(i)\|]\prod_{s=\widetilde m_{i+1}}^{m_k-1}[1-c(s)+b^2(sh)\a]\cr
&&\le Nn\beta_v\rho_0(\rho^{*}_{h})^2\prod_{s=\widetilde m_{i+1}}^{m_k-1}[1-c(s)+b^2(sh)\a]\cr
&&\le Nn\beta_v\rho_0(\rho^{*}_{h})^2\prod_{s=\widetilde m_{i+1}}^{m_k-1}[1-\frac{1}{2}c(s)]~, m_3h-1\le i\le k-3h-1.
\eann
By the above, we have
\bnaa\label{genju28}
&&~~~\sum_{i=0}^{k-3h-1}a^2(i)\|\mathbb E[\Phi_P(k,i+1){\mathcal H}^T(i)v(i)v^T(i){\mathcal H}(i)\Phi_P^T(k,i+1)]\|\cr
&&=\sum_{i=0}^{m_3h-2}a^2(i)\|\mathbb E[\Phi_P(k,i+1){\mathcal H}^T(i)v(i)v^T(i){\mathcal H}(i)\Phi_P^T(k,i+1)]\|\cr
&&~~+\sum_{i=m_3h-1}^{k-3h-1}a^2(i)\|\mathbb E[\Phi_P(k,i+1){\mathcal H}^T(i)v(i)v^T(i){\mathcal H}(i)\Phi_P^T(k,i+1)]\|\cr
&&\le\sum_{i=0}^{m_3h-2}a^2(i)\|\mathbb E[\Phi_P(k,i+1)\Phi_P^T(k,i+1)\mathbb E[\|{\mathcal H}(i)\|^2\|v(i)\|^2\||\mathcal F(i)]]\|\cr
&&~~+ Nn\beta_v\rho_0(\rho^{*}_{h})^2\sum_{i=m_3h-1}^{k-3h-1}a^2(i)\prod_{s=\widetilde m_{i+1}}^{m_k-1}[1-\frac{1}{2}c(s)]\cr
&&\le\beta_v\rho_0\sum_{i=0}^{m_3h-2}a^2(i)\|\mathbb E[\Phi_P(k,i+1)\Phi_P^T(k,i+1)]\|\cr
&&~~+ Nn\beta_v\rho_0(\rho^{*}_{h})^2\sum_{i=m_3h-1}^{k-3h-1}a^2(i)\prod_{s=\widetilde m_{i+1}}^{m_k-1}[1-\frac{1}{2}c(s)].
\enaa
By Lemma \ref{mnbhgyls12}, we know that $\lim_{k\to\infty}\|\mathbb E[\Phi_P(k,i+1)\Phi_P^T(k,i+1)]\|=0, 0\le i\le m_3h-2$. Then,
\bnaa\label{ap9sdiqsfna}
\lim_{k\to\infty}\beta_v\rho_0\sum_{i=0}^{m_3h-2}a^2(i)\|\mathbb E[\Phi_P(k,i+1)\Phi_P^T(k,i+1)]\|=0.
\enaa

By direct calculations, it follows that
\bna\label{sdijawdaSD}
&&~~~\sum_{i=m_3h-1}^{k-3h-1}a^2(i)\prod_{s=\widetilde m_{i+1}}^{m_k-1}[1-\frac{1}{2}c(s)]\cr
&&\le\sum_{i=0}^{k}a^2(i)\prod_{s=\widetilde m_{i+1}}^{m_k-1}[1-\frac{1}{2}c(s)]\cr
&&=\sum_{i=0}^{m_kh-1}a^2(i)\prod_{s=\widetilde m_{i+1}}^{m_k-1}[1-\frac{1}{2}c(s)]+\sum_{i=m_kh}^{k}a^2(i)\prod_{s=\widetilde m_{i+1}}^{m_k-1}[1-\frac{1}{2}c(s)]\cr
&&=\sum_{i=0}^{m_k-1}\Bigg[\sum_{j=ih}^{(i+1)h-1}a^2(j)\Bigg]\prod_{s=\widetilde m_{i+1}}^{m_k-1}[1-\frac{1}{2}c(s)]+\sum_{i=m_kh}^{k}a^2(i)\prod_{s=\widetilde m_{i+1}}^{m_k-1}[1-\frac{1}{2}c(s)].
\ena
Since $a(k)$ decays to zero, it follows that
\bnaa\label{sadk9280df}
\lim_{k\to\infty}\sum_{i=m_kh}^{k}a^2(i)\prod_{s=\widetilde m_{i+1}}^{m_k-1}[1-\frac{1}{2}c(s)]=0.
\enaa
By (\ref{cmsequence}) and Condition \textbf{C1.a}, we have
$$\frac{\sum_{j=(m_k-1)h}^{m_kh-1}a^2(j)}{c(m_k-1)}\le\frac{ha^2((m_k-1)h)}{ c(m_k-1)}$$
and
$$
\lim_{k\to\infty}\frac{ha^2((m_k-1)h)}{ c(m_k-1)}=\lim_{k\to\infty}\frac{ha^2((m_k-1)h)}{b^2((m_k-1)h)}\frac{b^2((m_k-1)h)}{c(m_k-1)}=0.
$$
Then, from (\ref{cmsequence}) and  Lemma \ref{guotime}, we have
\bann
\lim_{k\to\infty}\sum_{i=0}^{m_k-1}\Bigg[\sum_{j=ih}^{(i+1)h-1}a^2(j)\Bigg]\prod_{s=\widetilde m_{i+1}}^{m_k-1}[1-\frac{1}{2}c(s+1)]=\lim_{k\to\infty}\frac{2\sum_{j=(m_k-1)h}^{m_kh-1}a^2(j)}{c(m_k-1)}=0.
\eann
By the above, (\ref{sdijawdaSD}) and (\ref{sadk9280df}), it follows that
\bnaa\label{sapdskmfs}
\lim_{k\to\infty}\sum_{i=m_3h-1}^{k-3h-1}a^2(i)\prod_{s=\widetilde m_{i+1}}^{m_k-1}[1-\frac{1}{2}c(s)]=0.
\enaa
 Then, by (\ref{genju28}), (\ref{ap9sdiqsfna}) and the above, we have
\bann\label{poepow}
\lim_{k\to\infty}\sum_{i=0}^{k-3h-1}a^2(i)\|\mathbb E[\Phi_P(k, i+1){\mathcal H}^T(i) v(i)v^T(i){\mathcal H}(i)\Phi_P^T(k, i+1)]\|=0.
\eann
Thus, the third term  on the right side of (\ref{s1s}) tends to zero.  We have
$\lim$$_{k\to\infty}$$\|$$\mathbb E$$[e(k)$$e^T(k)]$$\|$$=0$. Since~$\mathbb E$$\|e(k)\|^2$$\le$$ Nn\|$$\mathbb E[e(k)e^T(k)]\|, $  it follows that $\lim_{k\to\infty}\mathbb E\|e(k)\|^2=0$. The  algorithm~(\ref{asapp}) converges in mean square.

We next prove that the  algorithm~(\ref{asapp}) converges almost surely. By~(\ref{s1s121}), it follows that
\bann
e((m+1)h)&=&\Phi_P((m+1)h-1,mh)e(mh)\cr
&&~~+\sum_{k=mh}^{(m+1)h-1}a(k)\Phi_P((m+1)h-1,k+1)\mathcal H^T(k)v(k),m\ge0,
\eann
Taking the 2-norm and then conditional expectation w.r.t.~$\mathcal F(mh-1)$ on both sides of the above,   we have
\bann
&&~~~\mathbb E[\|e((m+1)h)\|^2|\mathcal F(mh-1)]\cr
&&=e^T(mh)\mathbb E[\Phi_P^T((m+1)h-1,mh)\Phi_P((m+1)h-1,mh)|\mathcal F(mh-1)]e(mh)\cr
&&~+\mathbb E\Bigg[\Bigg(\sum_{k=mh}^{(m+1)h-1}a(k)\Phi_P((m+1)h-1,k+1)\mathcal H^T(k)v(k)\Bigg)^T\cr
&&~\times\Bigg(\sum_{k=mh}^{(m+1)h-1}a(k)\Phi_P((m+1)h-1,k+1)\mathcal H^T(k)v(k)\Bigg)\Bigg|\mathcal F(mh-1)\Bigg]\cr
&&~+2e^T(mh)\mathbb E\Bigg[\Phi_P^T((m+1)h-1,mh)\cr
&&~\times\Bigg(\sum_{k=mh}^{(m+1)h-1}a(k)\Phi_P((m+1)h-1,k+1)\mathcal H^T(k)v(k)\Bigg)\Bigg|\mathcal F(mh-1)\Bigg].
\eann
By Lemma A.1 in~\cite{iwerw} and Assumptions \textbf{A1.a} and \textbf{A1.b}, the above can be written as
\bnaa\label{wjsd12}
&&~~~\mathbb E[\|e((m+1)h)\|^2|\mathcal F(mh-1)]\cr
&&=e^T(mh)\mathbb E[\Phi_P^T((m+1)h-1,mh)\Phi_P((m+1)h-1,mh)|\mathcal F(mh-1)]e(mh)\cr
&&~~~~+\sum_{k=mh}^{(m+1)h-1}a^2(k)\mathbb E[\|\Phi_P((m+1)h-1,k+1)\mathcal H^T(k)v(k)\|^2|\mathcal F(mh-1)].
\enaa
In the light of the condition (b.2),  Assumptions \textbf{A1.a} and \textbf{A1.b},   we know that there exists a constant~$\rho_4$ such that
\bann
\sum_{k=mh}^{(m+1)h-1}\mathbb E[\|\Phi_P((m+1)h-1,k+1)\mathcal H^T(k)v(k)\|^2|\mathcal F(mh-1)]\le \rho_4~\mathrm{a.s.},~\forall~m\ge0,
\eann
which together with (\ref{0oooopw111}) and (\ref{wjsd12}) gives
\bann
&&~~~\mathbb E[\|e((m+1)h)\|^2|\mathcal F(mh-1)]\cr
&&\le\|\mathbb E[\Phi_P^T((m+1)h-1,mh)\Phi_P((m+1)h-1,mh)|\mathcal F(mh-1)]\|\|e(mh)\|^2\cr
&&~~~~+a^2(mh)\sum_{k=mh}^{(m+1)h-1}\mathbb E[\|\Phi_P((m+1)h-1,k+1)\mathcal H^T(k)v(k)\|^2|\mathcal F(mh-1)]\cr
&&\le (1+b^2(mh)\a)\|e(mh)\|^2+a^2(mh)\rho_4~\rm{a.s.}
\eann
By  Lemma~\ref{robbins} and Condition \textbf{C1.c}, we  know that $\{e(mh),m\ge0\}$ converges almost surely, which, along with $\lim_{m\to0}\mathbb E\|e(mh)\|^2=0$  by Theorem~\ref{nodelay111}, gives  \bnaa\label{akosjdd90123i}
\lim_{m\to0}e(mh)=\textbf{0}_{Nn\times1}~\mathrm{a.s.}
\enaa
For arbitrarily small~$\epsilon>0$, by Markov inequality, we have
\bann
\mathbb P\{a(k)\|v(k)\|\ge\epsilon\}
\le\frac{a^2(k)\mathbb E\|v(k)\|^2}{\epsilon^2},~k\ge0,
\eann
which together with Assumption \textbf{A1.b}, Conditions \textbf{C1.a} and \textbf{C1.c} gives
\bann
\sum_{k=0}^\infty\mathbb P\{a(k)\|v(k)\|\ge\epsilon\}\le\frac{\sum_{k=0}^\infty a^2(k)\mathbb E\|v(k)\|^2}{\epsilon^2}\le\frac{\beta_v\sum_{k=0}^\infty a^2(k)}{\epsilon^2}<\infty.
\eann
Then by the Borel-Cantelli lemma, we have~$\mathbb P\{a(k)\|v(k)\|\ge\epsilon\ \mbox{i.o.}\}=0$, which means
\bnaa\label{ioio}
a(k)\|v(k)\|\to0,~k\to\infty~\rm{a.s.}
\enaa

By~(\ref{s1s121}), we have
\bnaa\label{sadfkse}
\|e(k)\|\le\|\Phi_P(k-1,m_kh)\|\|e(m_kh)\|+\sum_{i=m_kh}^{k-1}a(i)\|v(i)\|\|\Phi_P(k-1,i+1)\|\|\mathcal H^T(i)\|.
\enaa
By Assumption \textbf{A2.a} and noting~$0\le k-m_kh< h,$ we know that~$\sup_{k\ge0}\|\Phi_P(k-1,m_kh)\|<\infty~\mathrm{a.s.}$ and $\sup_{k\ge0}\|\Phi_P(k-1,i+1)\|\|\mathcal H^T(i)\|<\infty~\mathrm{a.s.},~m_kh\le i\le k-1$. Then by (\ref{akosjdd90123i})-(\ref{sadfkse}), we have $\lim_{k\to\infty}e(k)=\mathbf{0}_{Nn\times 1}~\mathrm{a.s.}$ The proof is completed.
\end{proof}

\begin{proof}[\textbf{Proof of Theorem~\ref{nodelay}}]
Since $\{{\mathcal G(k)},k\ge0\}\in\Gamma_1$,   $\mathbb E[\widehat {\mathcal L}_{\mathcal G(k)}|\mathcal F(k-1)] $ is positive semi-definite, which together with   $\mathbb E[\widehat {\mathcal L}_{\mathcal G(k)}|\mathcal F(mh-1)]=\mathbb E[\mathbb E[\widehat {\mathcal L}_{\mathcal G(k)}|\mathcal F(k-1)]|\mathcal F(mh-1)] $ leads to that  $\mathbb E[\widehat {\mathcal L}_{\mathcal G(k)}|\mathcal F(mh-1)]$ is positive semi-definite, $k\ge mh$. Let $c(m)=\min\{a((m+1)h),b((m+1)h)\}$. Then,  by Condition \textbf{C1.a} and the condition (c.1), we have
\bann\label{ooooooo}
&&~~~\overline{\Lambda}_m^h\cr
&&=\lambda_{\min}\Bigg[\sum_{k=mh}^{(m+1)h-1}\Bigg(b(k)\mathbb E[\widehat {\mathcal L}_{\mathcal G(k)}|\mathcal F(mh-1)]\otimes I_n+a(k)\mathbb E[{\mathcal H}^T(k){\mathcal H}(k)|\mathcal F(mh-1)]\Bigg)\Bigg]\cr
&&\ge\lambda_{\min}\Bigg[\sum_{k=mh}^{(m+1)h-1}\Bigg(b((m+1)h)\mathbb E[\widehat {\mathcal L}_{\mathcal G(k)}|\mathcal F(mh-1)]\otimes I_n\cr
&&~~+a((m+1)h)\mathbb E[{\mathcal H}^T(k){\mathcal H}(k)|\mathcal F(mh-1)]\Bigg)\Bigg]\cr
&&\ge c(m){{\Lambda_m^h}}\ge c(m)\theta.
\eann

Note that
$\sum_{m=0}^{\infty}a((m+1)h)\ge\frac{1}{h}\sum_{s=0}^\infty\sum_{i=(m+1)h}^{(m+2)h-1}a(i)=\frac{1}{h}\sum_{k=h}^\infty a(k)$.
This together with Conditions \textbf{C1.a} and \textbf{C1.b}, and $c(m)\ge\min\{a((m+1)h),a((m+1)h)/C_1 \}=\min\{1,1/C_1\}a((m+1)h)$
where $C_1\triangleq\sup_{k\ge0}\frac{a(k)}{b(k)}$, gives
\bna
\label{aspifjpazdf09}
\sum_{m=0}^\infty c(m)\geq\min\{1,1/C_1\}\sum_{m=0}^\infty a((m+1)h)\geq\frac{\min\{1,1/C_1\}}{h}\sum_{k=h}^\infty a(k)=\infty.
\ena
By Conditions \textbf{C1.a} and \textbf{C1.b}, we get
\bann\label{aspifjpazdf092}
\sup_{m\ge0}\frac{a(mh)}{c(m)}&=&\sup_{m\ge0}\frac{a(mh)}{a(mh+h)}\frac{a(mh+h)}
{c(m)}\cr
&\le&\sup_{m\ge0}\frac{a(mh)}{a(mh+h)}\frac{a(mh+h)}
{\min\{a(mh+h),\frac{1}{C_1}a(mh+h)\}}<\infty,
\eann
which together with Condition \textbf{C1.b} gives
\bnaa\label{aspifjpazdf093}
\lim_{m\to\infty}\frac{b^2(mh)}{c(m)}=\lim_{m\to\infty}\frac{b^2(mh)}{a(mh)}\frac{a(mh)}{c(m)}=0.
\enaa
Then, $c(m)$ satisfies $b^2(mh)=o(c(m))~\text{and}~\sum_{m=0}^{\infty}c(m)=\infty$. The proof is completed  by Theorem~\ref{nodelay111}.
\end{proof}

\vskip 0.2cm
\begin{proof}[\textbf{Proof of Corollary~\ref{marke}}]
By Assumption \textbf{A3} and the one-to-one correspondence among $\mathcal A_{\mathcal G(k)}$ and $ {\mathcal L}_{\mathcal G(k)}$, we know that ${\mathcal L}_{\mathcal G(k)}$ is a homogeneous and uniform ergodic Markov chain (See Definition \ref{sadkasdd90we}) with the unique stationary
distribution $\pi$. Denote the associated Laplacian matrix of $\mathcal A_l$ by ${\mathcal L}_l$ and $\widehat {\mathcal L}_l=\frac{\widehat {\mathcal L}_l+\widehat {\mathcal L}_l^T}{2}$, $l=1,2,...$
By the definition of ${{\Lambda_m^h}}$, we have
\bnaa\label{askfjak}
{{\Lambda_m^h}}&=&\lambda_{\min}\Bigg[\sum_{k=mh}^{(m+1)h-1}\mathbb E[\widehat {\mathcal L}_{\mathcal G(k)}\otimes I_n+\mathcal H^T(k)\mathcal H(k)|\mathcal F(mh-1)]\Bigg]\cr
&=&\lambda_{\min}\Bigg[\sum_{k=mh}^{(m+1)h-1}\mathbb E[\widehat {\mathcal L}_{\mathcal G(k)}\otimes I_n+\mathcal H^T(k)\mathcal H(k)|\langle\widehat {\mathcal L}_{\mathcal G(mh-1)},\mathcal H(mh-1)\rangle=S_0]\Bigg]\cr
&=&\lambda_{\min}\Bigg[\sum_{k=1}^h\sum_{l=1}^\infty (\widehat {\mathcal L}_l\otimes I_n+ \mathcal H^T_l\mathcal H_l)\mathbb P^k(S_{0},\langle\widehat {\mathcal L}_l,\mathcal H_l\rangle)\Bigg],\cr
&&~~~~~~~~~~~~~~~~~~~~~~~~~~~~~~~~~~~~\forall\ S_0\in {\mathcal S},\
\forall\ m\ge0, h\geq1.
\enaa
Noting the uniform ergodicity of $\{\widehat {\mathcal L}_{\mathcal G(k)},k\ge0\}$  and $\{\mathcal H(k),k\ge0\}$ and the uniqueness  of the stationary
distribution $\pi$, since $\sup_{l\ge1}\|{\mathcal L}_l\|<\infty$ and $  \sup_{l\ge1}\|\mathcal H_l\|<\infty$, we have
\bann
&&~~~\Bigg\|\frac{\sum_{k=1}^h\sum_{l=1}^\infty (\widehat {\mathcal L}_l\otimes I_n+ \mathcal H^T_l\mathcal H_l)\mathbb P^k(S_{0},\langle\widehat {\mathcal L}_l,\mathcal H_l\rangle)}{h}-\sum_{l=1}^\infty\pi_l(\widehat {\mathcal L}_l\otimes I_n+\mathcal H^T_l\mathcal H_l)\Bigg\|\cr
&&=\Bigg\|\frac{\sum_{k=1}^h\sum_{l=1}^\infty [(\widehat {\mathcal L}_l\otimes I_n+ \mathcal H^T_l\mathcal H_l)\mathbb P^k(S_{0},\langle\widehat {\mathcal L}_l,\mathcal H_l\rangle)-\pi_l(\widehat {\mathcal L}_l\otimes I_n+\mathcal H^T_l\mathcal H_l)]}{h}\Bigg\|\cr
&&=\Bigg\|\frac{\sum_{k=1}^h\sum_{l=1}^\infty [(\widehat {\mathcal L}_l\otimes I_n+ \mathcal H^T_l\mathcal H_l)(\mathbb P^k(S_{0},\langle\widehat {\mathcal L}_l,\mathcal H_l\rangle)-\pi_l)]}{h}\Bigg\|\cr
&&\le\sup_{l\ge1}\|\widehat {\mathcal L}_l\otimes I_n+ \mathcal H^T_l\mathcal H_l\|\frac{\sum_{k=1}^hRr^{-k}}{h}\to0,\ h\to\infty,
\eann
where constants $R$ and $r$ are positive with $r>1$.
By the definition of uniform convergence,  we know that
\bann
&&~~~\frac{1}{h}\left[\sum_{k=mh}^{(m+1)h-1}\mathbb E[\widehat {\mathcal L}_{\mathcal G(k)}\otimes I_n+\mathcal H^T(k)\mathcal H(k)|\mathcal F(mh-1)]\right]\mbox{ converges to }\cr
&&~~~~\sum_{l=1}^\infty\pi_l(\widehat {\mathcal L}_l\otimes I_n+\mathcal H^T_l\mathcal H_l)~\mbox{uniformly w.r.t. }m~\mbox{and the sample paths} ~\mathrm{a.s.}, ~\mathrm{as}~h\to\infty.\eann

By the conditions (d.1) and (d.2), it follows that $\lambda_{\min}(\sum_{l=1}^\infty\pi_l(\widehat {\mathcal L}_l\otimes I_n+\mathcal H^T_l\mathcal H_l))>0$. To see this, for
any given $x\in\mathbb{R}^{Nn}$, $x\neq\textbf{0}_{Nn\times1}$, let $x=[x_1^T,\cdots,x_N^T]^T$, $x_i\in\mathbb R^n$; (i) if $x=\textbf{1}_{N}\otimes a$, $\exists\ a\in \mathbb R^n$ and $a\not=\textbf{0}_{n\times1}$, i.e. $x_1=x_2=..=x_N=a$, then by  the condition  (d.2), we have $x^T(\sum_{l=1}^\infty\pi_l(\widehat {\mathcal L}_l\otimes I_n+\mathcal H^T_l\mathcal H_l))x=a^T[\sum_{i=1}^N\sum_{l=1}^\infty (\pi_lH_{i,l}^TH_{i,l})]a>0$; (ii) otherwise, there must be $x_i\not=x_j$, $\exists\ i\not=j$. By the condition  (d.1), we know that $\sum_{l=1}^\infty\pi_l\widehat {\mathcal L}_l$ is the Laplacian matrix of a connected graph.  Then by Lemma \ref{keyconnectivitylemmaprepareadd}, we have $x^T(\sum_{l=1}^\infty\pi_l(\widehat {\mathcal L}_l\otimes I_n+\mathcal H^T_l\mathcal H_l))x\ge x^T(\sum_{l=1}^\infty\pi_l\widehat {\mathcal L}_l\otimes I_n)x>0$. Combining (i) and (ii), we get $\lambda_{\min}(\sum_{l=1}^\infty\pi_l(\widehat {\mathcal L}_l\otimes I_n+\mathcal H^T_l\mathcal H_l))>0$.

Since the function $\lambda_{\min}(\cdot)$, whose arguments are matrices, is continuous, we know that for a given constant ${\mu}\in(0, {2}\lambda_{\min}(\sum_{l=1}^\infty\pi_l(\widehat {\mathcal L}_l\otimes I_n+\mathcal H^T_l\mathcal H_l)))$, there exists
a constant $\d>0$ such that for any given
matrix~$L$, $|\lambda_{\min}(L)-\lambda_{\min}(\sum_{l=1}^\infty\pi_l(\widehat {\mathcal L}_l\otimes I_n+\mathcal H^T_l\mathcal H_l))|\le\frac{\mu}{2}$
~provided~$\|L-\sum_{l=1}^\infty\pi_l(\widehat {\mathcal L}_l\otimes I_n+\mathcal H^T_l\mathcal H_l)\|\le\d$.
Since the convergence is uniform,
we know that there exists an  integer $h_0>0$ such that
\bann
&&~~\sup_{m\ge0}\Bigg\|\frac{1}{h}\sum_{k=mh}^{(m+1)h-1}\mathbb E[\widehat {\mathcal L}_{\mathcal G(k)}\otimes I_n+\mathcal H^T(k)\mathcal H(k)|\mathcal F(mh-1)]-\sum_{l=1}^\infty\pi_l(\widehat {\mathcal L}_l\otimes I_n+\mathcal H^T_l\mathcal H_l)\Bigg\|\cr
&&\le\d,\ h\geq h_0~\rm{a.s.},
\eann
which gives
\ban
&&~~~\sup_{m\ge0}\Bigg|\frac{1}{h}{{\Lambda_m^h}}-\lambda_{\min}\Bigg(\sum_{l=1}^\infty\pi_l(\widehat {\mathcal L}_l\otimes I_n+\mathcal H^T_l\mathcal H_l)\Bigg)\Bigg|\le\frac{\mu}{2},~h\geq h_0~\rm{a.s.}
\ean
Thus,  we arrive at
\bann
\inf_{m\ge0}{{\Lambda_m^h}}&\ge&\Bigg[\lambda_{\min}\Bigg(\sum_{l=1}^\infty\pi_l(\widehat {\mathcal L}_l\otimes I_n+\mathcal H^T_l\mathcal H_l)\Bigg)-\frac{\mu}{2}\Bigg]h\cr
&\ge&\Bigg[\lambda_{\min}\Bigg(\sum_{l=1}^\infty\pi_l(\widehat {\mathcal L}_l\otimes I_n+\mathcal H^T_l\mathcal H_l)\Bigg)-\frac{\mu}{2}\Bigg]h_0>0~\rm{a.s.}
\eann
By Theorem \ref{nodelay}, the proof is completed.
\end{proof}
  \section{proofs in Section \ref{delayexist}}
\begin{proof}[\textbf{Proof of Lemma~\ref{asa0}}]
We adopt the the mathematical induction method to prove the lemma.
By~(\ref{0fkwjeewa}) and~(\ref{ede92}),  noting that~$F(k)=I_{Nn},-d\le k\le -1$, we have
\ban\label{lmadanb}
F(0)&=&I_{Nn}-[b(0)\mathcal D_{\mathcal G(0)}\otimes I_n+a(0){\mathcal H}^T(0){\mathcal H}(0)-b(0)\sum_{q=0}^d\overline A(0,q)]\cr
&=&I_{Nn}-[b(0)\mathcal D_{\mathcal G(0)}\otimes I_n+a(0){\mathcal H}^T(0){\mathcal H}(0)-b(0)\mathcal A_{\mathcal G(0)}\otimes I_n].
\ean
Note that, under Condition \textbf{C1.d},   the set $\{\psi\in(0,1)|b(0)\le f_{C_1,\beta_a,\beta_H,N,d}(\psi)\}$ is a nonempty  and  bounded closed set by the continuity  of $ f_{C_1,\beta_a,\beta_H,N,d}(\psi)$. Hence, $\psi_1$  exists.   Then, by the definition of $\psi_1$, we have
\bnaa\label{iczmapsdxalscl}
b(0)[N\beta_a+C_1\beta_H^2+
N\beta_a[(1-\psi_1)^{-(d+1)}-1]/[(1-\psi_1)^{-1}-1]]\le{\psi_1}.
\enaa
By  the above, Assumption~\textbf{A2.b} and Condition \textbf{C1.a}, we have
\bann
\|G(0)\|&=&\|b(0)\mathcal D_{\mathcal G(0)}\otimes I_n+a(0){\mathcal H}^T(0){\mathcal H}(0)-b(0)\mathcal A_{\mathcal G(0)}\otimes I_n\|\cr
&\le& b(0)\sup_{k\ge 0}\|\mathcal D_{\mathcal G(k)}\|+a(0)\sup_{k\ge0}\|{\mathcal H}^T(k){\mathcal H}(k)\|+b(0)\sup_{k\ge 0}\|\mathcal A_{\mathcal G(k)}\|\cr
&\le& b(0)[2N\beta_a+C_1\beta_H^2]\cr
&\le& b(0)[N\beta_a+C_1\beta_H^2+
N\beta_a[(1-\psi_1)^{-(d+1)}-1]/[(1-\psi_1)^{-1}-1]]\le\psi_1~\rm{a.s.}
\eann
By the above  and  Lemma~\ref{98pkk0}, noting  $\psi_1\in(0,1)$,  it follows that   $F(0)$ is invertible \rm{a.s.} and~$\|F^{-1}(0)\|\le (1-\psi_1)^{-1}~a.s$.

Assume that~$F(k)$ is invertible~\rm{a.s.} and~$\|F^{-1}(k)\|<(1-\psi_1)^{-1}~\rm{a.s.}$  for~$k=0,1,2,\cdots$.
By  (\ref{iczmapsdxalscl}), Assumption~\textbf{A2.b} and Condition \textbf{C1.a}, we have
\bann
\|G(k+1)\|&=&\|b(k+1)\mathcal D_{\mathcal G(k+1)}\otimes I_n+a(k+1){\mathcal H}^T(k+1){\mathcal H}(k+1)\cr
&&-b(k+1)\sum_{q=0}^d\overline A(k+1,q)[\Phi_F(k,k-q+1)]^{-1}\|\cr
&\le& b(k+1)[N\beta_a+C_1\beta_H^2]+b(k+1)N\beta_a\sum_{q=0}^d(1-\psi_1)^{-q}~\rm{a.s.}\cr
&\le& b(0)[N\beta_a+C_1\beta_H^2+
N\beta_a[(1-\psi_1)^{-(d+1)}-1]/[(1-\psi_1)^{-1}-1]]\cr
&\le&\psi_1.
\eann
Then By  Lemma~\ref{98pkk0}, we know that $F(k+1)$ is invertible~\rm{a.s.} and~$\|F^{-1}(k+1)\|\le (1-\psi_1)^{-1}~\rm{a.s.}$ By the mathematical induction, the proof is completed.
\end{proof}
Before proving  Theorem~\ref{e4t5634332}, we need the following lemma.
\begin{lemma}\label{erhi32}
\rm{If Assumption~\textbf{A2.b}, Conditions \textbf{C1.a} and \textbf{C1.d} hold,   and there exist a positive integer~$h$ and a positive sequence~$\{  c(m),m\ge0\}$ such that  $ {\widetilde\Lambda_m^h}\ge   c(m)~\rm{a.s.}$ with $c(m)$  satisfying
\bnaa\label{cmsequenceC1}
b^2(mh)=o(c(m))~\text{and}~\sum_{m=0}^{\infty}c(m)=\infty,
\enaa  then~$$
\lim_{k\to\infty}\big\|\mathbb E(\Phi_F(k,0)\Phi_F^T(k,0))\big\|=0.
$$}
\end{lemma}
\begin{proof}
Since Assumption~\textbf{A2.b}, Conditions \textbf{C1.a} and \textbf{C1.d} hold,  Lemma \ref{asa0} holds. Hence, $F(k)$ is invertible \rm{a.s.}, and (\ref{ede92}) follows.

 Similarly to (\ref{lesisyifdiosjd})$-$(\ref{a098okmsadsasd}) in the proof of Lemma~\ref{mnbhgyls12}, there exists a  positive integer $m_1'$ such that
\bnaa\label{a098okmsadsasd11}
&&~~~\|\mathbb E[\Phi_F((m+1)h-1,mh)\Phi_F^T((m+1)h-1,mh)|\mathcal F(mh-1)]\|\cr
~&&=1-\lambda_{\min}\Bigg(\sum_{k=mh}^{(m+1)h-1}\mathbb E[G(k)+G^T(k)|\mathcal F(mh-1)]\Bigg)\cr
&&~+\left\|\mathbb E[\overline M_2(m)+\cdots+\overline M_{2h}(m)|\mathcal F(mh-1)]\right\|,~\forall~m\ge m_1'~\rm{a.s.}
\enaa
Here, the definitions of $\overline M_i(m),i=2,\cdots,2h$ are similar to (\ref{1712323}).

By  (\ref{gkfor}), (\ref{lam})  and  $ {\widetilde\Lambda_m^h}\ge  c(m)~\rm{a.s.}$, we have
\bnaa\label{a098okssams}
&&~~~1-\lambda_{\min}\Bigg(\sum_{k=mh}^{(m+1)h-1}\mathbb E[G(k)+G^T(k)|\mathcal F(mh-1)]\Bigg)\cr
~&&=1-\lambda_{\min}\Bigg(\sum_{k=mh}^{(m+1)h-1}\mathbb E\Bigg[2b(k){\mathcal D}_{\mathcal G(k)}\otimes I_n+2a(k){\mathcal H}^T(k){\mathcal H}(k)\cr
&&~-b(j)\sum_{q=0}^d
[\overline A(k,q)[\Phi_F(k-1,k-q)]^{-1}+(\overline A(k,q)[\Phi_F(k-1,k-q)]^{-1})^T]\Big|\mathcal F(mh-1)\Bigg]\Bigg)\cr
~&&=1-\lambda_{\min}\Bigg(\sum_{k=mh}^{(m+1)h-1}\mathbb E\Bigg[2b(k)\widehat{\mathcal L}_{\mathcal G(k)}\otimes I_n+2a(k){\mathcal H}^T(k){\mathcal H}(k)\cr
&&~-b(k)\sum_{q=0}^d
[\overline A(k,q)[[\Phi_F(k-1,k-q)]^{-1}-I_{Nn}]\cr
&&~~~~~~~+(\overline A(k,q)[[\Phi_F(k-1,k-q)]^{-1}-I_{Nn}])^T]\Big|\mathcal F(mh-1)\Bigg]\Bigg)\cr
~&&=1-2{\widetilde\Lambda_m^h} \le1-  c(m)~\rm{a.s.}
\enaa

From (\ref{gkfor}), Assumption~\textbf{A2.b}, Condition \textbf{C1.a} and Lemma~\ref{asa0}, we have
\bann\label{opwjxsasa}
\|G(k)\|&\le& b(k)\|\mathcal D_{\mathcal G(k)}\otimes I_n\|+a(k)\|{\mathcal H}^T(k){\mathcal H}(k)\|+b(k)\|\sum_{q=0}^d\overline A(k,q)[\Phi_F(k-1,k-q)]^{-1}\|\cr
&\le & b(k)\Bigg(N\beta_a+C_1\beta_H^2+N\beta_a\frac{1-(1-\psi_1)^{-(d+1)}}{1-(1-\psi_1)^{-1}}\Bigg)~\mathrm{a.s.},~ k\ge 0.
\eann
By the above  and the definition of~$\overline M_i(m)$, $i=2,\cdots,2h$, we have
\ban\label{0qjs}
&&\|\overline M_i(m)\|\le b^2(mh)\mathbb C_{2h}^i\Bigg(N\beta_a+C_1\beta_H^2+N\beta_a\frac{1-(1-\psi_1)^{-(d+1)}}{1-(1-\psi_1)^{-1}}\Bigg)^i
~~\rm{a.s.},
\ean
where~$\mathbb C_m^p$ represent  the combinatorial number of choosing $p$ elements from $m$ elements.
Hence,
\bna\label{00ksapw}
&&~~~\|\mathbb E[\overline M_2(m)+\cdots+\overline M_{2h}(m)|\mathcal F(mh-1)]\|\cr
~&&\le b^2(mh)\sum_{i=2}^{2h}\mathbb C_{2h}^i\Bigg(N\beta_a+C_1\beta_H^2+N\beta_a\frac{1-(1-\psi_1)^{-(d+1)}}{1-(1-\psi_1)^{-1}}\Bigg)^i=b^2(mh)\gamma~\rm{a.s.},
\ena
where
\bann\label{nmhd}
\gamma&=&\Bigg(\Bigg(N\beta_a+C_1\beta_H^2+N\beta_a\frac{1-(1-\psi_1)^{-(d+1)}}{1-(1-\psi_1)^{-1}}\Bigg)+1\Bigg)^{2h}\cr
&&-1-2h\Bigg(N\beta_a+C_1\beta_H^2+N\beta_a\frac{1-(1-\psi_1)^{-(d+1)}}{1-(1-\psi_1)^{-1}}\Bigg).
\eann
By (\ref{a098okmsadsasd11}), (\ref{a098okssams}) and (\ref{00ksapw}), we have
\bna\label{0oooopw}
&&~~~\|\mathbb E[\Phi_F((m+1)h-1,mh)\Phi_F^T((m+1)h-1,mh)|\mathcal F(mh-1)]\|\cr
~&&\le1- c(m)+b^2(mh)\gamma~\mathrm{a.s.},~m\ge m_1'.
\ena
By (\ref{ede92}) and Assumption \textbf{A2.b}, we  know that there exists a positive constant~$\overline\eta$ such that
\bnaa\label{0asdq3}
\|F(k)\|\le\overline\eta~\mathrm{a.s.},~k\ge 0.
\enaa
Denote~$m_k=\lfloor \frac{k}{h} \rfloor$. By (\ref{0asdq3})  and~Lemma \ref{fanshuguji}, we have
\bna\label{wang1}
&&~~~\|\mathbb E[\Phi_F(k,0)\Phi_F^T(k,0)]\|\cr
&&\le Nn\|\mathbb E[\Phi_F^T(k,0)\Phi_F(k,0)]\|\cr
&&=Nn\|\mathbb E[\Phi_F^T(m_kh-1,0)\Phi_F^T(k,m_kh)\Phi_F(k,m_kh)\Phi_F(m_kh-1,0) ]\|\cr
&&\le Nn\|\mathbb E[\Phi_F^T(m_kh-1,0)\|\Phi_F(k,m_kh)\|^2\Phi_F(m_kh-1,0) ]\|\cr
&&\le {\overline\eta}^{2h}Nn\|\mathbb E[\Phi_F^T(m_kh-1,0)\Phi_F(m_kh-1,0) ]\|\cr
&&={\overline\eta}^{2h}Nn\|\mathbb E[\Phi_F^T(m_1'h-1,0)\Phi_F^T(m_kh-1,m_1'h)\Phi_F(m_kh-1,m_1'h)\Phi_F(m_1'h-1,0)]\|\cr
&&\le{\overline\eta}^{2h}Nn\|\mathbb E[\|\Phi_F(m_1'h-1,0)\|^2\Phi_F^T(m_kh-1,m_1'h)\Phi_F(m_kh-1,m_1'h)]\|\cr
&&\le{\overline\eta}^{2(h+m_1'h)}Nn\|\mathbb E[\Phi_F^T(m_kh-1,m_1'h)\Phi_F(m_kh-1,m_1'h)]\|~\rm{a.s.}
\enaa
From the properties of the conditional expectation and~(\ref{0oooopw}), it follows that
\bnaa\label{iou1}
&&~~~\|\mathbb E[\Phi_F^T(m_kh-1,m_1'h)\Phi_F(m_kh-1,m_1'h)]\|\cr
~&&=\|\mathbb E[\Phi_F^T((m_k-1)h-1,m_1'h)\Phi_F^T(m_kh-1,(m_k-1)h)\Phi_F(m_kh-1,(m_k-1)h)\cr
&&~~~~~~~\times \Phi_F((m_k-1)h-1,m_1'h)]\|\cr
~&&=\|\mathbb E[\mathbb E[\Phi_F^T((m_k-1)h-1,m_1'h)\Phi_F^T(m_kh-1,(m_k-1)h)\Phi_F(m_kh-1,(m_k-1)h)\cr
&&~~~~~~~\times \Phi_F((m_k-1)h-1,m_1'h)|\mathcal F((m_k-1)h-1)]]\|\cr
~&&\le\|\mathbb E[\Phi_F^T((m_k-1)h-1,m_1'h)\cr
&&~~~~~~~\times\|\mathbb E[\Phi_F^T(m_kh-1,(m_k-1)h)\Phi_F(m_kh-1,(m_k-1)h)|\mathcal F((m_k-1)h-1)]\|\cr
&&~~~~~~~\times \Phi_F((m_k-1)h-1,m_1'h)]\|\cr
~&&\le[1-  c(m_k-1)+b^2((m_k-1)h)\gamma]\cr
&&~~~~~~~~\times\|\mathbb E[\Phi_F^T((m_k-1)h-1,m_1'h)\Phi_F((m_k-1)h-1,m_1'h)]\|\cr
~&&\le\prod_{s=m_1'}^{m_k-1}[1-  c(s)+b^2(sh)\gamma]~\rm{a.s.}
\enaa
Combining~(\ref{wang1}) and~(\ref{iou1}) implies
\ban\label{0ooyuugw}
\|\mathbb E[\Phi_F(k,0)\Phi_F^T(k,0)]\|\le Nn{\overline\eta}^{2(h+m_1'h)}\prod_{s=m_1'}^{m_k-1}[1-  c(s)+b^2(sh)\gamma]~\rm{a.s.}
\eann
Similarly to (\ref{limit1})$-$(\ref{lesisyu42}) in the proof of Lemma~\ref{mnbhgyls12}, by  Condition~\textbf{C1.a}, (\ref{cmsequenceC1}) and the above, we have
$
\lim_{k\to\infty}\|\mathbb E[\Phi_F(k,0)\Phi_F^T(k,0)]\| =0.
$
The proof is completed.
\end{proof}

\vskip 0.2cm

\begin{proof}[\textbf{Proof of Theorem~\ref{e4t5634332}}]
By the conditions of the theorem, it follows that Lemmas \ref{asa0} and \ref{erhi32} hold.

 Denote the following block matrices: $\overline r(k)=[r^T(k),g^T(k),\cdots,g^T(k-d+1)]^T$, $\widehat I=[\textbf{0}_{Nn\times Nn},\widetilde I]^T$ and $\widetilde I=[I_{Nn},\textbf{0}_{Nn\times Nn},\cdots,\textbf{0}_{Nn\times Nn}],$ where~$\widehat I$ and $\widetilde I$ are the $Nn(d+1)$ dimensional column block matrix and $Nnd$ dimensional row block matrix with each block being the~$Nn$ dimensional matrix, respectively. Denote
 \ban
T(k)=
\left(\begin{array}{cc}
    F(k)&\widetilde I\\
     \textbf{0}_{Nnd\times Nn}& C(k)\\
  \end{array}\right),
  \ean
  which gives
\ban\label{o0kss}
\Phi_T(k,0)&=\left(
           \begin{array}{cc}
             \Phi_F(k,0) &  \sum_{i=0}^k\Phi_F(k,i+1)\widetilde I\Phi_C(i-1,0)   \\
             \textbf{0}_{Nnd\times Nn} & \Phi_C(k,0) \\
           \end{array}
         \right).
\ean
Denote
\bna\label{er2343eas}
C(k)=\left(
                \begin{array}{cccc}
                  C_1(k+1) & C_2(k+1) &\cdots&C_d(k+1)\\
                  I_{Nn} & \textbf{0}_{Nn\times Nn}& & \\
                   &\ddots &\ddots &  \\
                  &  & I_{Nn}& \textbf{0}_{Nnd\times Nn} \\
                \end{array}
              \right).
\ena

By the state augmentation approach and~(\ref{12129s}), we have
\ban\label{er234s45}
\overline r(k+1)&=&T(k)\overline r(k)+a(k+1)\widehat I{\mathcal H}^T(k+1)v(k+1)\cr
 &=&\Phi_T(k,0)\overline r(0)+\sum_{i=1}^{k+1}a(i)\Phi_T(k,i)\widehat I{\mathcal H}^T(i)v(i),~k\ge 0.
\ean
Premultiplying the $Nn(d+1)$ dimensional row block  matrix $\overline I\triangleq[I_{Nn}, \textbf{0}_{Nn\times Nn},\cdots, \textbf{0}_{Nn\times Nn}]$ on both sides of the above  gives
\bann\label{er234s}
r(k+1) =\overline I\Phi_T(k,0)\overline r(0)+\sum_{i=1}^{k+1}a(i)\overline I\Phi_T(k,i)\widehat I{\mathcal H}^T(i)v(i),
\eann
which  leads to
\bna\label{8ndi4s}
&&~~~\mathbb E[r(k+1) r^T(k+1)]\cr
&&=\mathbb E[\overline I\Phi_T(k,0)\overline r(0)\overline r^T(0)\Phi_T^T(k,0)\overline I^T]\cr
&&~~~+\mathbb E\Bigg[\overline I\Phi_T(k,0)\overline r(0)\Big(\sum_{i=1}^{k+1}a(i)v^T(i){\mathcal H}(i)\widehat I^T \Phi_T^T(k,i)\overline I^T\Big)\Bigg]\cr
&&~~~+\mathbb E\Big[\Big(\sum_{i=1}^{k+1}a(i)\overline I\Phi_T(k,i)\widehat I{\mathcal H}^T(i)v(i)\Big)\overline r^T(0)\Phi_T^T(k,0)\overline I^T\Big]\cr
&&~~~+\mathbb E\Bigg[\Big[\sum_{i=1}^{k+1}a(i)\overline I\Phi_T(k,i)\widehat I{\mathcal H}^T(i)v(i)\Big]\Big[\sum_{i=1}^{k+1}a(i)[\overline I\Phi_T(k,i)\widehat I{\mathcal H}^T(i)v(i)]^T\Big]\Bigg].
\ena
By  Assumptions~\textbf{A1.a} and \textbf{A1.b} ,
 we   know that the second  and  third terms on the right side of the above  are both equal to zero.

 By (\ref{jiaochaxiang}), we have
\bann
&&~~~~~\mathbb E\Bigg[\Big[\sum_{i=1}^{k+1}a(i)\overline I\Phi_T(k,i)\widehat I{\mathcal H}^T(i)v(i)\Big]\Big[\sum_{i=1}^{k+1}a(i)[\overline I\Phi_T(k,i)\widehat I{\mathcal H}^T(i)v(i)]^T\Big]\Bigg]\cr
&&=\sum_{i=1}^{k+1}a^2(i)\mathbb E[\overline I\Phi_T(k,i)\widehat I{\mathcal H}^T(i)v(i)v^T(i){\mathcal H}(i)\widehat I^T\Phi_T^T(k,i)\overline I^T].
\eann
Substituting the above   into (\ref{8ndi4s}) and taking the 2-norm on both sides of (\ref{8ndi4s}), from Assumptions~\textbf{A1.a}, \textbf{A1.b} and~\textbf{A2.b}, it follows that
\bna\label{asa8ndiassasas4s}
&&~~~\|\mathbb E[r(k+1) r^T(k+1)]\|~~~~~~~~~~~~~~~~~~~~~~~~~~~~~~~~\cr
&&\le r_0\|\mathbb E[\overline I\Phi_T(k,0)\Phi_T^T(k,0)\overline I^T]\|\cr
&&~~~~~+\Bigg\|\sum_{i=1}^{k+1}a^2(i)\mathbb E[\overline I\Phi_T(k,i)\widehat I{\mathcal H}^T(i)v(i)v^T(i){\mathcal H}(i)\widehat I^T\Phi_T^T(k,i)\overline I^T]\Bigg\|\cr
&&= r_0\|\mathbb E[\overline I\Phi_T(k,0)\Phi_T^T(k,0)\overline I^T]\|\cr
&&~~~~~+\Bigg\|\sum_{i=1}^{k+1}a^2(i)\mathbb E[\overline I\Phi_T(k,i)\widehat I{\mathcal H}^T(i)\mathbb E(v(i)v^T(i)){\mathcal H}(i)\widehat I^T\Phi_T^T(k,i)\overline I^T]\Bigg\|\cr
&&\le r_0\|\mathbb E[\overline I\Phi_T(k,0)\Phi_T^T(k,0)\overline I^T]\|\cr
&&~~~~+\sup_{k\ge0}\|\mathbb E[v(k)v^T(k)]\|\Bigg\|\sum_{i=1}^{k+1}a^2(i)\mathbb E[\overline I\Phi_T(k,i)\widehat I{\mathcal H}^T(i){\mathcal H}(i)\widehat I^T\Phi_T^T(k,i)\overline I^T]\Bigg\|\cr
&&\le r_0\|\mathbb E[\overline I\Phi_T(k,0)\Phi_T^T(k,0)\overline I^T]\|\cr
&&~~~~~~~~+\beta_H\sup_{k\ge0}\|\mathbb E[v(k)v^T(k)]\|\Bigg\|\sum_{i=1}^{k+1}a^2(i)\mathbb E[\overline I\Phi_T(k,i)\widehat I\widehat I^T\Phi_T^T(k,i)\overline I^T]\Bigg\|\cr
&&\le r_0\|\mathbb E[\overline I\Phi_T(k,0)\Phi_T^T(k,0)\overline I^T]\|+\beta_H^2\beta_v\sum_{i=1}^{k+1}a^2(i)\|\mathbb E[\overline I\Phi_T(k,i)\Phi_T^T(k,i)\overline I^T]\|,
\ena
where~$r_0\triangleq\|\overline r(0)\overline r^T(0)\|$.
By the definitions of~$\Phi_T(k,0)$ and~$\overline I$, we have~$$\overline I\Phi_T(k,0)=\Big(\Phi_F(k,0) \sum_{i=0}^k\Phi_F(k,i+1)\widetilde I\Phi_C(i-1,0)\Big).$$
Substituting the above  into~(\ref{asa8ndiassasas4s}) gives
\bna\label{asa8ndi4s}
&&~~~\|\mathbb E[r(k+1) r^T(k+1)]\|\cr
&&\le r_0\|\mathbb E[\Phi_F(k,0)\Phi_F^T(k,0)]\|+\beta_H^2\beta_v\sum_{i=1}^{k+1}a^2(i)\|\mathbb E[\Phi_F(k,i)\Phi_F^T(k,i)]\|\cr
&&~~~+r_0\Big\|\mathbb E\Big[\Big\{\sum_{i=0}^k\Phi_F(k,i+1)\widetilde I\Phi_C(i-1,0)\Big\}\Big\{\sum_{i=0}^k\Phi_C^T(i-1,0)\widetilde I^T\Phi_F^T(k,i+1)\Big\}\Big]\Big\|\cr
&&~~~+\beta_H^2\beta_v\sum_{i=1}^{k+1}a^2(i)\Big\|\mathbb E\Big[\Big\{\sum_{j=i}^k\Phi_F(k,j+1)\widetilde I\Phi_C(j-1,i)\Big\}\cr
&&~~~\times\Big\{\sum_{j=i}^k\Phi_F(k,j+1)\widetilde I\Phi_C(j-1,i)\Big\}^T\Big]\Big\|.
\ena
By Lemma~\ref{erhi32},  we  know that the first term on the right side of the above  converges to zero.

Denote~$\widetilde m_i=\lceil \frac{i}{h}\rceil$. By~(\ref{0asdq3}) and noting the definition of $m_k$ defined in the proof of Lemma~\ref{erhi32}, we have
\bann\label{yin4}
&&~~~\sum_{i=1}^{k-3h}a^2(i)\|\mathbb E[\Phi_F(k,i)\Phi_F^T(k,i)]\|\cr
~&&= \sum_{i=0}^{k-3h-1}a^2(i+1)\|\mathbb E[\Phi_F(k,i+1)\Phi_F^T(k,i+1)]\|\cr
~&&= \sum_{i=0}^{k-3h-1}a^2(i+1)\|\mathbb E[\Phi_F(k,m_kh)\Phi_F(m_kh-1,\widetilde m_{i+1}h)\Phi_F(\widetilde m_{i+1}h-1,i+1)\cr
&&~\times\Phi_F^T(\widetilde m_{i+1}h-1,i+1)\Phi_F^T(m_kh-1,\widetilde m_{i+1}h)\Phi_F^T(k,m_kh)]\|\cr
~&&\le {\overline\eta}^{2h}\sum_{i=0}^{k-3h-1}a^2(i+1)\|\mathbb E[\Phi_F(k,m_kh)\Phi_F(m_kh-1,\widetilde m_{i+1}h)\Phi_F^T(m_kh-1,\widetilde m_{i+1}h)\Phi_F^T(k,m_kh)]\|\cr
~&&\le {\overline\eta}^{4h}\sum_{i=0}^{k-3h-1}a^2(i+1)\|\mathbb E[\Phi_F(m_kh-1,\widetilde m_{i+1}h)\Phi_F^T(m_kh-1,\widetilde m_{i+1}h)]\|,
\eann
which together with Lemma \ref{fanshuguji} and (\ref{iou1}) leads to
\bann
&&~~~\sum_{i=1}^{k+1}a^2(i)\|\mathbb E[\Phi_F(k,i)\Phi_F^T(k,i)]\|\cr
&&\le{\overline\eta}^{4h}\sum_{i=0}^{k-3h-1}a^2(i+1)\|\mathbb E[\Phi_F(m_kh-1,\widetilde m_{i+1}h)\Phi_F^T(m_kh-1,\widetilde m_{i+1}h)]\|\cr
&&~ +\sum_{i=k-3h}^{k}a^2(i+1)\|\mathbb E[\Phi_F(k,i+1)\Phi_F^T(k,i+1)]\|\cr
~&&\le Nn{\overline\eta}^{4h}\sum_{i=0}^{k-3h-1}a^2(i+1)\|\mathbb E[\Phi_F^T(m_kh-1,\widetilde m_{i+1}h)\Phi_F(m_kh-1,\widetilde m_{i+1}h)]\|\cr
&&~ +\sum_{i=k-3h}^{k}a^2(i+1)\|\mathbb E[\Phi_F(k,i+1)\Phi_F^T(k,i+1)]\|\cr
~&&\le Nn{\overline\eta}^{4h}\sum_{i=0}^{k-3h-1}a^2(i+1)\prod_{s=\widetilde m_{i+1}}^{m_k-1}[1-c(s)+b^2(sh)\gamma]\cr
&&~ +\sum_{i=k-3h}^{k}a^2(i+1)\|\mathbb E[\Phi_F(k,i+1)\Phi_F^T(k,i+1)]\|.
\eann
Similarly to (\ref{sdijawdaSD})$-$(\ref{sapdskmfs}) in the proof of Theorem~\ref{nodelay111}, we have
\bnaa\label{quyusas}
\lim_{k\to\infty}\sum_{i=1}^{k+1}a^2(i)\|\mathbb E[\Phi_F(k,i)\Phi_F^T(k,i)]\|=0.
\enaa
Hence, the second term on the right side of (\ref{asa8ndi4s}) converges to zero.

From~(\ref{121esd29s}) and~(\ref{ede92}), we have
\bann\label{cik}
C_i(k)=-b(k)\sum_{q=i}^d\overline A(k,q)[\Phi_F(k-1,k-q)]^{-1},~1\le i\le d.
\eann
By  Assumption~\textbf{A2.b} and Condition \textbf{C1.a}, then there exist~$\epsilon\in(0,\frac{1-\psi_1}{\sqrt{Nnd}})$, where $\psi_1$ is defined in Lemma \ref{asa0} and a positive integer~$k(\epsilon)$, such that for~$\forall~k\ge k(\epsilon)$, $\|C_i(k)\|_\infty\le\frac{\epsilon(\epsilon-1)}{\epsilon-\epsilon^{1-d}}~\mathrm{a.s.},1\le i\le d$, where~$\|\cdot\|_\infty$ represents the infinite norm of a matrix. If~$d>1$, denote~$Y=diag\{I_{Nn},\epsilon I_{Nn},\epsilon^2I_{Nn},\cdots,\epsilon^{d-1}I_{Nn}\}$; if~$d=1$, denote~$Y=I_{Nn}$, which together with  (\ref{er2343eas}) leads to
\bann
YC(k)Y^{-1}=\left(
                \begin{array}{cccc}
                  C_1(k+1) &\epsilon^{-1} C_2(k+1) &\cdots&\epsilon^{1-d}C_d(k+1)\\
                  \epsilon I_{Nn} & \textbf{0}_{Nn\times Nn} & & \\
                   &\ddots &\ddots&  \\
                  &  & \epsilon I_{Nn}& \textbf{0}_{Nn\times Nn}  \\
                \end{array}
              \right).
\eann
Then, it follows that
\bann
\|YC(k)Y^{-1}\|_\infty \le\max\Big\{\sum_{i=1}^d\epsilon^{1-i}\|C_i(k+1)\|_\infty,\epsilon \Big\}\le \max\Big\{\frac{\epsilon(\epsilon-1)}{\epsilon-\epsilon^{1-d}}\frac{\epsilon-\epsilon^{1-d}}{\epsilon-1},\epsilon \Big\}=\epsilon~\rm{a.s.}
\eann
From the relation between  infinite norm  and  2-norm of a matrix, we have
\bnaa\label{ccf}
\|YC(k)Y^{-1}\|\le\sqrt{Nnd}\|YC(k)Y^{-1}\|_\infty\le\epsilon\sqrt{Nnd}<1-\psi_1~\rm{a.s.}
\enaa
Noting that~$F(k)$ is invertible~\rm{a.s.},  we have
\bna\label{0ozsdpw}
&&~~~~~\Big\|\mathbb E\Big[\Big\{\sum_{i=0}^k\Phi_F(k,i+1)\widetilde I\Phi_C(i-1,0)\Big\}\Big\{\sum_{i=0}^k\Phi_F(k,i+1)\widetilde I\Phi_C(i-1,0)\Big\}^T\Big]\Big\|\cr
&&\le\sum_{0\le i,j\le k}\|\mathbb E[\Phi_F(k,i+1)\widetilde I\Phi_C(i-1,0)\Phi_C^T(j-1,0)\widetilde I^T\Phi_F^T(k,j+1)]\|\cr
&&\le\sum_{0\le i,j\le k}\|\mathbb E[\Phi_F(k,0)[\Phi_F(i,0)]^{-1}\widetilde I\Phi_C(i-1,0)\Phi_C^T(j-1,0)\widetilde I^T[\Phi_F(j,0)]^{-T}\Phi_F^T(k,0)]\|\cr
&&\le\sum_{0\le i,j\le k}\|\mathbb E[\Phi_F(k,0)\|[\Phi_F(i,0)]^{-1}\|\|\widetilde I\Phi_C(i-1,0)\Phi_C^T(j-1,0)\widetilde I^T\|\cr
&&~\times\|[\Phi_F(j,0)]^{-T}\|\Phi_F^T(k,0)]\|.
\enaa
By Lemma~\ref{asa0}, it follows that
\bnaa\label{de78}
&&\|[\Phi_F(i,0)]^{-1}\|\le(1-\psi_1)^{-(i+1)}~\mathrm{and}~\|[\Phi_F(j,0)]^{-T}\|\le(1-\psi_1)^{-(j+1)}~\rm{a.s.}
\enaa
From~(\ref{ccf}), we obtain
\bnaa\label{84565}
\|\widetilde I\Phi_C(i-1,0)\Phi_C^T(j-1,0)\widetilde I^T\|
&\le&\|\Phi_C(i-1,0)\|\|\Phi_C(j-1,0)\|\cr
&=&\|Y^{-1}\Phi_{YCY^{-1}}(i-1,0)Y\|\|Y^{-1}\Phi_{YCY^{-1}}(j-1,0)Y\|\cr
&\le&(\epsilon\sqrt{Nnd})^{i+j-2}~\rm{a.s.},
\enaa
which combining (\ref{0ozsdpw}) and~(\ref{de78}) gives
\bann\label{41}
&&~~~~~\Big\|\mathbb E\Big[\Big\{\sum_{i=0}^k\Phi_F(k,i+1)\widetilde I\Phi_C(i-1,0)\Big\}\Big\{\sum_{i=0}^k\Phi_F(k,i+1)\widetilde I\Phi_C(i-1,0)\Big\}^T\Big]\Big\|\cr
&&\le (1-\psi_1)^{-2}\|\mathbb E[\Phi_F(k,0)\Phi_F^T(k,0)]\|\sum_{0\le i,j\le k}((1-\psi_1)^{-1}\epsilon\sqrt{Nnd})^{i+j}~\rm{a.s.}
\eann
Noting that $(1-\psi_1)^{-1}\epsilon\sqrt{Nnd}<1$, we have~$\sum_{0\le i,j< \infty}((1-\psi_1)^{-1}\epsilon\sqrt{Nnd})^{i+j}<\infty.$
Hence, by Lemma~\ref{erhi32}, it follows that
\bann\label{dl41}
\lim_{k\to\infty}\Big\|\mathbb E\Big[\Big\{\sum_{i=0}^k\Phi_F(k,i+1)\widetilde I\Phi_C(i-1,0)\Big\}\Big\{\sum_{i=0}^k\Phi_F(k,i+1)\widetilde I\Phi_C(i-1,0)\Big\}^T\Big]\Big\|=0.\eann
Thus, the third term on the right side of~(\ref{asa8ndi4s}) converges to zero.

By~(\ref{de78})-(\ref{84565}) and similarly to~(\ref{0ozsdpw}), it follows that
\bann\label{55}
&&~~~\sum_{i=1}^{k+1}a^2(i)\Big\|\mathbb E\Big[\Big\{\sum_{j=i}^k\Phi_F(k,j+1)\widetilde I\Phi_C(j-1,i)\Big\}\Big\{\sum_{j=i}^k\Phi_F(k,j+1)\widetilde I\Phi_C(j-1,i)\Big\}^T\Big]\Big\|\cr
&&=\sum_{i=1}^{k+1}a^2(i)\Big\|\sum_{i\le j_1,j_2\le k}\mathbb E [\Phi_F(k,j_1+1)\widetilde I\Phi_C(j_1-1,i)\Phi_C^T(j_2-1,i)\widetilde I ^T\Phi_F^T(k,j_2+1)]\Big\|\cr
&&=\sum_{i=1}^{k+1}a^2(i)\Big\|\sum_{i\le j_1,j_2\le k}\mathbb E [\Phi_F(k,i)(\Phi_F(j_1,i))^{-1}\widetilde I\Phi_C(j_1-1,i)\cr
&&~\times\Phi_C^T(j_2-1,i)\widetilde I ^T(\Phi_F^T(j_2,i))^{-1}\Phi_F^T(k,i)]\Big\|\cr
&&\le\sum_{i=1}^{k+1}a^2(i)\Big\|\sum_{i\le j_1,j_2\le k}\mathbb E [\Phi_F(k,i)\|(\Phi_F(j_1,i))^{-1}\widetilde I\Phi_C(j_1-1,i)\cr
&&~\times\Phi_C^T(j_2-1,i)\widetilde I ^T(\Phi_F^T(j_2,i))^{-1}\|\Phi_F^T(k,i)]\Big\|\cr
&&\le\sum_{i=1}^{k+1}a^2(i)\|\mathbb E [\Phi_F(k,i)\Phi_F^T(k,i)]\|\sum_{i\le j_1,j_2\le k}(1-\psi_1)^{-(j_1+j_2-2i+6)}(\epsilon\sqrt{Nnd})^{j_1+j_2-2i}~\rm{a.s.}\cr
&&\le(1-\psi_1)^{-6}\sum_{i=1}^{k+1}a^2(i)\|\mathbb E [\Phi_F(k,i)\Phi_F^T(k,i)]\|\sum_{i\le j_1,j_2\le k}((1-\psi_1)^{-1}\epsilon\sqrt{Nnd})^{(j_1+j_2-2i)}\cr
&&=(1-\psi_1)^{-6}\sum_{i=1}^{k+1}a^2(i)\|\mathbb E [\Phi_F(k,i)\Phi_F^T(k,i)]\|\frac{1-((1-\psi_1)^{-1}\epsilon)^{2k-2i+1}}{1-(1-\psi_1)^{-1}\epsilon\sqrt{Nnd}}\cr
&&\le\frac{(1-\psi_1)^{-6}}{1-(1-\psi_1)^{-1}\epsilon\sqrt{Nnd}}\sum_{i=1}^{k+1}a^2(i)\|\mathbb E [\Phi_F(k,i)\Phi_F^T(k,i)]\|~\rm{a.s.}~~~~~~~~~~~~~~~~~~~~~~~~~~~~~~~~~~~~~~
\eann
In the light of~(\ref{quyusas}),  the above converges to zero.

So far, we have proved that all the four terms on the right side of   (\ref{asa8ndi4s}) converge to zero. Thus, we have~$\lim_{k\to\infty}\|\mathbb E(r(k+1) r^T(k+1))\|=0$,
which, along with the facts that $\mathbb E \|r(k)\|^2=\mathbb E[\mathrm{Tr}(r(k)r^T(k))]=\mathrm{Tr}[\mathbb E(r(k)r^T(k))]$ and
$r(k)$ is equivalent to~$e(k)$, gives $\lim_{k\to\infty}\mathbb E\|e(k)\|^2=0$. The proof is completed.
\end{proof}

\vskip 0.2cm

\begin{proof}[\textbf{Proof of Corollary~\ref{tuiasdwqrqacasdwl}}]
Following the lines in the proof of  Lemma \ref{asa0}, it can be verified that under  $b(0)\le f_{C_1,\beta_a,\beta_H,N,d}(\psi_2)$, Assumption~\textbf{A2.b} and Condition \textbf{C1.a},
$F(k)$ is invertible and  $\|G(k)\|\le \psi_2$ \rm{a.s.},~$\forall~k\ge 0$.

Noting that $\mathcal F(mh-1)\subseteq\mathcal F(k-1),k\ge mh$,
by    the properties of the conditional expectation, we have
\bnaa\label{a3esda2easd}
&&~~~\mathbb E[\overline A(k,q)[[\Phi_F(k-1,k-q)]^{-1}-I_{Nn}]|\mathcal F(mh-1)]\cr
&&=\mathbb E[\mathbb E[\overline A(k,q)[[\Phi_F(k-1,k-q)]^{-1}-I_{Nn}]|\mathcal F(k-1)]|\mathcal F(mh-1)]\cr
&&=\mathbb E[\mathbb E[\overline A(k,q)|\mathcal F(k-1)][[\Phi_F(k-1,k-q)]^{-1}-I_{Nn}]|\mathcal F(mh-1)].
\enaa
Since  $\{ \langle\mathcal H(k),\mathcal A_{\mathcal G(k)},\lambda_{ji}(k),j,i\in\mathcal V \rangle, k\ge0\}$ is  an independent process, by Assumption~\textbf{A1.a}, we know that $\overline A(k,q)$ is independent of $\mathcal F(k-1),q=0,...,d$. Then, by (\ref{a3esda2easd}), we have
\bnaa\label{a3easiidasaasd}
&&~~~\mathbb E[\overline A(k,q)[[\Phi_F(k-1,k-q)]^{-1}-I_{Nn}]|\mathcal F(mh-1)]\cr
&&=\mathbb E[\mathbb E[\overline A(k,q)][[\Phi_F(k-1,k-q)]^{-1}-I_{Nn}]|\mathcal F(mh-1)]\cr
&&=\mathbb E[\overline A(k,q)]\mathbb E[[[\Phi_F(k-1,k-q)]^{-1}-I_{Nn}]|\mathcal F(mh-1)],\cr
&&~~~~~~~~~~~~k=mh,...,(m+1)h-1,q=0,...,d.
\enaa

Let $\overline{G}_q(k)=I_{Nn}-\Phi_F(k-1,k-q),q=0,...,d$. Then, $\Phi_F(k-1,k-q)=I_{Nn}-\overline{G}_q(k)$. Noting that $\|{G}(k)\|\leq\psi_2<2^\frac{1}{d}-1,$ by  the  binomial expansion,
we have $\|\overline{G}_q(k)\|=\|I_{Nn}-(I_{Nn}-G(k-1)\cdots (I_{Nn}-G(k-q)\|\leq[(1+\psi_2)^q-1]<1$.
Hence, $[\Phi_F(k-1,k-q)]^{-1}=(I_{Nn}-\overline{G}_q(k))^{-1}=\sum_{i=0}^\infty \overline{G}_q^i(k).$ It follows that $[\Phi_F(k-1,k-q)]^{-1}-I_{Nn}=\sum_{i=1}^\infty \overline{G}_q^i(k)$.
Therefore,
\bnaa\label{926hao}
\|[\Phi_F(k-1,k-q)]^{-1}-I_{Nn}\|&\leq&\Big\|\sum_{i=1}^\infty \overline{G}_q^i(k)\Big\|\le \sum_{i=1}^\infty [(1+\psi_2)^q-1]^i\cr
&=&\frac{(1+\psi_2)^q-1}{2-(1+\psi_2)^q}, q=0,...,d~\mathrm{a.s.}
\enaa

Noting that  for any symmetric matrix~$B\in\mathbb R^{n\times n}$,~$B\ge \lambda_{\min}(B)I_n,~B\le \|B\|I_n$, and for any  matrix~$B\in\mathbb R^{n\times n}$,   $\|B\|=\|B^T\|,$ by the definition of $\overline{\Lambda}_m^h$, we have
\bnaa\label{asijdda9a90wks}
&&\sum_{k=mh}^{(m+1)h-1}\Bigg({b(k)}\mathbb E[\widehat{\mathcal L}_{\mathcal G(k)}]\otimes I_n+a(k)\mathbb E[{\mathcal H}^T(k){\mathcal H}(k)]\cr
&&-\frac{b(k)}{2}\sum_{q=0}^d
\mathbb E[\overline A(k,q)]\mathbb E[[[\Phi_F(k-1,k-q)]^{-1}-I_{Nn}]|\mathcal F(mh-1)]\cr
&&-\frac{b(k)}{2}\sum_{q=0}^d\mathbb E[[[\Phi_F(k-1,k-q)]^{-1}-I_{Nn}]|\mathcal F(mh-1)]^T\mathbb E[\overline A^T(k,q)]\Bigg)\cr
&&\ge\overline{\Lambda}_m^hI_{Nn}-\Bigg\|\sum_{k=mh}^{(m+1)h-1}{b(k)}\sum_{q=0}^d
\mathbb E[\overline A(k,q)]\mathbb E[[\Phi_F(k-1,k-q)]^{-1}-I_{Nn}|\mathcal F(mh-1)]\Bigg\|I_{Nn},\cr
&&
\enaa
By the above, (\ref{a3easiidasaasd}), (\ref{926hao}) and the definition of  ${\widetilde\Lambda_m^h}$, we have
\bann\label{lamsadasd}
{\widetilde\Lambda_m^h}
&\ge&\overline{\Lambda}_m^h-\Bigg\|\sum_{k=mh}^{(m+1)h-1}{b(k)}\sum_{q=0}^d
\mathbb E[\overline A(k,q)]\mathbb E[[\Phi_F(k-1,k-q)]^{-1}-I_{Nn}|\mathcal F(mh-1)]\Bigg\|\cr
&\ge&\overline{\Lambda}_m^h-\sum_{k=mh}^{(m+1)h-1}{b(k)}\sum_{q=0}^d
\|\mathbb E[\overline A(k,q)]\|\frac{(1+\psi_2)^q-1}{2-(1+\psi_2)^q}\ge c(m)
\eann
where the last inequality follows by the condition (\ref{926hao123}). Hence, ${\widetilde\Lambda_m^h}\ge c(m)$.  By Theorem \ref{e4t5634332} and the conditions of the corollary, the proof is completed.
\end{proof}

\vskip 0.2cm

\begin{proof}[\textbf{Proof of Corollary~\ref{tuilunasdasaw}}]
We first prove the first part of the corollary. Let $c(m)= \min\{a((m+1)h),b((m+1)h)\}.$  Since $\{\mathcal G(k),k\ge0\}\in\Gamma_1$,  we know that $\mathbb E[{\widehat{\mathcal L}}_{\mathcal G(k)}|\mathcal F(mh-1]$ is positive semi-definite, $k\ge mh$. Then, by the definitions of $\overline{\Lambda}_m^h$ and ${{\Lambda_m^h}}$, we have
\bnaa\label{saidh982q3riwrf}
\overline{\Lambda}_m^h\ge c(m){{\Lambda_m^h}}.
\enaa
Then, noting that $c(m)\ge\min\{1,1/C_1\}a((m+1)h)$, by the definitions of $C_2$ and $C_3$, we have
\bnaa\label{a9jjiiiiiiis}
b(mh)\le C_2a(mh)\le C_2 (C_3)^ha((m+1)h)\le  C_2 (C_3)^h\max\{1,C_1\}c(m).
\enaa
By the definitions of ${\widetilde\Lambda_m^h}$ and  $\Sigma_m^h$,  (\ref{saidh982q3riwrf})  and (\ref{a9jjiiiiiiis}), similar to (\ref{asijdda9a90wks}),
 we have
\bann\label{sadasdqq3435af}
&&~~~{\widetilde\Lambda_m^h}\cr
&&\ge\overline{\Lambda}_m^h-\sum_{k=mh}^{(m+1)h-1}{b(k)}\Bigg(\sum_{q=0}^d
\|\mathbb E[\overline A(k,q)([\Phi_F(k-1,k-q)]^{-1}-I_{Nn})|\mathcal F(mh-1)]\|\Bigg)\cr
&&\ge\overline{\Lambda}_m^h-b(mh)\sum_{k=mh}^{(m+1)h-1}\Bigg(\sum_{q=0}^d
\|\mathbb E[\overline A(k,q)([\Phi_F(k-1,k-q)]^{-1}-I_{Nn})|\mathcal F(mh-1)]\|\Bigg)\cr
&&\ge c(m){{\Lambda_m^h}}-c(m)\Sigma_{m}^h\ge c(m)\theta~\mathrm{a.s.}
\eann
where  $\theta>0$  by the condition (\ref{ssaoidjaoi90e0}). By Conditions \textbf{C1.a} and  \textbf{C1.b}, similarly to (\ref{aspifjpazdf09})-(\ref{aspifjpazdf093}),  it follows that  $\sum_{m=0}^\infty c(m)=\infty$ and $b^2(mh)=o(c(m))$. Then, the algorithm (\ref{asapp}) converges in mean square by Theorem \ref{e4t5634332}.

We next prove the second part of the corollary.
Since  $\{ \langle\mathcal H(k),\mathcal A_{\mathcal G(k)},\lambda_{ji}(k),j,i\in\mathcal V \rangle, k\ge0\}$ is  an independent process, by  (\ref{a3easiidasaasd}) and (\ref{926hao}), we have
\bann\label{a3easadwe23423}
&&~~~\|\mathbb E[\overline A(k,q)[[\Phi_F(k-1,k-q)]^{-1}-I_{Nn}]|\mathcal F(mh-1)]\|\cr
&&=\|\mathbb E[\overline A(k,q)]\mathbb E[[[\Phi_F(k-1,k-q)]^{-1}-I_{Nn}]|\mathcal F(mh-1)]\|\cr
&&\le \|\mathbb E[\overline A(k,q)]\|\frac{(1+\psi_2)^q-1}{2-(1+\psi_2)^q},q=0,...,d.
\eann
 Noting the definition of $\Sigma_m^h$, we then have
\bann
\Sigma_m^h\le C_2 (C_3)^h\max\{1,C_1\} \sup_{m\ge0}\sum_{k=mh}^{(m+1)h-1}\Bigg(\sum_{q=0}^d \|\mathbb E[\overline A(k,q)]\|\frac{(1+\psi_2)^q-1}{2-(1+\psi_2)^q}\Bigg).
\eann
By the above and  the condition (\ref{asdasdqwra}), we know that $
\inf_{m\ge0}(\Lambda_m^h-\Sigma_m^h)\ge\theta$
where
$$\theta\triangleq\inf_{m\ge0}\Lambda_m^h-C_2 (C_3)^h\max\{1,C_1\}\sup_{m\ge0}\sum_{k=mh}^{(m+1)h-1}\Bigg(\sum_{q=0}^d \|\mathbb E[\overline A(k,q)]\|\frac{(1+\psi_2)^q-1}{2-(1+\psi_2)^q}\Bigg)>0.$$
Then, the proof is completed.
\end{proof}

\vskip 0.2cm

\begin{proof}[\textbf{Proof of Corollary~\ref{tuilunl}}]
 Following the lines of  the proof of Lemma \ref{asa0}, it can be verified that by  $b(0)\le f_{C_1,\beta_a,\beta_H,N,d}(\psi_3)$, Assumption~\textbf{A2.b} and Condition \textbf{C1.a},
$F(k)$ is invertible~\rm{a.s.} and $\|G(k)\|\le \psi_3$~\rm{a.s.},~$\forall~k\ge 0$.

Let $c(m)=\min\{a((m+1)h),b((m+1)h)\}.$ Recalling the definition of $\Sigma_m^h$  in Corollary \ref{tuilunasdasaw}, by (\ref{saidh982q3riwrf}) and (\ref{a9jjiiiiiiis}),
 we have
\bnaa\label{sadasoffujaf}
{\widetilde\Lambda_m^h}&\ge&{\overline\Lambda_m^h}-\sum_{k=mh}^{(m+1)h-1}{b(k)}\Bigg(\sum_{q=0}^d
\|\mathbb E[\overline A(k,q)([\Phi_F(k-1,k-q)]^{-1}-I_{Nn})|\mathcal F(mh-1)]\|\Bigg)\cr
&\ge& c(m)({{\Lambda_m^h}}-\Sigma_m^h)\ge c(m)(\theta-\Sigma_m^h),
\enaa
where   the last inequality follows by $\inf_{m\ge0}{{\Lambda_m^h}}\ge \theta~\rm{a.s.}$
We next prove that $\theta-\Sigma_m^h$ has a  positive lower bound under the conditions of the corollary.

By the definition of $\psi_3$, similar to (\ref{926hao}), we have
\bann\label{sahsaidjqosmd}
\|[\Phi_F(k-1,k-q)]^{-1}-I_{Nn}\|&\leq&\frac{(1+\psi_3)^q-1}{2-(1+\psi_3)^q}, q=0,...,d~\mathrm{a.s.}
\eann
By the above, we have
\bann
\frac{\Sigma_m^h}{C_2 (C_3)^h\max\{1,C_1\}}&=&\sum_{k=mh}^{(m+1)h-1}\sum_{q=0}^d
\|\mathbb E[\overline A(k,q)([\Phi_F(k-1,k-q)]^{-1}-I_{Nn})|\mathcal F(mh-1)]\|\cr
&\le&\sum_{k=mh}^{(m+1)h-1}\sum_{q=0}^d
\mathbb E[\|\overline A(k,q)\|\|[\Phi_F(k-1,k-q)]^{-1}-I_{Nn}\||\mathcal F(mh-1)]\cr
&\le&\sum_{k=mh}^{(m+1)h-1}\sum_{q=0}^d
\mathbb E[\|\overline A(k,q)\||\mathcal F(mh-1)]\frac{(1+\psi_3)^q-1}{2-(1+\psi_3)^q}\cr
&\le&\frac{(1+\psi_3)^d-1}{2-(1+\psi_3)^d}\sum_{k=mh}^{(m+1)h-1}\Bigg(\sum_{q=0}^d
\mathbb E[\|\overline A(k,q)\||\mathcal F(mh-1)]\Bigg)\cr
&\le&{N}\beta_adh\frac{(1+\psi_3)^d-1}{2-(1+\psi_3)^d}.
\eann
This together with $\psi_3<\Bigg(1+\frac{\theta}{\theta+{N}C_2 (C_3)^h\max\{1,C_1\}\beta_adh}\Bigg)^{\frac{1}{d}}-1$ gives
$$\theta-\Sigma_m^h\ge\theta-{N}C_2 (C_3)^h\max\{1,C_1\}\beta_adh\frac{(1+\psi_3)^d-1}{2-(1+\psi_3)^d}>0.$$
Then, by (\ref{sadasoffujaf}), we have ${\widetilde\Lambda_m^h}\ge c'(m)$, $m\ge0$,
where $$c'(m)=c(m)\Bigg[\theta-{N}C_2 (C_3)^h\max\{1,C_1\}\beta_adh\frac{(1+\psi_3)^d-1}{2-(1+\psi_3)^d}\Bigg].$$
Similarly to (\ref{aspifjpazdf09})-(\ref{aspifjpazdf093}), by Conditions \textbf{C1.a} and \textbf{C1.b}  it is known that $\sum_{m=0}^\infty c'(m) =\infty$ and $b^2(mh)=o(c'(m))$.  By Theorem \ref{e4t5634332}, we get the conclusion of the corollary.
\end{proof}

 \end{appendices}
\renewcommand{\appendixname}{Appendix~D: The deterministic observation matrices in the simulation}
\appendix
$H_1'=[\widetilde H_{1},\mathbf{0}_{5\times9}],H_2'=[\widetilde H_{2},\mathbf{0}_{7\times5}],H_3'=[\mathbf{0}_{6\times4},\widetilde H_3],H_4'=[\mathbf{0}_{4\times7},\widetilde H_4]$, where
\bann
&\widetilde H_{1}=\left(
         \begin{array}{cccc}
           -1 & 0 &0 & 0 \\
           0& 0 & 0& -1 \\
           1 & 0& 0 & -1\\
           -1 & 0& 0&-1\\
           -1 & 0& -1&3\\
         \end{array}
       \right),
\widetilde H_2=\left(
             \begin{array}{cccccccc}
               0 & 0& 0& 0& 0& -1& 1&0\\
               0 &0 &-1& 0 & 0 & 1& 0 &0\\
               0 & 1 & -1 & 0 & 0 & 0 & 0&0 \\
                0 & 1 & -1 & 0 & 0 & 0 & 0&0 \\
                0 & 0& 1 & 0 & 0 & 0 & 0&-1 \\
                0 & 0& 1 & 0 & 0 & 1 & 0&-1 \\
                  0 & 0 & 1 & -1 & 0 & 0 & 0&0 \\
             \end{array}
           \right)\cr
&\widetilde H_3=\left(
             \begin{array}{ccccccccc}
               1 & 0 & 0 & 0& 0 & 0 & -1& 0 &0\\
               1 & 0 &0 & 0 & 0 & 0 & 0 & -1 &0\\
               0 &0 & 0 & 0 & 0& 0 & 1& -1 &0\\
               -1 & 0 &0 & 0 & 0 & 0 & 2 & 1&0 \\
               -1 & 0 &0 & 0 & 0 & 0 & -1& 3&-1 \\
               0 & 0 &0 & 0 & 0 & 0 & 0& 1&-1 \\
             \end{array}
           \right),
\widetilde H_4=\left(
              \begin{array}{cccccc}
                1 &-1& 0& 0 & 0 & 0 \\
                1 & 0 & 0 & 0& 0 & -1 \\
                -1 & 0 & 0 & 0& -1 & 2 \\
                0 & 1 & -1 & 0& 0 & 0\\
              \end{array}
            \right).
\eann

\setlength{\baselineskip}{12pt}\noindent

\end{document}